\declaretheorem[name=Lemma]{lem}
\declaretheorem[name=Example, style=definition]{example}
\definecolor{orange}{rgb}{0.8, 0.3,0}
\definecolor{blue}{rgb}{0, 0.3,0.8}
\definecolor{green}{rgb}{0, 0.6,0.3}
\definecolor{red}{rgb}{0.8, 0,0.1}
\definecolor{purple}{rgb}{0.6, 0,0.8}
\newcommand{\comment}[2]{\ifnum\Comments=1\textcolor{#1}{#2}\fi}
\newcommand{\rec}{\rho}
\newcommand{\pof}{\pi_{U|I}^F}
\newcommand{\pom}{\pi_U^M}
\newcommand{\UF}{U_{\min{}}} %
\newcommand{\IF}{I_{\min{}}} %
\newcommand{\UFnash}{U_{\textnormal{NW}}}
\newcommand{\IFnash}{I_{\textnormal{NW}}}
\newcommand{\UFskm}{U_{k-\textnormal{min}}}
\newcommand{\IFskm}{I_{k-\textnormal{min}}}
\newcommand{\UU}{U}
\newcommand{\IU}{I}
\newcommand{\valuevec}{v}
\newcommand{\valuemx}{w}
\newcommand{\rev}[1]{\text{rev}(#1)}
\title{User-item fairness tradeoffs in recommendations}
\author[1]{Sophie Greenwood}
\author[2]{Sudalakshmee Chiniah}
\author[2]{Nikhil Garg}
\affil[1]{Computer Science, Cornell Tech}
\affil[2]{Operations Research and Information Engineering, Cornell Tech}
\date{}
\author{%
  Sophie Greenwood\\
  Computer Science\\
  Cornell Tech\\
  \And
  Sudalakshmee Chiniah\\
  Operations Research and Information Engineering\\
  Cornell Tech\\
  \And
  Nikhil Garg\\
    Operations Research and Information Engineering\\
  Cornell Tech
}
\begin{document}

\maketitle

\begin{abstract}
In the basic recommendation paradigm, the most (predicted) relevant item is recommended to each user. This may result in some items receiving lower exposure than they ``should''; to counter this, several algorithmic approaches have been developed to ensure \textit{item fairness}. These approaches necessarily degrade recommendations for some users to improve outcomes for items, leading to \textit{user fairness} concerns. In turn, a recent line of work has focused on developing algorithms for multi-sided fairness, to jointly optimize user fairness, item fairness, and overall recommendation quality. This induces the question: \textit{what is the tradeoff between these objectives, and what are the characteristics of (multi-objective) optimal solutions?} Theoretically, we develop a model of recommendations with user and item fairness objectives and characterize the solutions of fairness-constrained optimization. We identify two phenomena: (a) when user preferences are diverse, there is ``free'' item and user fairness; and (b) users whose preferences are misestimated can be \textit{especially} disadvantaged by item fairness constraints. Empirically, we prototype a recommendation system for preprints on arXiv and implement our framework, measuring the phenomena in practice and showing how these phenomena inform the \textit{design} of markets with recommendation systems-intermediated matching.
\end{abstract}

\section{Introduction}
Recommendation systems are employed throughout modern online platforms to suggest \textit{items} (media, songs, books, products, or jobs) to \textit{users} (viewers, listeners, readers, consumers, or job seekers). The platform learns user preferences and shows each user personalized recommendations. One recommendation paradigm is to simply show the user the items they most prefer. However, this approach may result in disparately poor outcomes for some items, which may not be most preferred by any user \citep{singh18}. For example, in our empirical application in prototyping a recommender system for arXiv prepints, we find that on average more than 47\% of papers have less than a $0.0001\%$ probability of being recommended to any user, even when the number of users and items are the same. Thus, many algorithmic techniques have been proposed to improve item fairness in recommendation \citep{mehrotra18, pmlr-v139-wang21b, xu23, barre24}. However, by not solely optimizing for user engagement, these techniques impose a cost both to overall recommendation quality \citep{barre24} and especially for some individual users more than others. %

Accordingly, algorithms have recently been introduced to address the problem of \textit{two-sided fairness} (or \textit{multi-sided fairness}), in which the platform aims to balance user fairness, item fairness, and overall recommendation quality \citep{burke18, wang2021, chen23}. These algorithms formalize the desired balance in terms of an optimization problem -- for example, maximizing the difference between overall recommendation quality and unfairness penalties \citep{burke18}, or the overall recommendation quality subject to fairness constraints \citep{chen23}. The relative importance of user and item fairness is described by the relative strength of the respective unfairness penalties or slack in the fairness constraints. 

However, an open question is, \textit{what are the implications of such algorithms on the recommendations, i.e., what do multi-sided constraints do, and what is the price of (multi-sided) fairness?} More specifically, \textit{are there settings in which we can simultaneously maximize all objectives, ``for free?,'' as opposed to there being large tradeoffs as commonly emphasized?} \textit{Are some users or items -- for example, those new to the platform -- more affected than others?} \textit{How do the answers to the above questions depend on the context and, in real-world settings, do ``fairness'' constraints substantially affect recommendation characteristics?} Such real-world considerations are essential for platform designers to understand. In fact, a recent survey and critique of the fair ranking literature advocated for such grounded analyses to understand algorithmic implications and tradeoffs \citep{patro2022fair}, as opposed to black-box deployment of fairness algorithms. 

Answering such questions is challenging. Theoretically, multi-sided fairness is cast as an optimization problem, and conceptually characterizing optimization solutions is often intractable. For example, given an arbitrary utility matrix and a constraint on the exposure provided to each item, it is not tractable to calculate recommendations in closed form: the solution depends on global structure, users may not receive their most preferred items, and items may not be recommended to the users who most prefer them. Then, once phenomena are theoretically identified, empirically verifying phenomena requires specifying a recommendation setting and measuring user-item utilities as a function of their characteristics. 
Given these challenges, our contributions are as follows. %

\paragraph{Theoretical framework to characterize solutions of multi-sided fair recommendations.} We formulate a concave optimization problem in which user fairness is formalized as an objective on the minimum normalized utility provided to each user -- and item fairness constraints determine the problem's feasible region, through the solutions of another concave optimization. This formulation qualitatively captures standard multi-objective approaches for user-item fairness \cite{chen23,basu2020framework}. We show that when item fairness constraints are maximal, the solutions to this optimization problem (recommendation probabilities for each user) have a sparse structure that can be characterized as a function of the problem inputs (e.g., estimated utilities for each user-item pair, or slack given in the item fairness constraint). 

\paragraph{Conceptual insights on the price of fairness.} We use our theoretical framework to characterize user-item fairness tradeoffs. We identify two phenomena: (a) ``free fairness'' as a function of user preference diversity: if users have sufficiently diverse preferences, imposing item fairness constraints can have large benefits to individual items with little cost to users, i.e., there is a small \textit{price of fairness}. (b) ``Reinforced disparate effects'' due to preference uncertainty. Of course, users for whom the platform has poor preference estimation (e.g., ``cold start'' users on whom the platform has no data) typically receive more inaccurate recommendations; we show that this effect may be \textit{worsened} with item fairness constraints, in a worst case sense: when a user's preferences are uncertain, item-fair recommendation algorithms will recommend them the globally least preferred items -- \textit{even when attempting to maximize the minimum user utility}. 

\paragraph{Empirical measurement.} Finally, we use real data to prototype a recommendation engine for new arXiv preprints and use this system to measure the above phenomena in practice. For example, we find that more homogeneous groups of users have steeper user-item fairness tradeoffs -- as theoretically predicted, diverse user preferences decrease the price of item fairness. Furthermore, we find that the ``price of misestimation'' is high (users for whom less training data is available receive poor recommendations), but on average item fairness constraints do not increase this cost.

Putting things together, we show that the real-world effects of user-item recommendation fairness constraints heavily depend on the empirical context. In some cases, ``item fairness'' comes for free, with little cost to users. In others, deploying such an algorithm may lead to especially poor recommendations for some users, in ways that cannot be mechanically addressed by adding user fairness terms. We urge designers of fair recommendation systems in practice to develop such evaluations to measure such individual-level effects, and for researchers to further characterize the potential implications of such algorithms. Our code is available at the following repository: \url{https://github.com/vschiniah/ArXiv_Recommendation_Research}.

\section{Formal Model}
Our setup is characterized by user and item (estimated) utilities, recommendation optimization with user and item fairness desiderata, and evaluation of the effects of such desiderata. (As discussed below, some of these modeling choices are made for concreteness, and our results extend beyond these specific choices).

\paragraph{Users, items, and utilities.}
There is a finite population of $m$ users and $n$ items. Let $\valuemx_{ij} > 0$ be the utility
of recommending item $j$ to user $i$; this could represent a click-through rate or purchase probability. We suppose that user-item utilities are symmetric: each of the user $i$ and item $j$ receives utility $\valuemx_{ij}$ from being recommended item $j$.

Let $\Delta_{n-1}$ denote the simplex in $\mathbb{R}^n$. The platform's task is to choose a recommendation policy $\rec \in \Delta_{n-1}^m$. For each user $i$, the platform will recommend one item to user $i$ selected randomly according to the distribution $\rec_i$. Given a recommendation policy $\rec$, user $i$'s expected utility from using the platform is $\sum_j \rec_{ij} \valuemx_{ij}$; item $j$'s expected utility is $\sum_i \rec_{ij} \valuemx_{ij}$, where $\sum_j \rec_{ij} = 1$ for each user $i$.

\paragraph{Fairness desiderata.}
We suppose that the platform uses as a benchmark, for each user or item, the \textit{best} thing that it could do for that agent if it ignored the utilities of others. Thus, given a recommendation policy $\rho$, let $\UU_i(\rec, w)$ be user $i$'s utility from $\rec$, normalized by the utility $\max_j \valuemx_{ij}$ they would receive from being recommended their best match. Let $\IU_j(\rec, w)$ be item $j$'s utility from $\rec$, normalized by the utility $j$ receives if it is recommended to \textit{every} user,  $\sum_i \valuemx_{ij}$. The normalized utilities are:
\[\UU_i(\rec, w) = \frac{\sum_j \rec_{ij} \valuemx_{ij}}{\max_j \valuemx_{ij}}, \quad \IU_j(\rec, w) = \frac{ \sum_i \rec_{ij} \valuemx_{ij}}{ \sum_i \valuemx_{ij}}.\]
The normalizations capture that recommendations should not be affected by scaling utilities, or be distorted by a user who is not satisfied with any item or an item that is generally undesirable to users. 

Given a recommendation policy $\rec$, user fairness is quantified as the \textit{minimum} normalized user utility $\UF(\rec, w) = \min_i \UU_i(\rec, w)$, and item fairness analogously as  $\IF(\rec, w) = \min_j \IU_j(\rec, w)$.

\paragraph{Multi-objective recommendation optimization.} We suppose that the platform seeks to satisfy its fairness desiderata as follows. At the extremes, the platform could choose a maximally fair solution for one side, ignoring the other. Denote the optimal user fair utility as $\UF^*(w) := \max_\rec \UF(\rec, w)$ (achieved by giving each user their favorite item deterministically), and the optimal item fair utility as $\IF^*(w) := \max_\rec \IF(\rec, w)$. %

Finally, we cast the two-sided fair optimization -- for the optimal $\gamma$-constrained user fair solution -- as \begin{align}
        \UF^*(\gamma, w) = \max_\rec & \quad \UF(\rec, w) \label{eqn:originalconcaveproblem}\\
        \textrm{subject to} & \quad \IF(\rec, w) \geq \gamma \IF^*(w), \nonumber
    \end{align} 
i.e., we maximize the minimum normalized user utility, subject to the minimum normalized item utility being at least a fraction $\gamma$ of the optimal item fair solution.

\paragraph{Research questions: price of fairness and misestimation.}

We can now define the price of item fairness on user fairness  $\pof$ (``price of fairness'') as the decrease in user fairness with maximal item fairness constraints: \[\pof(\valuemx) := \frac{\UF^*(\valuemx) - \UF^*(\gamma = 1, \valuemx)}{\UF^*(\valuemx)}.\] 

We ask how $\pof$ changes with the utility matrix $w$. We note that while this question (in terms of solutions to Problem \eqref{eqn:originalconcaveproblem}) can be simply stated, as we detail in \Cref{sec:technical}, finding a closed form expression for $\UF^*(1, w)$ -- the optimal minimum normalized user utility given item fairness constraints -- in terms of $w$ is theoretically challenging.

Similarly, we investigate the price of misestimation. Let $\hat w_{ij}$ denote the platform's \textit{estimate} of the utility of recommending $i$ to $j$; let $\hat \rec(\gamma)$ be a policy that solves the optimization problem above with misestimated utilities, that is, $\hat\rec(\gamma)$ attains $\UF^*(\gamma, \hat w)$.\footnote{Multiple policies $\hat \rec$ may optimize $\UF^*(\gamma, \hat w)$, and different policies $\hat \rec$ may have different values of $\UF(\hat\rec, w)$. In our experimental results, we use the policy $\hat \rec$ found by the convex optimization solver.} Then we define the price of misestimation on user utility (``price of misestimation'') $\pom$ as \[\pom(\gamma, w, \hat w) := \frac{\UF^*(\gamma,w) - \UF(\hat\rec(\gamma),w)}{\UF^*(\gamma,w)},\] that is, the decrease in true user fairness due to optimizing with estimated utilities. In particular, in Section \ref{sec:misest} we will examine whether item fairness exacerbates the price of misestimation by comparing $\pom(0, w, \hat w)$ and $\pom(1, w, \hat w)$, particularly in the case of cold start users.

\paragraph{Discussion.}
We note several modeling choices. First, we assume that utility $w_{ij}$ is shared -- both the item $j$ and user 
 $i$ benefit equally from a successful recommendation. This choice captures, e.g., purchase or click-through rates. It also helps us isolate effects due to fairness constraints and effects across items and users, as opposed to misaligned utilities. We expect the price of fairness to increase -- and potentially be arbitrarily high -- with such misaligned utilities. We further justify and relax the shared utility assumption in Appendix \ref{app:extensions}. Second, we quantify fairness through the minimum normalized utility. Normalization is standard in related algorithmic work, such as \cite{wang2021}, and avoids solutions in which an item that provides utility $\epsilon$ to every user needs to be recommended to every user to equalize utilities. We use \textit{individual} \textit{egalitarian} fairness (minimum utility over individuals) instead of group fairness to capture settings in which group identity is not available, and systems in which individual-level disparities may be widespread; egalitarian fairness is widespread in algorithmic fairness \cite{rawls1971maxmin, asadpour2007maxmin, garg2010maxmin, stelmakh2019maxmin}. In Appendix \ref{app:extensions} we show that our empirical findings extend to other measures of individual fairness. Finally, in the user-item fairness tradeoff problem (Problem \ref{eqn:originalconcaveproblem}) we used an item fairness constraint rather than adding an item fairness term to the objective; these two approaches can be thought of conceptually as dual to one another, and so we expect similar properties to hold in either formulation. %

\section{Theoretical framework}\label{sec:technical}%

To determine how the price of fairness $\pof$ depends on utility matrix $w$, we need to compute $\UF^*(1, w)$ and $\UF^*(w)$. It turns out that $\UF^*(w)$ is easy to describe: without item fairness constraints, the optimal recommendation policy deterministically recommends each user their most preferred item. Each user attains the maximum possible normalized utility of 1, and $\UF^*(w) = 1$. 
    
    However, with an item fairness constraint, we can no longer select each user's recommendation policies independently. Thus characterizing $\UF^*(1, w)$ is much more complicated. Recall that $\UF^*(1, w)$ is the minimum normalized user utility of the optimal user-fair recommendation subject to maximal item fairness constraints. Plugging into Problem \eqref{eqn:originalconcaveproblem} and expanding the definitions,
    we have \begin{align}
        \UF^*(1,w) = \max_{\rec \in \Delta_{n-1}^m} & \quad \min_i \frac{\sum_j w_{ij}\rec_{ij}}{\max_j w_{ij}} \label{prob:expanded}\\
        \textrm{subject to} & \quad \rho \in \arg\max_{\phi \in \Delta_{n-1}^m} \min_j \frac{\sum_i w_{ij}\phi_{ij}}{\sum_i w_{ij}}. \nonumber
    \end{align}
    Thus we need to find closed-form solutions in terms of the utilities $w$ to a non-linear concave program, in which both the objective and the constraint depend on $w$, and indeed even determining the constraint requires solving another non-linear concave program.

    In \Cref{prop:concavetolp}, we develop a framework to solve this problem, which we later apply to show our main results Theorems \ref{thm:pofdecreasing} and \ref{thm:misestimation}. %

    \begin{restatable}{prop}{propconcavetolp}\label{prop:concavetolp}
Suppose that for a set  of recommendation policies $\mathcal{S} \subseteq \Delta_{n-1}^m$,

    \begin{enumerate}[label=(\roman*)]
        \item $\mathcal{S}$ can be described by a finite set of linear constraints
        \item There exists an optimal solution $\rec^*$ to Problem \eqref{prob:expanded} such that $\rec^* \in \mathcal{S}$
        \item $\rec^*$ is the unique feasible solution to Problem \eqref{prob:expanded} in $\mathcal{S}$
    \end{enumerate} 
    Then, finding an optimal solution $\rec^*$ to Problem \eqref{prob:expanded} can be reduced to solving a linear program $\mathcal{L}$.
    
    \end{restatable}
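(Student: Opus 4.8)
The plan is to make both the feasible region and the objective of Problem~\eqref{prob:expanded} explicitly linear, so that recovering $\rec^*$ reduces to solving a linear program over a polytope that the hypotheses force to be a single point. The substantive step is to rewrite the constraint of Problem~\eqref{prob:expanded} -- the implicit requirement that $\rec$ lie in the argmax of the item-fairness problem $\max_{\phi \in \Delta_{n-1}^m} \min_j \frac{\sum_i w_{ij}\phi_{ij}}{\sum_i w_{ij}}$ -- as explicit linear inequalities. I would observe that this inner max-min is itself a linear program: introduce a scalar $s$, maximize $s$ subject to $\frac{\sum_i w_{ij}\phi_{ij}}{\sum_i w_{ij}} \ge s$ for every item $j$ and $\phi \in \Delta_{n-1}^m$, noting that each denominator $\sum_i w_{ij}$ is a fixed positive constant. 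Let $\IF^*(w)$ be its optimal value, computed by solving this LP. Because $\IF^*(w)$ is the largest attainable value of the minimum, a policy $\rec$ is an optimizer if and only if $\frac{\sum_i w_{ij}\rec_{ij}}{\sum_i w_{ij}} \ge \IF^*(w)$ for all $j$; hence the feasible region of Problem~\eqref{prob:expanded} is exactly the polytope $\{\rec \in \Delta_{n-1}^m : \IF(\rec,w) \ge \IF^*(w)\}$, cut out by finitely many linear constraints.

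With the feasible region now a polytope, I would combine it with the hypotheses. By~(i), $\mathcal{S}$ is also defined by finitely many linear constraints, so the set $\mathcal{F} := \mathcal{S} \cap \{\rec \in \Delta_{n-1}^m : \IF(\rec,w) \ge \IF^*(w)\}$ -- the feasible solutions of Problem~\eqref{prob:expanded} lying in $\mathcal{S}$ -- is again a polytope. By~(ii), $\rec^*$ is optimal and in particular feasible, and $\rec^* \in \mathcal{S}$, so $\rec^* \in \mathcal{F}$; by~(iii), $\rec^*$ is the only feasible point of Problem~\eqref{prob:expanded} in $\mathcal{S}$, so $\mathcal{F} = \{\rec^*\}$. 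I would then define the linear program $\mathcal{L}$ to maximize any linear objective -- for concreteness the linearized user-fairness objective $\max t$ subject to $\frac{\sum_j w_{ij}\rec_{ij}}{\max_j w_{ij}} \ge t$ for all $i$ -- over $\mathcal{F}$. Since $\mathcal{F}$ is a single point, $\mathcal{L}$ is a genuine linear program whose unique feasible, and hence optimal, solution is $\rec^*$, giving the desired reduction.

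The main obstacle is the first step: the constraint of Problem~\eqref{prob:expanded} is stated as the implicit condition ``$\rec$ solves another optimization,'' which is not obviously a linear constraint. The key observation that resolves this is that the inner item-fairness problem is itself an LP, so its optimal value $\IF^*(w)$ is a well-defined scalar and its optimal set coincides with the polytope on which every item's normalized utility attains $\IF^*(w)$. Everything downstream is then bookkeeping: hypotheses~(i)--(iii) collapse $\mathcal{F}$ to the singleton $\{\rec^*\}$, so the choice of linear objective for $\mathcal{L}$ is immaterial. I would finally note that, strictly speaking, the reduction is to a pair of linear programs solved in sequence -- one to compute the constant $\IF^*(w)$ and one to solve $\mathcal{L}$ -- which does not affect the conclusion.
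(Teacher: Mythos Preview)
Your argument is correct, but it takes a somewhat different route from the paper. You characterize the feasible set of Problem~\eqref{prob:expanded} as the polytope $\{\rec \in \Delta_{n-1}^m : \IU_j(\rec,w) \ge \IF^*(w)\ \forall j\}$, where $\IF^*(w)$ is first computed by a separate LP; you then intersect with $\mathcal{S}$ and use (ii)--(iii) to collapse this to a singleton. The paper instead proves a structural lemma (its Lemma~1): at any maximizer of the inner item-fairness problem, \emph{all} item normalized utilities are equal. This lets the paper write the feasible set via the \emph{equality} constraints $\IU_j(\rec) = \lambda$ with $\lambda$ a decision variable, giving a single LP $\mathcal{L} = \max\{\lambda : \rec \in \Delta_{n-1}^m \cap \mathcal{S},\ \IU_j(\rec)=\lambda\ \forall j\}$ whose optimal $\rec$ is $\rec^*$ and whose optimal $\lambda$ is $\IF^*(w)$ simultaneously. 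Your route is arguably more elementary for this proposition in isolation, since it avoids proving the all-items-equal fact. The paper's route, however, buys two things: it yields a genuine single LP (not the pair you acknowledge), and---more importantly for the paper's later development---the equality-constraint formulation of $\mathcal{L}$ with exactly $n+K$ equalities is precisely what drives the constraint-counting sparsity argument in Proposition~2. With your inequality formulation that counting would not go through as stated.
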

    With \Cref{prop:concavetolp}, the key technical challenges become: (a) given utility matrix $w$, finding $\mathcal{S}$ satisfying the above conditions; (b) given $\mathcal{S}$ and $w$, finding a closed-form expression for $\rec^*$; (c) given a closed form for $\rec^*$ in terms of $w$, reasoning about the properties of item fair solution $\UF^*(1,w)$.

    To do this, for our main results, we construct $\mathcal{S}$ as a set of recommendations with a particular \textit{symmetric} structure. Suppose users come in $K$ types, where a user of type $k$ shares a utility vector. Then, consider $\mathcal{S}_{\text{symm}} = \{\rec: \rec_i = \rec_{i'} \text{ if } w_i = w_{i'}\} \subseteq \Delta_{n-1}^m $, the set of policies where users of the same type are given the same recommendation probabilities. Note that, in the extreme case, each user is their own type. Furthermore, let $\rho_k$ denote the recommendation policy for type $k$, for $\rho \in \mathcal{S}_{\text{symm}}$.

    \begin{restatable}{prop}{propgeneralsymmandsparse}\label{prop:generalsymmandsparse}
    $\mathcal{S}_{\text{symm}}$ satisfies conditions (i) and (ii) in \Cref{prop:concavetolp}. Furthermore, solutions $\rho \in \mathcal{S}_{\text{symm}}$ to Problem \eqref{prob:expanded} have a sparse structure:
    \begin{itemize}
        \item If $\rho$ is a basic feasible solution to the linear program $\mathcal{L}$ in \Cref{prop:concavetolp}, then there are at most $n + K - 1$ type-item pairs $(k,j)$ such that $\rec_{kj} > 0$ (out of $nK$ possible pairs).
        \item If $\rec$ is also optimal, then there are at most $K-1$ items that are ever recommended to more than one type of user, i.e., where $\rec_{kj}, \rec_{k'j} > 0$ for $k \neq k'$.
    \end{itemize}

\end{restatable}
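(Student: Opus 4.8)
The plan is to dispatch the two regularity conditions and then the two sparsity statements, since each rests on a different idea. Condition (i) is immediate: $\mathcal{S}_{\text{symm}}$ is the product simplex $\Delta_{n-1}^m$ — itself cut out by $\rho_{ij}\ge 0$ and $\sum_j \rho_{ij}=1$ — intersected with the finitely many linear equalities $\rho_{ij}=\rho_{i'j}$ imposed for each coordinate $j$ and each pair $i,i'$ with $w_i=w_{i'}$, so it is described by finitely many linear constraints. For condition (ii) I would symmetrize an optimal solution. Let $G$ be the group of type-preserving permutations of users (those $\sigma$ with $w_{\sigma(i)}=w_i$ for all $i$), and for a policy $\rho$ let $\rho^\sigma$ assign user $i$ the row $\rho_{\sigma(i)}$. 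Since same-type users share a utility vector, both the objective $\rho\mapsto\min_i \UU_i(\rho,w)$ and the feasible region — the item-optimal policies $\arg\max_\phi \min_j \IU_j(\phi,w)$ — are $G$-invariant, and the feasible region is convex, being the maximizer set of a concave function over $\Delta_{n-1}^m$. Starting from any optimal $\rho^*$, the average $\bar\rho:=\frac{1}{|G|}\sum_{\sigma\in G}(\rho^*)^\sigma$ satisfies $\bar\rho_i=\bar\rho_{i'}$ whenever $w_i=w_{i'}$, so $\bar\rho\in\mathcal{S}_{\text{symm}}$; convexity and $G$-invariance keep it feasible; and concavity of the objective together with its $G$-invariance give $\min_i\UU_i(\bar\rho,w)\ge\min_i\UU_i(\rho^*,w)$, so $\bar\rho$ is optimal, proving (ii).

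For the first sparsity bound I would work with the reduced variables $\{\rho_{kj}\}_{k\in[K],\,j\in[n]}$ and combine the standard fact — a basic feasible solution of $\mathcal{L}$ has support no larger than the number of linearly independent active constraints — with the structural fact that the support, viewed as a bipartite graph between the $K$ type-nodes and the $n$ item-nodes, is a \emph{forest}. The forest property is what I would actually prove: a cycle in the support would admit an alternating $\pm\epsilon$ perturbation of $\rho$ that leaves every row sum (simplex constraint) and every binding item-fairness value $\IU_j(\rho,w)=\IF^*(w)$ unchanged, hence remains feasible in both directions and contradicts $\rho$ being a vertex of $\mathcal{L}$. A forest on the $K+n$ nodes has at most $n+K-1$ edges, which is exactly the claimed bound on the number of pairs with $\rho_{kj}>0$. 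This forest step is the main obstacle: unlike the simplex equalities, the item-fairness constraints are one-sided inequalities with heterogeneous weights $w_{kj}$, so a generic cycle perturbation \emph{does} change item utilities and need not stay feasible; showing that a feasible perturbation direction nonetheless exists — equivalently, exhibiting the linear dependence among the binding simplex and item-fairness constraints that supplies the $-1$ — is where restricting to item-optimal policies and the reduction to the clean LP $\mathcal{L}$ of \Cref{prop:concavetolp} are essential.

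The second bound then follows from the first by a degree-counting (handshake) argument. Because all utilities are positive, $\IF^*(w)>0$ (the uniform policy already gives every item positive normalized utility), so any feasible policy must recommend each item to at least one type — otherwise that item would have normalized utility $0$ and violate the constraint — and hence every item has degree at least one in the support graph. Writing $d_j$ for the number of types receiving item $j$, the first bound gives $\sum_j d_j\le n+K-1$, while each item shared by two or more types contributes an extra unit, so if $s$ items are shared then $\sum_j d_j\ge n+s$. Combining the two gives $n+s\le n+K-1$, i.e.\ $s\le K-1$.
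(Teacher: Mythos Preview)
Your treatment of conditions (i) and (ii), and of the second sparsity bullet, is correct and in substance matches the paper: the averaging/symmetrization for (ii) and the degree-count for the second bullet are exactly what the paper does.

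The gap is in the first sparsity bullet. Your plan is to prove that the bipartite support graph of any basic feasible solution of $\mathcal{L}$ is a \emph{forest}, and then use that a forest on $K+n$ nodes has at most $n+K-1$ edges. But the forest claim is false here. The item constraints in $\mathcal{L}$ are the equalities $\IU_j(\rho)=\lambda$, with heterogeneous coefficients $c_{kj}=m_kw_{kj}/\sum_i w_{ij}$; an alternating $\pm\epsilon$ perturbation along a cycle does not preserve them, and in general \emph{no} perturbation supported on a cycle exists --- the cycle columns can be linearly independent. A concrete counterexample: take $K=n=3$, one user per type, and utilities chosen so that $(c_{1j},c_{2j},c_{3j})$ equals $(0.5,0.3,0.2)$, $(0.3,0.5,0.2)$, $(0.3,0.3,0.4)$ for $j=1,2,3$. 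Then $\rho_{11}=\rho_{12}=\rho_{21}=\rho_{22}=\tfrac12$, $\rho_{33}=1$, all other $\rho_{kj}=0$, $\lambda=0.4$ is a basic feasible solution of $\mathcal{L}$ (the ten active constraints are linearly independent because $c_{11}c_{22}\neq c_{12}c_{21}$), and its support is a $4$-cycle plus an isolated edge --- not a forest. So the acyclicity route cannot yield the bound.

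The paper's argument is the direct dimension count you gestured at but did not carry through: after reducing to type variables, $\mathcal{L}$ has $nK+1$ variables --- the $nK$ entries $\rho_{kj}$ \emph{together with} the free variable $\lambda$ --- and $n+K$ equality constraints, so any BFS must have at least $(nK+1)-(n+K)$ of the $nK$ nonnegativity constraints tight, leaving at most $n+K-1$ entries $\rho_{kj}>0$. The ``$-1$'' you were hunting for comes from $\lambda$ occupying one basis slot, not from any linear dependence among the equality constraints or from a forest structure of the support.
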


    For our main results, we will leverage the sparsity structure in \Cref{prop:generalsymmandsparse} to show, with further restrictions on utility matrix $w$, that $\mathcal{S}_{\text{symm}}$ also satisfies $(iii)$. We will then derive closed-form expressions for that solution, and show how it changes as a function of $w$. 

    The sparsity structure in \Cref{prop:generalsymmandsparse} is also interesting in its own right. Without item fairness constraints, solutions will be highly sparse: each user will be recommended their most preferred item deterministically; each other item will have a recommendation probability of zero. Once we add item fairness constraints, solutions do not necessarily remain sparse. 
    However, \Cref{prop:generalsymmandsparse} shows that sparse solutions still arise under item fairness constraints in settings with symmetric solutions.

\section{User preference diversity and fairness tradeoffs}\label{sec:diversity}
We now use the theoretical framework developed in the above section to understand the effect of the structure of user utilities on the price of fairness. In particular, we identify ``free fairness,'' i.e., the price of fairness is low, when preferences are sufficiently diverse. 

\begin{example}

    For intuition as to why user diversity affects the price of fairness, consider the following example. Suppose we have $n = 2$ items and $m$ users; half the users have utility $\epsilon$ for the first item and $1 - \epsilon$ for the second item, and the other half has utility $1 - \epsilon$ for the first item and $\epsilon$ for the second. Then, the recommender can simply give each user their favorite item (the user optimal solution), and this solution simultaneously maximizes user and item fairness as well as total user and item utility.
    
    On the other hand, suppose all users have the same preferences:  each user has utility $1-\epsilon$ for the first item and utility $\epsilon$ for the second item, for $\epsilon > 0$ that is small. Then, \textit{any} recommendation probability given to the second item comes at a cost to users who receive that item instead of the first item; however, (normalized) item fairness would require that the second item receives $\epsilon$ as much utility as the first item. Below, we show that this results in a tradeoff even between linear \textit{normalized} user and item utilities, where we account for the fact that the second item is on average less preferred by users. 

Since all users have the same preferences for items, using \Cref{prop:generalsymmandsparse} it is sufficient to consider them receiving the same recommendation probabilities $\rec_1$ for the first item and $\rec_2 = 1-\rec_1$ for the second. Bounding minimum item utility $\IF$ by the utility of the second item, we have \[\IF \leq \IU_2(\rec) = \frac{\sum_i \rec_2  \epsilon}{m \epsilon} = \rec_2 \triangleq 1-\rec_1.\] For a given recommendation probability $\rec_1$, the minimum normalized user utility is \[\UF = (1-\epsilon)\rec_1 + \epsilon \rec_2= \epsilon + (1-2\epsilon) \rec_1 \leq \rho_1 + \epsilon.\] Rearranging, we get that \[\UF - \epsilon \leq \rho_1 \leq 1 - \IF \implies \UF + \IF \leq 1 + \epsilon.\] 
Thus the minimum normalized user utility and the minimum normalized item utility essentially follow a negative \textit{linear} relationship -- guaranteeing the second item even $\epsilon$ as much utility as the first item results in a linear cost to users. $\square$
\end{example}

We now formalize and generalize this example, when the level of heterogeneity in the population can be captured by a single parameter. Consider the following utility matrix structure. Let $\valuevec_1 > \valuevec_2 > ... > \valuevec_n > 0$. Suppose that there are two user types: a user $i$ either has utility $\valuemx_{ij} = \valuevec_j$ or $\valuemx_{ij} = \valuevec_{n-j+1}$, and say that user $i$ is of type 1 or 2 respectively. In words, the two types have opposite preferences, but the preferences can otherwise be generic and be for any number of items. Now, the direct solution by symmetry for the example no longer directly holds -- the items in the middle, not necessarily preferred by either user type, may be binding in terms of item fairness constraints.

Let $\alpha$ be the proportion of type 1 users in the population, out of a fixed population of $m$ users. Parameter $\alpha$ thus controls the population heterogeneity; if $\alpha$ is near 0 or 1, the population is highly homogeneous, dominated by users of the same type. If $\alpha = \frac12$, the population is split evenly between the two types and is highly heterogeneous. Since we parametrize $\valuemx$ by $\alpha$, we may write the price of fairness as, \[\pof(\alpha) := \frac{\UF^*(\alpha) - \UF^*(\gamma = 1, \alpha)}{\UF^*(\alpha)}.\]
Given this structure, \Cref{thm:pofdecreasing} states that the price of fairness $\pof(\alpha)$ increases in the homogeneity of the users -- heterogeneous user populations are less affected by incorporating item fairness constraints. %

\begin{restatable}{thm}{thmpofdecreasing}\label{thm:pofdecreasing}
    $\pof(\alpha)$ is decreasing in $\alpha$ for $0 < \alpha \leq 1/2$, and increasing in $\alpha$ for $1/2\leq \alpha < 1$.
\end{restatable}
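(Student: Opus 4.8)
The plan is to first strip the statement down to a monotonicity claim about $\UF^*(1,\alpha)$ alone. Since without item-fairness constraints every user can be given their favorite item deterministically, $\UF^*(\alpha) = 1$ for all $\alpha$, and therefore $\pof(\alpha) = 1 - \UF^*(1,\alpha)$. Hence the theorem is \emph{equivalent} to showing that $\UF^*(1,\alpha)$ is increasing on $(0,1/2]$ and decreasing on $[1/2,1)$, i.e. that the price of fairness is smallest at the most heterogeneous population $\alpha = 1/2$.

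Next I would exploit a reflection symmetry of the construction. Relabeling items by $j \mapsto n+1-j$ sends the utility vector $\valuevec_j$ of a type-1 user to that of a type-2 user and vice versa, while leaving both the user-fairness objective $\min_i \UU_i$ and the item-fairness constraint $\min_j \IU_j$ in Problem \eqref{prob:expanded} invariant (both are minima over indices, and the feasible set $\Delta_{n-1}^m$ is permutation-invariant). Composing this relabeling with the swap $\alpha \leftrightarrow 1-\alpha$, so that the type proportions line up, gives an exact bijection between the instance at $\alpha$ and the instance at $1-\alpha$. Therefore $\UF^*(1,\alpha) = \UF^*(1,1-\alpha)$, so it suffices to prove that $\UF^*(1,\alpha)$ is increasing on $(0,1/2]$; the behavior on $[1/2,1)$ then follows immediately and $\alpha = 1/2$ is forced to be the extremum.

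To obtain a closed form for $\UF^*(1,\alpha)$ on $(0,1/2]$, I would invoke \Cref{prop:concavetolp} and \Cref{prop:generalsymmandsparse} with $K = 2$ types. Restricting to the symmetric policies $\mathcal{S}_{\text{symm}}$, the decision variables collapse to the two per-type distributions $(\rec_1, \rec_2)$, and the sparsity guarantee says an optimal basic feasible solution has at most $n+1$ nonzero type-item pairs and at most one item shared between the two types. I would first solve the inner concave program to pin down $\IF^*(\alpha) = \max_{\rec} \min_j \IU_j(\rec, \valuemx)$ and identify which item constraints bind --- intuitively the ``middle'' items $j \approx (n+1)/2$, which neither type strongly prefers and which receive little natural exposure. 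With the support and binding constraints fixed, the equalities $\IU_j(\rec,\valuemx) = \IF^*(\alpha)$ together with the requirement that $\min_i \UU_i$ be balanced across the two types determine $(\rec_1,\rec_2)$, and hence $\UF^*(1,\alpha)$, as an explicit function of $\alpha$ and the $\valuevec_j$'s.

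The main obstacle is precisely this structural step: determining $\IF^*(\alpha)$ and the support/binding-constraint pattern of the optimal user-fair solution, and arguing this pattern is stable (or enumerating the finitely many regimes in which it switches) as $\alpha$ ranges over $(0,1/2]$. Once the closed form is in hand, I expect the final monotonicity check to reduce to verifying $\tfrac{d}{d\alpha}\UF^*(1,\alpha) \geq 0$, which should follow from the fact that moving $\alpha$ toward $1/2$ balances the two opposing types and relaxes the tension between serving users their favorites and equalizing exposure on the middle items. To convert piecewise monotonicity into global monotonicity I would additionally verify continuity of $\UF^*(1,\alpha)$ at any regime breakpoints.
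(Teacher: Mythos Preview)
Your proposal is correct and follows essentially the same route as the paper: reduce to monotonicity of $\UF^*(1,\alpha)$ on $(0,1/2]$ via the reflection symmetry, pass to $\mathcal{S}_{\text{symm}}$ with $K=2$, extract the sparse closed form from the linear program, and verify monotonicity piecewise with a continuity argument at the regime breakpoints. One simplification you will discover upon execution is that at $\IF^*$ \emph{all} item utilities are equalized (not just the middle ones), and within $\mathcal{S}_{\text{symm}}$ this already forces a \emph{unique} feasible solution (the paper's Lemma~\ref{lem:sparse}) with threshold structure $x_j=0$ for $j>t$ and $y_j=0$ for $j<t$; hence no user-utility balancing is needed to pin down $(\rec_1,\rec_2)$, and the outer max over user fairness is trivial once the item-fair feasible set is characterized.
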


\begin{proof}[Proof sketch for Theorem \ref{thm:pofdecreasing}] We show that when in a population with two opposing types as described above, the sparsity condition in \Cref{prop:generalsymmandsparse} yields a unique solution, for which we can find a closed form and express in terms of $\alpha$. We then evaluate $\UF(\rec, \alpha)$ at this solution to find $\UF^*(1, \alpha)$, and show that this is indeed increasing. The full proof is in \Cref{sec:appproof1}.
\end{proof}

\section{Uncertainty and fairness tradeoffs}\label{sec:misest}
A basic fact -- often ignored in fair recommendation -- is that recommendations are made with (mis)estimated utilities. Platforms do not have full knowledge of user preferences, especially those new to the platform. Of course, recommendations under misestimated utilities may be poor; here, we show that adding item fairness constraints may \textit{worsen} the cost of this misestimation even further. 

Intuitively -- for a new user for whom the platform has no data -- the platform would estimate the user's preferences as the average of preferences of existing users (e.g., in a Bayesian fashion). Thus, without item fairness considerations, it would show the user generally popular items. However, the new user's preferences are generally estimated as ``{weaker}'' than the preferences of others for any given item (since it averages preferences of users who may either like or dislike any given item). Thus, with fairness constraints, the optimization is incentivized to show the user otherwise unpopular items, since all the user's estimated preferences are weaker. \citet{liu2018personalizing} for example develop an algorithm for item fairness where users with weaker preferences are explicitly leveraged in this way. 

For a given item fairness level $\gamma$, true utility matrix $w$, and estimated utility matrix $\hat w$, let $\hat\rec(\gamma)$ be a recommendation policy that solves the recommendation problem (Problem \ref{eqn:originalconcaveproblem}) with respect to the misestimated utilities, that is, $\hat\rec$ solves $\UF^*(\gamma, \hat w)$. Recall that we define the price of misestimation \[\pom(\gamma, w, \hat w) = \frac{\UF^*(\gamma,w) - \UF(\hat\rec(\gamma),w)}{\UF^*(\gamma,w)},\]
which represents the relative decrease in minimum normalized user utility as a result of misestimation. %
Item fairness \textit{worsens} the price of misestimation if $\pom(\gamma = 1, w, \hat w) > \pom(\gamma = 0, w, \hat w)$. 
\

We now formalize the above argument, building on the analysis in the previous section. As in \Cref{sec:diversity}, suppose that there are 2 types of users, with opposing preferences (i.e., with values $\valuevec_1, \valuevec_2 ..., \valuevec_n$ and $\valuevec_n, \valuevec_{n-1}, ..., \valuevec_1$, respectively) -- and the platform has correctly estimated these preferences. However, now, these two types only make up a proportion $\beta$ of the population each. 

Now, we suppose that there is a fraction $1 - \beta$ of the user population who are ``new'' users. We assume that these users are drawn from the same distribution as the remaining users, but the platform does not know their preferences. It thus constructs a prior by averaging over the known users' preferences -- (mis)estimating the users' utility for each item $j$ as $\frac{\valuevec_j + \valuevec_{n-j+1}}{2}$.

    \begin{restatable}{thm}{thmmisestimation}\label{thm:misestimation}
    If $\beta > \frac{1}{n}$ and $w$ and $\hat w$ are as described above, then fairness constraints can arbitrarily worsen the price of misestimation. 
    \begin{itemize}
        \item The price of misestimation without fairness constraints is low: for all $\{v_j\}$, there is a recommendation policy $\hat\rec$ that solves the misestimated problem $\UF^*(0, \hat w)$ so that \[\pom(0, w, \hat w) \leq \frac12.\]
        \item The price of misestimation with fairness constraints can be arbitrarily large:        
        $\forall \epsilon$, there exists $\{v_j\}$ and a recommendation policy $\hat\rec$ that solves the problem $\UF^*(1, \hat w)$ such that \[\pom(1, w, \hat w) > 1 - \epsilon.\]
    \end{itemize}
    \end{restatable}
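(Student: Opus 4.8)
The plan is to prove the two bullets separately; the first is a short direct argument, while the second requires a worst-case construction together with the framework of \Cref{prop:concavetolp}--\Cref{prop:generalsymmandsparse}.

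For the first bullet I would use that with $\gamma=0$ the item-fairness constraint is vacuous, so $\UF^*(0,w)=\UF^*(0,\hat w)=1$, each attained by giving every user an estimated-favorite item. Since $\hat w_j=\tfrac{v_j+v_{n-j+1}}{2}$ is symmetric under $j\mapsto n-j+1$, the estimated-best item comes in a pair $\{j^*,n-j^*+1\}$ with common value $\hat w_{j^*}=\max_j\hat w_j$. I would take $\hat\rec$ to give each known user their (correctly estimated) favorite and each new user the $50/50$ mixture of $j^*$ and $n-j^*+1$; this still attains estimated normalized utility $1$ for the new users, so it solves $\UF^*(0,\hat w)$. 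Evaluating true utilities, known users get normalized utility $1$, while a new user of either true type gets exactly $\tfrac{v_{j^*}+v_{n-j^*+1}}{2v_1}=\hat w_{j^*}/v_1\ge \hat w_1/v_1=\tfrac{v_1+v_n}{2v_1}>\tfrac12$. Hence $\UF(\hat\rec,w)>\tfrac12$ and $\pom(0,w,\hat w)=1-\UF(\hat\rec,w)<\tfrac12$.

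For the second bullet I would exhibit a family of instances with rapidly decaying values, e.g.\ $v_j=t^{j-1}$, and send $t\to 0$, bounding the numerator and denominator of $\pom(1,w,\hat w)$ separately. For the denominator, when the new users realize in equal numbers of the two true types the true instance is exactly the evenly split two-opposing-type setting of \Cref{sec:diversity} at $\alpha=\tfrac12$: every item $j$ can be covered by its true enthusiast type (value $\max(v_j,v_{n-j+1})$), and I would lower bound $\UF^*(1,w)$ by a constant $c(n)>0$ independent of $t$, either by exhibiting an explicit symmetric ``shared coverage'' policy or by invoking the closed form behind \Cref{thm:pofdecreasing} at $\alpha=\tfrac12$ (for $n=3$ this value is $\tfrac12+O(t)$). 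For the numerator, I claim $\hat\rec$ is forced to route the new users onto the low-value middle items, whose true value is $O(t)$, so $\UF(\hat\rec,w)=O(t)\to 0$; combining gives $\pom(1,w,\hat w)=1-O(t)/c(n)>1-\epsilon$ once $t$ is small.

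The main obstacle is establishing this last claim: that $\hat\rec$, which must maximize the estimated minimum user utility \emph{subject to} optimal item fairness, nonetheless lands the new users on items they truly value only $O(t)$. I would handle it with \Cref{prop:concavetolp} and \Cref{prop:generalsymmandsparse}: the estimated instance has three types (the two known types and the symmetric ``new'' type), so $\mathcal{S}_{\text{symm}}$ applies and reduces the two-level estimated program to a linear program in the type-level probabilities $(\rho_{\text{type-1}},\rho_{\text{type-2}},\rho_{\text{new}})$. Writing out the item-fairness constraints shows that covering the high-value extreme items at the optimal level $\IF^*(\hat w)$ consumes essentially all of the known types' mass -- this is exactly where $\beta>\tfrac1n$ enters, guaranteeing the known types can saturate those items -- leaving the new users as the only mass available to meet item fairness on the remaining low-value items. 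Solving the resulting small LP (in the $n=3$ case the new users' weight on the extreme items is forced to $0$, placing them entirely on the middle item of true value $v_2=t$) shows the new users are concentrated on items of true value $O(t)$. A subsidiary point handled by the same LP is verifying that this assignment is genuinely user-fairness-optimal for the estimated objective, and not merely item-fairness feasible, which is precisely what the LP's optimum certifies.
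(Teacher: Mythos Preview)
Your first bullet is correct and matches the paper exactly: the $50/50$ mixture of the symmetric argmax pair $\{j^*,n-j^*+1\}$ is precisely the $\hat\rho$ the paper exhibits, giving each misestimated user true normalized utility at least $\tfrac{v_1+v_n}{2v_1}>\tfrac12$.

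For the second bullet your overall plan coincides with the paper's, but there is one missing structural ingredient and one imprecise step. The missing ingredient: applying \Cref{prop:generalsymmandsparse} with three types gives only $n+2$ nonzeros, which by itself does not force the ``pivot'' structure you need (the new users confined to the middle items) and does not give uniqueness (condition~(iii) of \Cref{prop:concavetolp}). The paper handles this by passing to a strictly smaller symmetric set $\mathcal{S}'\subset\mathcal{S}_{\text{symm}}$ with the additional \emph{column} symmetry $x_j=y_{n-j+1}$ and $z_j=z_{n-j+1}$; this folds the problem to $h=\lfloor(n+1)/2\rfloor$ items and two effective types $(x,z)$, so the basic-feasible-solution count now forces at most one shared item and a genuine pivot index $t$. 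Without this reduction your LP argument does not go through for general $n$.

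The imprecise step is the role of $\beta>\tfrac1n$. It is not that the known types' mass is ``consumed'' by the extreme items; rather, in the closed-form solution on $\mathcal{S}'$ the pivot $t=1$ would require $z_1=\tfrac1n-\tfrac{n-2}{n}\cdot\tfrac{2\beta q_1}{1-2\beta}$, and $\beta>\tfrac1n$ makes this negative for \emph{every} $\{v_j\}$, forcing $t>1$ and hence $z_1=z_n=0$. From there the misestimated user's true normalized utility is strictly below $v_2/v_1$, so you only need $v_2/v_1$ small (your geometric family works but is more than necessary). Finally, note the LP in \Cref{prop:concavetolp} maximizes \emph{item} fairness, not user fairness; it is the uniqueness of the feasible point in $\mathcal{S}'$ (condition~(iii)) that lets you conclude this point is also user-fairness optimal, not the LP's objective. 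For the denominator, the paper obtains the clean bound $\UF^*(1,w)>\tfrac1n$ at $\alpha=\tfrac12$ via the \Cref{thm:pofdecreasing} closed form, exactly as you suggest.
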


\begin{proof}[Proof sketch]
    The main task is to find the price of misestimation with fairness constraints, which requires computing $\UF^*(1, w)$ when users' values are correctly estimated, and computing $\UF^*(1, \hat w)$ when users' values are incorrectly estimated. To find $\UF^*(1, w)$, note that $w$ is a population with two opposing types of users, so we may leverage the insights of \Cref{thm:pofdecreasing}. To find $\UF^*(1,\hat w)$ we again use the framework in \Cref{sec:technical}, showing that in the setting of this theorem, we can find an optimal policy $\rec^*$ for $\UF^*(1,\hat w)$ in a set $\mathcal{S}' \subseteq \mathcal{S}_\text{symm}$ of policies with an additional \textit{column}-symmetry property. We use an analogue of the sparsity result in \Cref{prop:generalsymmandsparse} to show that there is a unique feasible solution $\hat\rec \in \mathcal{S}'$ and obtain a closed form expression for $\rec^*$, and find an upper bound for $\UF(\hat\rec, \hat w) = \UF^*(1, \hat w)$. In fact, we show that as long as $\beta > \frac{1}{n}$, under $\hat\rec$ the mis-estimated users will \textit{never} be recommended their most preferred items. The full proof is in \Cref{sec:appproof2}. 
\end{proof}

The idea follows the above intuition: without item fairness constraints and assuming cold start users follow the same distribution as existing users, the platform's price of misestimation is low because the platform treats cold start users as the average of the existing population. Fairness constraints, however, can make this cost arbitrarily high, as \textit{in expectation} cold start users \textit{relatively} enjoy items other users do not. Such effects suggest that a more careful treatment of uncertainty and fairness together is necessary for recommendation algorithms.

\section{Empirical findings: arXiv recommendation engine}\label{sec:empirics}

We prototype a recommender for preprints on arXiv, to illustrate our conceptual findings. We consider the cold start setting for items (papers), when they are newly uploaded to arXiv and so only have metadata and paper text but no associated interaction or citation data. For users (readers), we use as data the papers that \textit{they} have shared on arXiv to estimate their preferences.

\paragraph{Empirical setup.} We use data from arXiv and Semantic Scholar \cite{arxiv_org_submitters_2024,Kinney2023}. As training for user preferences, we consider 139,308 CS papers by 178,260 distinct authors before 2020; as the items to be recommended, we consider the 14,307 papers uploaded to arXiv in 2020. We apply two natural language processing-based models -- TF-IDF \cite{rajaraman_2011} and the sentence transformer model SPECTER \cite{specter2020} -- to textual features such as the paper's abstract (for both items and the user's historical papers) to generate embeddings for all papers in the training set. We use these embeddings to compute similarity scores (utility matrices) for users and items. To compute the similarity score (utility) between a user (an author of at least one paper before 2020) and an item (a paper uploaded in 2020), we compute the cosine similarity between the embedding of each of the user's pre-2020 papers and the item's embedding. We then use either the \textit{mean} or the \textit{max} similarity amongst the pre-2020 papers and the item; the \textit{max} similarity score may more effectively capture a user's diverse interests \cite{guo2021stereotyping,pengdiversity}. We then generate recommendations for each user, at various levels of user and item fairness constraints. 

\begin{table}[ht!]
\centering
\begin{tabular}{@{}lcccccc@{}}
\textbf{Model} & \textbf{Coefficient} & \textbf{Std. Err} & \textbf{z-value} & \textbf{Adjusted $R^2$} \\ \hline
Max score, TF-IDF & 12.4100    &  0.058   &  212.178  & 0.08915 \\ 
Mean score, TF-IDF & 20.2122   &   0.131  &   154.835    & 0.04616 \\ 
Max score, Sentence transformer &  18.4557   &   0.250  &   73.695    & 0.1347 \\ 
Mean score, Sentence transformer & 16.2482    &  0.246   &  66.148  & 0.09085 \\ \\
\end{tabular}

\caption{Logistic regression results for predicting whether user $i$ cites paper $j$ from the similarity score $w_{ij}$ for each model.
}
\label{tab:logit_results_agg}

\end{table}

To validate our recommendation approach, we use citation data from Semantic Scholar \cite{Kinney2023} to determine for each user and each paper published in 2020 whether the user cites that paper in their post-2020 work. We then examine how well the user-item similarity score generated by our recommendation engine predicts the presence of a citation. The recommendations effectively predict whether a user cites a paper with a high score after 2020. In Table \ref{tab:logit_results_agg} we show the results of a logistic regression between each similarity score and the presence of a citation, where the coefficient on the score is large and statistically significant for each model; the predictive power of our models is reasonable for a sparse, high variance event such as citations. We generally find that the \textit{max} score models are more predictive of future citations. Appendix \ref{sec:apparxiv} includes details and evaluations; our computational experiments in the main text use the max score, TF-IDF model.

\subsection{Empirical results.}

\begin{figure}[tb]
    \centering
    \begin{subfigure}{0.49\linewidth} 
        \centering
        \includegraphics[width=\linewidth]{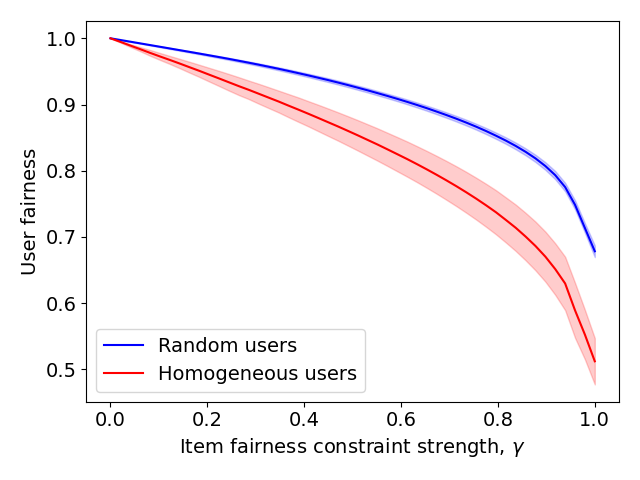}
        \caption{Homogeneous versus diverse users}
        \label{fig:tradeoff:diversity} 
    \end{subfigure}%
    \hfill
    \begin{subfigure}{0.49\linewidth} 
        \centering
        \includegraphics[width=\linewidth]{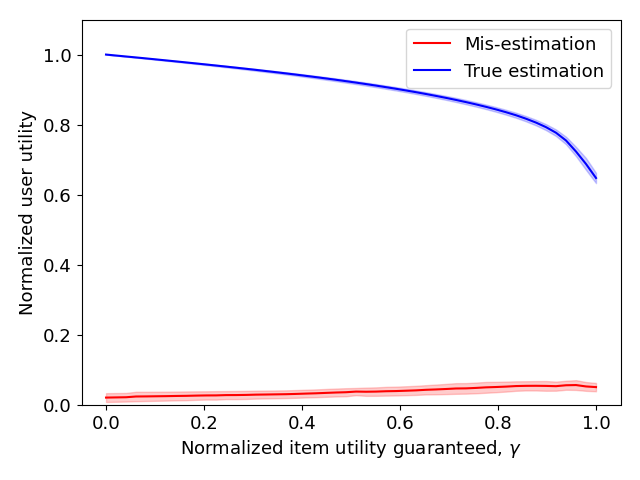}
        \caption{With and without misestimation}
        \label{fig:tradeoff:misest} 
    \end{subfigure}
    \caption{Empirical (using our arXiv recommender) tradeoff between the minimum user (Y axis) and item (X axis) utility. Recall $\gamma$ is the fraction of the best possible minimum normalized item utility $\IF^*$ guaranteed. (a) Illustrating \Cref{thm:pofdecreasing} empirically -- homogeneous populations have a higher price of fairness. Empirically, however, the price of fairness is small except with strict item fairness constraints $\gamma \to 1$. (b) For a set of users, holding other users fixed, the cost to the worst-off user of misestimating preferences, at varying $\gamma$. Empirically, the cost of misestimation is already so high that it is not worsened with item fairness constraints, as in the worst case analysis of \Cref{thm:misestimation}.}
    \label{fig:empiricaltradeoffs} 
\end{figure}

We examine the tradeoffs between item and user fairness and the effect of misestimation on this tradeoff empirically. We use the similarity scores generated by the recommendation engine described above -- in particular, the max score, TF-IDF model -- as the utility values $\valuemx_{ij}$. We consider a pool of 14,307 papers in the computer science category posted to arXiv in 2020, and a pool of 20,512 authors who posted papers in the computer science category to arXiv both in 2020 and prior to 2020 (we use the papers prior to 2020 to compute the similarity scores as described above). We then subsample recommendation settings from this pool; we give further details about the sampling process below. For a given value of $\gamma$, to compute $\UF^*(\gamma)$ we use the \texttt{cvxpy} implementation of the convex optimization algorithm SCS \cite{scs16}.
    
\paragraph{User-item tradeoffs as function of user diversity.}    Figure \ref{fig:tradeoff:diversity} shows the tradeoff between user fairness and item fairness in a random population and in a population of homogeneous users. To generate a tradeoff curve for the heterogeneous population, we sampled 200 random papers and 500 random authors to form $w$, and computed $\UF^*(\gamma, w)$ for 50 values of $\gamma$ between 0 and 1. To generate tradeoff curves from homogeneous user populations, we clustered all 20,512 users into 41 clusters using the $k$-means algorithm. For a single curve, to form $w$ we sampled 200 random papers and all authors in one random cluster, which has 500 users in expectation. We run 10 experiments and plot the mean of $\UF^*(\gamma, w)$ at each value of $\gamma$ across the 10 curves as well as two std. error bars for the mean.

    Figure \ref{fig:tradeoff:diversity} demonstrates that in real data, for moderate item fairness guarantees ($0 \leq \gamma \leq 0.9$), on average there is a fairly low cost to user fairness, but as we approach optimal item fairness ($\gamma \to 1$), the tradeoff becomes steep. Furthermore, Figure \ref{fig:tradeoff:diversity} shows that item fairness tends to impose a higher cost to user fairness in more homogeneous populations, as in \Cref{thm:pofdecreasing}. 

\paragraph{Price of misestimation.} Figure \ref{fig:tradeoff:misest} shows how user fairness is affected by item fairness guarantees in the presence of misestimation. For sets of 200 random papers and 500 random authors, we select a set $10\%$ of these users at random and treat them as if we did not have any data for them, estimating these users' utility for item $j$ as the average utility for item $j$ for the other $90\%$ of the users ($\hat w$ in the notation of \Cref{thm:misestimation}). For 50 values of $\gamma$ between 0 and 1, we compute $\UF^*(1, \hat w)$ for the misestimated utility matrix. We also compute, counterfactually, this quantity if the utility matrix had been correctly estimated, $\UF^*(1,w)$. We plot the true minimum normalized user utility under the recommendation policies that attain $\UF^*(1, w)$ and  $\UF^*(1, \hat w)$. We run 10 experiments and plot the mean value of the minimum normalized user utility and two std. error bars. %

In Figure \ref{fig:tradeoff:misest}, near $\gamma = 0$, the price of misestimation -- the gap between the two curves -- is higher than the cost of misestimation near $\gamma = 1$. This results because the item fairness constraint barely changes the (already low) utility of these users when their preferences are misestimated, but substantially changes their utility when their preferences are correctly estimated. While theoretically, item fairness constraints can arbitrarily increase the cost of misestimation, in practice, on average they do not affect this cost -- the cost of misestimation without item fairness is already high.

\section{Related work}
There is a large literature on (item) fair recommendation and ranking \cite{zehlike17itemfairness, mehrotra18, pmlr-v139-wang21b, xu23, barre24} and, more recently, on multi-sided user-item fair recommendation \cite{burke17, burke18,wang2021,chen23,basu2020framework,patro20,Farastu2022WhoPP}. This literature is primarily \textit{algorithmic:} for a given formulation of user and item utility (and other desiderata), how do we devise an efficient algorithm for multiple objectives or constraints? In contrast, our goal is primarily \textit{conceptual}, to aid algorithm designers in choosing when to use such algorithms: for example, by explaining when we might expect the tradeoff to be especially sharp, and to understand the cost to cold start users in particular. For example, we theoretically analyze recommendations from a constrained optimization-based approach akin to that in \citet{basu2020framework}, in terms of the implications of such an approach on recommendations.

Several papers observe related phenomena to the ones we study, especially empirically. 
\citet{wang2021} develop an optimization algorithm to be fair to users (in terms of group fairness to demographic groups) and items (similar to our normalized utility metric). Theoretically, they show that there is a tradeoff between user and item utility metrics, and further empirically show how fairness interacts with the diversity of items shown to each user. While their focus is also primarily algorithmic, they do show that there is a fundamental tension between item and user fairness. %
In concurrent work, \citet{kleinberg24} also theoretically demonstrate and characterize this tension. They focus on the cost to individual users caused by imposing maximal item fairness constraints -- similar to how we define price of fairness -- and determine the relationship between a cost level and the proportion of users in the worst-case recommendation setting who experience that cost. In constrast, we examine the cost to the single worst-off user as a function of properties of the recommendation setting such as user diversity and recommender mis-estimation.
\citet{Rahmani2022msfqual} demonstrate a similar tension between item and user fairness in empirical data. \citet{liu2018personalizing} do not examine user fairness, but observe that one can mitigate the cost of item fairness in multi-sided recommendations by recommending to users with weaker preferences for items that may otherwise be less preferred. Subsequently, \citet{Farastu2022WhoPP} examine which users bear the cost of item fairness, pointing to \citet{liu2018personalizing} to argue that the cost to users of item fairness constraints disproportionately falls on users with flexible preferences, creating an incentive for users to misrepresent their preferences as more rigid than reality. We build on these arguments by theoretically analyzing the cost of item fairness on (individual) users, especially the users most affected, as a function of the estimated user preference matrix.

In focusing on conceptual phenomena, our work is related to work analyzing the \textit{price of fairness} and efficiency-equity tradeoffs in various settings beyond recommendations. \citet{Bertsimas2011ThePOF} first defined the \textit{price of fairness} as the normalized decrease in the utility of an algorithmic outcome after adding fairness considerations, and develop general bounds on the price of fairness in an array of optimization scenarios. This concept has been subsequently applied in a variety of domains, including auction theory \cite{Kaleta2014PriceOFauctions}, fair division and resource allocation \cite{Bei2019socialchoicepof,liu2024redesigning, Tang2020PriceOFallocation}, and computer networking \cite{Xiao2013networking}. \citet{barre24} apply a similar concept in assortment optimization, examining the cost of item visibility constraints on the revenue of a platform. 
Most similarly, \citet{chen23} define the price of fairness in the setting of multi-sided fairness as the cost on platform revenue of including both fairness constraints; in contrast we define the price of fairness as the cost to user fairness of imposing item fairness constraints in order to capture the interplay between user and item fairness.
The authors then show that this price of fairness on the revenue depends on \textit{objective misalignment} -- the difference in fairness between the item/user utility required by the constraints, and the item/user utility in a revenue-optimal solution. We study how the price of item fairness on user fairness depends on \textit{user preference diversity} -- the agreement between users' utilities. These concepts are related: user preference diversity may cause objective alignment.
Moreover, they also consider the problem of unknown preferences: they examine how to algorithmically impose fairness constraints when preferences are unknown, while we address the question of whether fairness constraints disproportionately harm users with unknown preferences.
Finally, our result that diverse population preferences can mitigate the price of fairness resembles the result of \citet{bastani20} that greedy contextual bandits can perform well without exploration if there is sufficient contextual diversity. %

Finally, there is a large literature on other tradeoffs in recommendations, rankings, and ratings: engagement versus value, diversity, strategic behavior, uncertainty, and over-time dynamics \cite{ma2022balancing,pengdiversity,besbes24,guo2021stereotyping,huttenlocher2023matching,milli2023choosing,kleinberg2023challenge,dai2024can,haupt2023recommending,jagadeesan2024supply,dean2024accounting,liu2023collaborative,liu2022strategic,manshadi2023redesigning,garg2019designing,garg2021designing}. In the context of set recommendations when users consume their favorite item out of the multiple recommended, e.g., \citet{pengdiversity} show that there is a minimal utility-diversity tradeoff, and \citet{besbes24} show that there is a minimal exploration-exploitation tradeoff.

\section{Discussion}\label{sec:discussion}

We investigate the relationship between user fairness and item fairness in recommendation settings. We develop (a) a theoretical framework to enable us to solve for the price of fairness for many population settings, and (b) a recommendation engine using real data to allow us to investigate user-item fairness tradeoffs in practice. Our work informs the design of fair recommendation systems: (1) it emphasizes the benefits of a diverse user population, and suggests that item fairness constraints should not be imposed on \textit{sub-markets} (sub-groups), but instead on the entire population together. (2) It cautions designers to be especially mindful of effects on individual users (especially cold start users), who may receive disproportionately poor recommendations with item fairness constraints, even with a user fairness objective. Our empirical analysis supports our theoretical analysis---the userbase diversity affects the severity of user-side effects of imposing item fairness; however, the price of misestimating user utility is already high without item fairness constraints, and so imposing such constraints does not have additional effects. Such results speak to the importance of instance-specific analyses, cf. \cite{patro2022fair}: one cannot make general statements about the specific effects of item-fairness constraints on users (or vice versa) outside of a specific context, though we identify two relevant phenomena (user diversity and misestimation) that modulate these effects.

\paragraph{Limitations.} Our theoretical analysis explores these fairness tradeoffs in a fairly restricted setting. First, we assume that users are only recommended a single item; future work should investigate how the price of fairness changes as the number of recommended items increases. We do not expect our theoretical framework to easily extend to other definitions of fairness; however, we extend our computational arXiv experiments to other definitions of fairness in Appendix \ref{app:extensions}. These extended experiments also show that diverse user preferences reduce fairness tradeoffs; an interesting direction for future work is to theoretically characterize user-item fairness tradeoffs under other definitions of fairness. Furthermore, in practice platforms are unwilling to maximize the worst-off user's item or user fairness at the expense of the entire platform's utility; the problem is really one of balancing user and item fairness with overall platform performance. Algorithms to optimize these multi-sided problems are explored in other work \cite{chen23}, but it would be interesting to develop qualitative observations about user-item fairness tradeoffs in the presence of a total utility constraint. Finally, we show our Theorems \ref{thm:pofdecreasing} and \ref{thm:misestimation} in a limited context with only two or three types of users. However, the theoretical framework developed in Section \ref{sec:technical} is significantly more general. It is would be interesting to apply this framework to other population structures such as when users do not have perfectly opposite preferences and where there are more than three groups of users.

\newcommand{\acknowledgements}{SG is supported by a fellowship from the Cornell University Department of Computer Science and an NSERC PGS-D fellowship [587665]. NG is supported by NSF CAREER IIS-2339427, and Cornell Tech Urban Tech Hub, Meta, and Amazon research awards. The authors would like to thank Sidhika Balachandar, Erica Chiang, Evan Dong, Omar El Housni, Meena Jagadeesan, Jon Kleinberg, Kevin Leyton-Brown, Michela Meister, Chidozie Onyeze, Kenny Peng, Emma Pierson, and Manish Raghavan for valuable conversations and insights, as well as the NeurIPS 2024 reviewers for helpful feedback.}

\ifnum\arxiv=1
\section{Acknowledgements}
\acknowledgements
\else
\begin{ack}
    \acknowledgements
\end{ack}
\fi

\bibliographystyle{plainnat}
\bibliography{bibliography}

\newpage
\appendix

\section{Extensions}\label{app:extensions}

In the experiments below, we use a smaller pool of 1,000 randomly selected authors and subsample recommendation settings with 20 papers and 40 authors. Moreover, in the misestimation plots, we measure user fairness within the misestimated group rather than across all users in the sampled setting.

\subsection{Alternative definitions of fairness}

\paragraph{Nash Welfare} The first alternative definition of fairness we consider is Nash welfare \cite{moulin2003}, \[\UFnash(\rec) = \sum_i \log \UU_i(\rec), \quad \IFnash(\rec) = \sum_j \log \IU_j(\rec).\] This measure of fairness is more holistic than egalitarian fairness as it accounts for the utilities of all users (items).
In Figure \ref{fig:nwf}, we show the results of repeating the experiment in Figure \ref{fig:empiricaltradeoffs} with Nash Welfare fairness. We see that the trade-off between user and item fairness is steeper for homogeneous populations of users than for uniformly random populations. While the curves appear more concave than before, it is important to notice that we replaced $\gamma$ with $1/\gamma$ in the item fairness constraint of the optimization problem due to the Nash welfare being negative (as detailed in Figure \ref{fig:nwf}), so that $\gamma$ -- while still capturing constraint strength -- has a slightly different interpretation than previously. When this $\gamma$ is nearly 1, we do see an increase in the price of mis-estimation. 

\begin{figure}[ht]
    \centering
    \begin{subfigure}{0.49\linewidth} 
        \centering
        \includegraphics[scale=0.3]{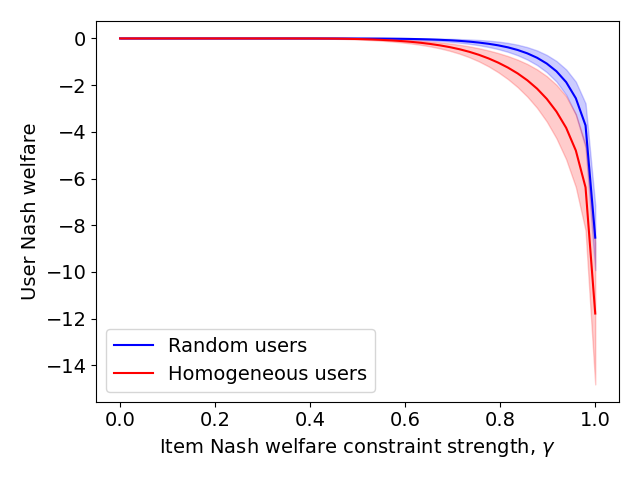}
        \caption{Homogeneous versus diverse users}
        \label{fig:tradeoff:diversity} 
    \end{subfigure}%
    \hfill
    \begin{subfigure}{0.49\linewidth} 
        \centering
        \includegraphics[scale=0.3]{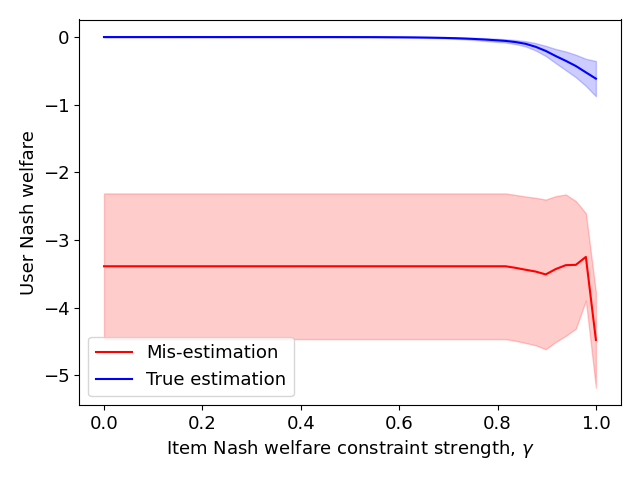}
        \caption{With and without misestimation}
        \label{fig:tradeoff:misest} 
    \end{subfigure}
    \caption{We repeat the experiment of Figure \ref{fig:empiricaltradeoffs} from the original paper but replace max-min fairness with Nash welfare fairness. That is, in the objective we replace $\UF$ with the user Nash welfare $\UFnash$. We must be more careful with the item fairness constraint: we know that the normalized utilities satisfy $0 \leq \UU_i, \IU_j \leq 1$, so $\UFnash, \IFnash < 0$. This means that in Problem \ref{eqn:originalconcaveproblem} we must replace the item fairness constraint $\IF(\rec) \geq \gamma \IF^*$ with the constraint $\IFnash(\rec) \geq (1/\gamma) \IFnash^*$. When $\gamma = 0$, this corresponds to $\IFnash(\rec) \geq -\infty$; when $\gamma = 1$, this corresponds to $\IF(\rec) \geq \IF^*$. Thus as before, when $\gamma = 0$ there is effectively no item fairness constraint, and $\gamma = 1$ constrains item fairness to be maximal.}
    \label{fig:nwf}
\end{figure}

\paragraph{Sum of $k$-min} The second alternative definition of fairness we consider is the sum of the $k$-minimum user or item utilities, which is a generalization of egalitarian fairness ($k = 1$) that measures the utility of a size-$k$ \textit{set} of worst-off entities, \[\UFskm(\rho) = \min_{i_1 \neq ...\neq i_k} \sum_{\ell = 1}^k \UU_{i_\ell}(\rho), \quad \IFskm(\rho) = \min_{j_1 \neq ...\neq j_k} \sum_{\ell= 1}^k \IU_{j_\ell}(\rho).\] 
In Figure \ref{fig:skm}, we show the results of repeating the experiment in Figure \ref{fig:empiricaltradeoffs} with max-sum-$k$-min fairness. We again observe that the trade-off between user and item fairness is steeper for homogeneous populations of users than for uniformly random populations. At already very high levels of item fairness -- between $\gamma \approx 0.9$ and $\gamma = 1$ -- there appears to be an increase in the price of misestimation.
\begin{figure}[ht]
    \centering
    \begin{subfigure}{0.49\linewidth} 
        \centering
        \includegraphics[scale=0.3]{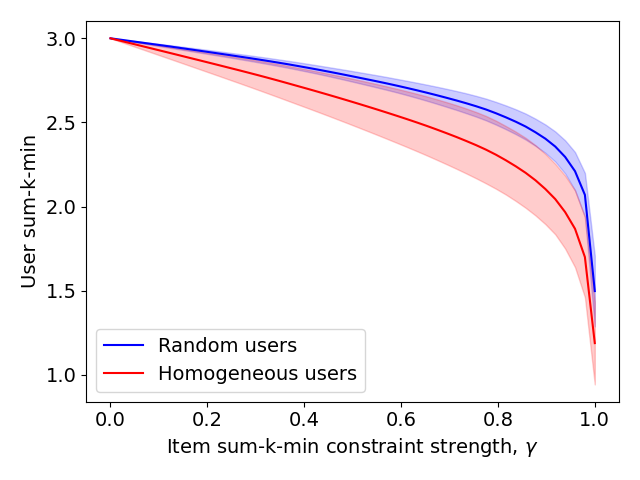}
        \caption{Homogeneous versus diverse users}
        \label{fig:tradeoff:diversity} 
    \end{subfigure}%
    \hfill
    \begin{subfigure}{0.49\linewidth} 
        \centering
        \includegraphics[scale=0.3]{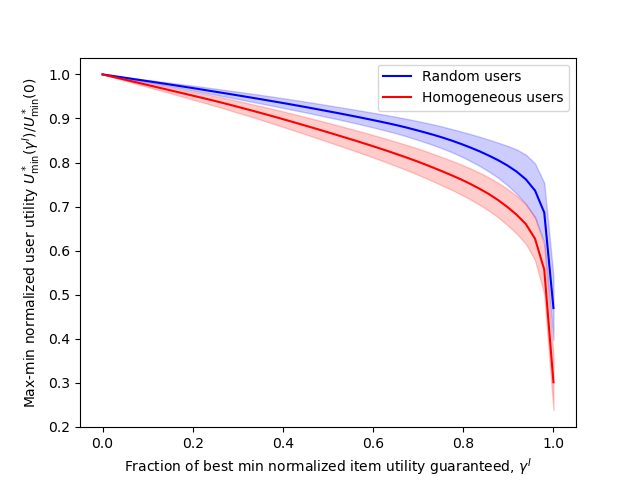}
        \caption{With and without misestimation}
        \label{fig:tradeoff:misest} 
    \end{subfigure}
    \caption{We repeat the experiment of Figure \ref{fig:empiricaltradeoffs} from the original paper but replace max-min fairness with max-sum-$k$-min fairness for $k = 3$. In the optimization in Problem \ref{eqn:originalconcaveproblem}, we replace $\UF$ and $\IF$ with $\UFskm$ and $\IFskm$ respectively.}
    \label{fig:skm}
\end{figure}

\subsection{Alternative item utility models}

In our theoretical and empirical results, we use a symmetric utility model, where an item's utility for being recommended to a user is the same as the user's utility for the recommendation. Formally, in general item $j$ has utility $w^I_{ij}$ for being recommended to user $i$, while user $i$ may have a different utility $w^U_{ij}$ for that recommendation. In our theoretical and empirical results above, we took $w_{ij} = w^I_{ij} = w^U_{ij}$. Below, we discuss the rationale for this model, and show that this assumption may be relaxed in our theoretical and empirical results.

The symmetric utility model we use -- which corresponds to the ``market share'' item utility model in \citet{chen23} -- is motivated by platforms in which both users and items derive utility from a \textit{successful} recommendation. This is reasonable in settings in which not only does a user want to be recommended relevant items, but an item's producer wants it to be recommended to users for which the item is especially relevant. Concretely, in the case of readers receiving recommendations for academic pre-prints, the readers prefer papers that they will engage with, and authors prefer readers that will engage with their work. in an online marketplace producers want customers who will purchase their product to be shown the item; in a social media setting, content creators prefer their content to appear to users who will appreciate it and thus engage with future content. Our symmetric utility model captures this basic structure behind producer preferences in many cases.

Note that since users can only receive a limited number of recommendations, our assumption does not eliminate the tension between user and item utility: an individual user wants recommendations that maximize her total utility across all items, while an individual item wants the platform to produce recommendations that maximize its total utility across all users. A concrete example of this is that under this model a low-quality item will want to be recommended to the user most likely to click on it, but that user won't want to be recommended the low-quality item. This assumption does, however, imply that the \textit{platform-wide} item utility and \textit{platform-wide} user utility are equivalent -- for recommendations $\rec_{ij}$, these are both $\sum_j \sum_i w_{ij} \rec_{ij}$.\footnote{Of course, this need not be the platform's utility: in general the platform might receive utility $r_{ij}$ if user $i$ selects recommended item $j$.}

One generalization of the symmetric utility model is where each user and item receives utility \textit{proportional} to some shared recommendation quality $w_{ij}$. Formally, each user $i$ receives a utility of $a_i w_{ij}$ and each item $j$ receives a utility of $b_j w_{ij}$ from recommending $j$ to $i$, for $a_i, b_j > 0$. For example, different values of $a_i$ could capture different levels of baseline interest in items among users. Our theoretical results still hold in this more general setting, since these coefficients will cancel out when we normalize the utilities.

Another common item utility model is \textit{exposure} \cite{patro20,burke18}, where items receive utility only from being recommended, and are ambivalent to \textit{which} user it is shown to. Formally, this corresponds to taking $w_{ij}^I = 1$ for all $i,j$.\footnote{Again we could take $w_{ij}^I = b_j$ for some $b_j > 0$, but this would have equivalent normalized item utilities.} Intuitively, with exposure-based item utility, the item fairness constraint will cause the users' recommendation policies move from being concentrated on the most popular items, to a more uniform distribution over items. If the users in the population have diverse preferences, each item will already have an approximately uniform probability of being recommended, so that imposing item fairness constraints has a low cost.

In Figures \ref{fig:generalium:2a} and \ref{fig:generalium:2b} we examine the robustness of our empirical results in \ref{fig:tradeoff:diversity} and \ref{fig:tradeoff:misest} respectively as we interpolate the item utility model between symmetry and exposure. In Figure \ref{fig:generalium:2b} we see that user preference diversity indeed still improves the user-item fairness tradeoff when we change the item utility model. We also again see that the price of mis-estimation does not appear to increase with item fairness.

\begin{figure}[ht]
   \centering
    \begin{subfigure}[b]{0.24\textwidth}
       \includegraphics[width=\textwidth]{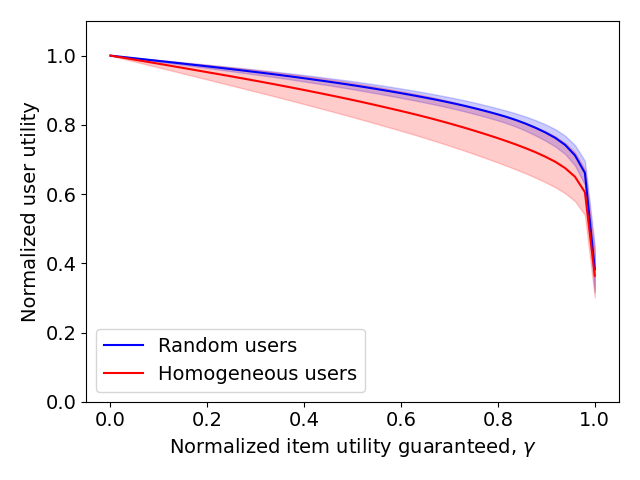}
       \caption{$\Delta = 0$ (symmetric)}
       \label{fig:generalium:corr:0}
   \end{subfigure}
    \begin{subfigure}[b]{0.24\textwidth}
       \includegraphics[width=\textwidth]{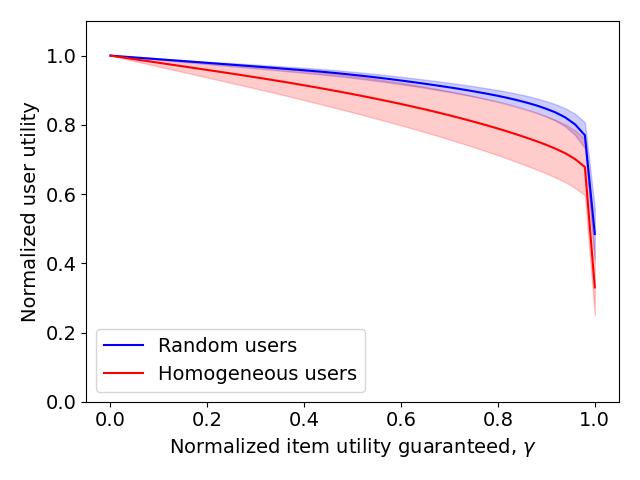}
       \caption{$\Delta = 0.33$}
       \label{fig:generalium:corr:0}
   \end{subfigure}
    \begin{subfigure}[b]{0.24\textwidth}
       \includegraphics[width=\textwidth]{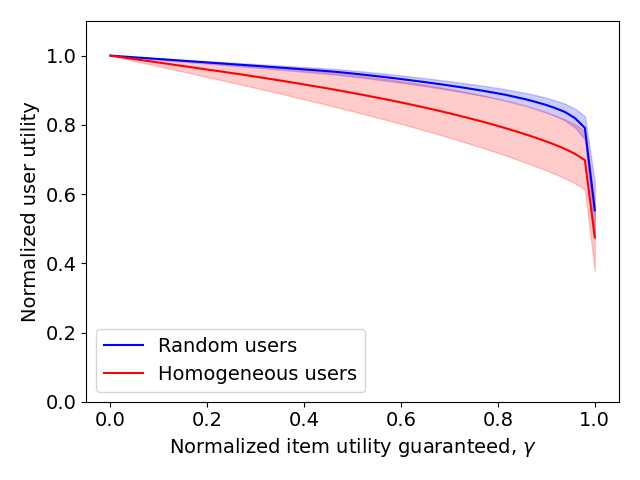}
       \caption{$\Delta = 0.67$}
       \label{fig:generalium:corr:0}
   \end{subfigure}
    \begin{subfigure}[b]{0.24\textwidth}
       \includegraphics[width=\textwidth]{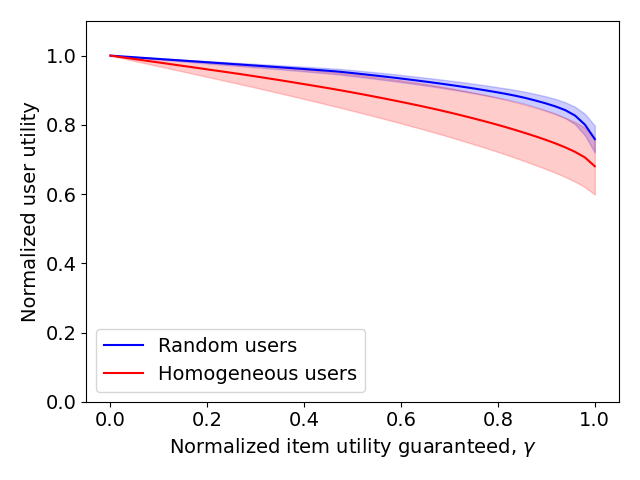}
       \caption{$\Delta = 1$ (pure exposure)}
       \label{fig:generalium:corr:0}
   \end{subfigure}
   \caption{Robustness of empirical findings without symmetry assumption: user-item fairness trade-offs and diversity as item utilities become less correlated with user utilities. Here, we take the user utilities $w^U$ to be derived from the arXiv recommendation engine similarity scores as in Figure \ref{fig:tradeoff:diversity}. For each plot the item utilities are a linear interpolation between the users' utilities and exposure. Formally, $w_{ij}^I = \Delta \cdot 1 + (1-\Delta)\cdot w_{ij}^U$. When $\Delta = 0$, item and user utilities agree; when $\Delta = 1$ items derive utility solely from exposure.}
   \label{fig:generalium:2a}
\end{figure}

\begin{figure}[ht]
   \centering
    \begin{subfigure}[b]{0.24\textwidth}
       \includegraphics[width=\textwidth]{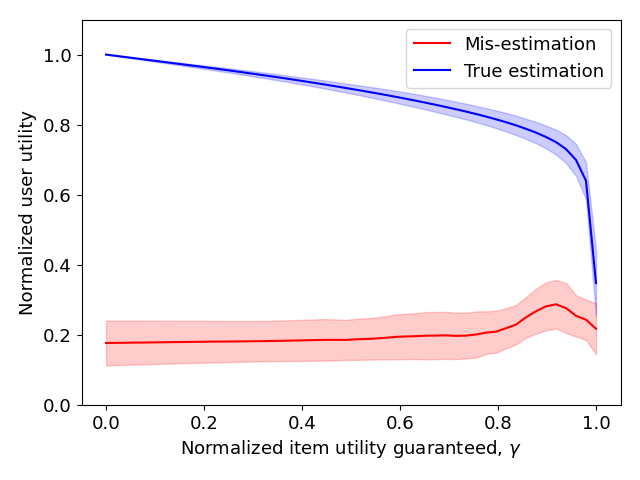}
       \caption{$\Delta = 0$ (symmetric)}
       \label{fig:generalium:corr:0}
   \end{subfigure}
    \begin{subfigure}[b]{0.24\textwidth}
       \includegraphics[width=\textwidth]{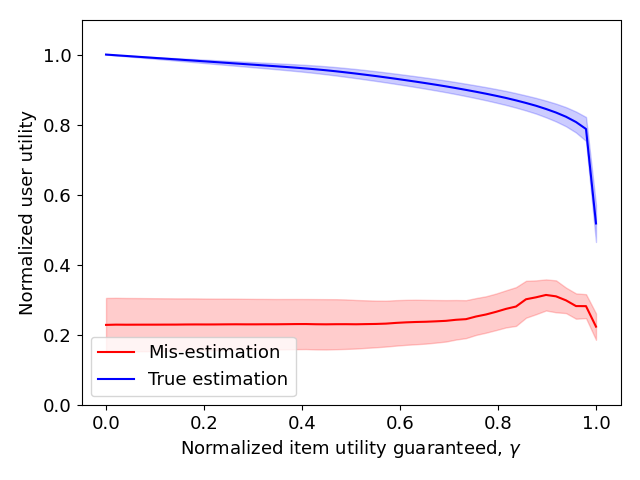}
       \caption{$\Delta = 0.33$}
       \label{fig:generalium:corr:0}
   \end{subfigure}
    \begin{subfigure}[b]{0.24\textwidth}
       \includegraphics[width=\textwidth]{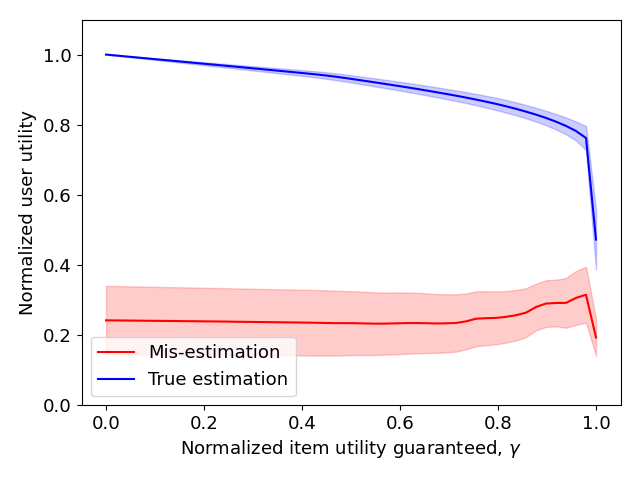}
       \caption{$\Delta = 0.67$}
       \label{fig:generalium:corr:0}
   \end{subfigure}
    \begin{subfigure}[b]{0.24\textwidth}
       \includegraphics[width=\textwidth]{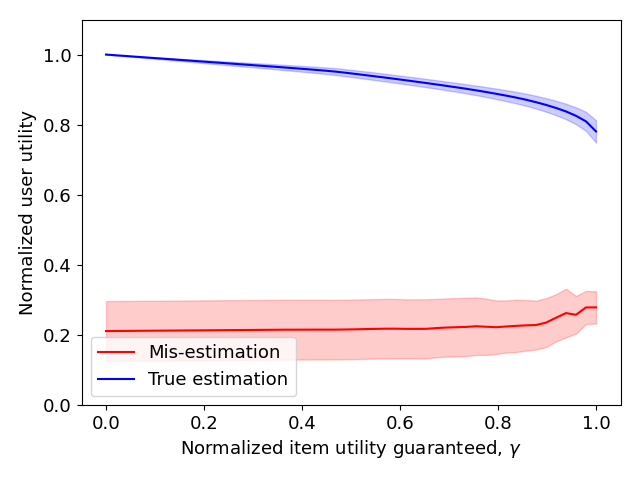}
       \caption{$\Delta = 1$ (pure exposure)}
       \label{fig:generalium:corr:0}
   \end{subfigure}
   \caption{Robustness of empirical findings without symmetry assumption: item fairness constraints still do not increase the price of mis-estimation empirically. Here, we take the user utilities $w^U$ to be derived from the arXiv recommendation engine similarity scores as in Figure \ref{fig:tradeoff:misest}. For each plot the item utilities are a linear interpolation between the users' utilities and exposure. Formally, $w_{ij}^I = \Delta \cdot 1 + (1-\Delta)\cdot w_{ij}^U$. When $\Delta = 0$, item and user utilities agree; when $\Delta = 1$ items derive utility solely from exposure.}
   \label{fig:generalium:2b}
\end{figure}

\section{arXiv recommender empirical details}
\label{sec:apparxiv}
We prototype a recommender for preprints on arXiv, to illustrate our conceptual findings. We consider the cold start setting for items (papers), when they are newly uploaded to arXiv and so only have metadata but no associated interaction or citation data. For users (readers), we use as data the papers that \textit{they} have published on arXiv in the past; we assume authors with identical names are the same author. We implement various natural language processing-based methods on the abstract text (for both items and the user's historical papers) to generate similarity scores (utility matrices) for users and items. We use the similarity scores to generate recommendations for each user, at various levels of user and item fairness constraints. We validate our approach by analyzing how citations correlate with the similarity score.

\subsection{Dataset and computation details}\label{sec:apparxiv:dataset}

The original dataset was sourced from the public \href{https://www.kaggle.com/datasets/Cornell-University/arxiv/data}{ArXiv Dataset} available on Kaggle,\footnote{Used under license CC0 Public Domain} containing 1,796,911 articles. This dataset covers a wide range of scientific categories. Each entry in the dataset is characterized by features which include:
\begin{itemize}
    \item \textbf{ID:} A unique identifier for each entry.
    \item \textbf{Authors:} Names and affiliations of the authors.
    \item \textbf{Title:} The title of the paper.
    \item \textbf{Categories:} The scientific categories for the paper.
    \item \textbf{Abstract:} A brief summary of the paper.
    \item \textbf{Update Date:} The date of the latest update.
\end{itemize} 
We further join this data with citation data from the Semantic Scholar API \cite{Kinney2023}.\footnote{Used under the Semantic Scholar API License Agreement} For each paper, we have both the papers that it cites and the papers that cite it. 

We focus exclusively on entries classified under the `Computer Science (CS)' category. We extract entries where the primary category designation was `CS', and remove null and duplicate paper ID values. This filtering results in 177,323 entries. 

This entire empirical workflow was run on a machine with 64 CPUs, 1 TB RAM, and 14 TB (non-SSD) disk. The estimated time was about 20 hours per week for 3 months, and the longest individual run was approximately 12 hours.

\paragraph{Train and test data}
As training (to construct embeddings for users), we consider papers that those users published up to the end of 2019. Papers in 2020 are in the test set (the set available to be recommended). Papers after 2020 are used to evaluate recommendations (did the user cite a paper published in 2020). 

The training dataset has 139,308 papers, and the test dataset has 14,307 papers, with the remaining being post-2020 papers. The training dataset has 178,260 distinct users with an average of approximately 2 papers per user. Figures \ref{fig:train_bar_plot_cs_subcat} and \ref{fig:test_bar_plot_cs_subcat} show the distribution of research paper subcategories between the train and test datasets. Figure \ref{fig: histogram_authors_log} below shows the distribution of the number of papers per user in the training set on a logarithmic scale for the y-axis. The x-axis represents the number of papers per user, ranging from 0 to over 250. 

Splitting the training and test data temporally both reflects practice and allows us to evaluate the recommendation model's effectiveness in predicting future citations. As detailed below, the model's success was measured by whether papers our model would have recommended to users in 2020 were in fact cited in their subsequent works. For further details, see below Section \ref{sec:apparxiv:eval} on the model evaluation.

\begin{figure}[h!]
    \centering
    \begin{subfigure}{0.99\linewidth} 
        \centering
        \includegraphics[width=\linewidth]{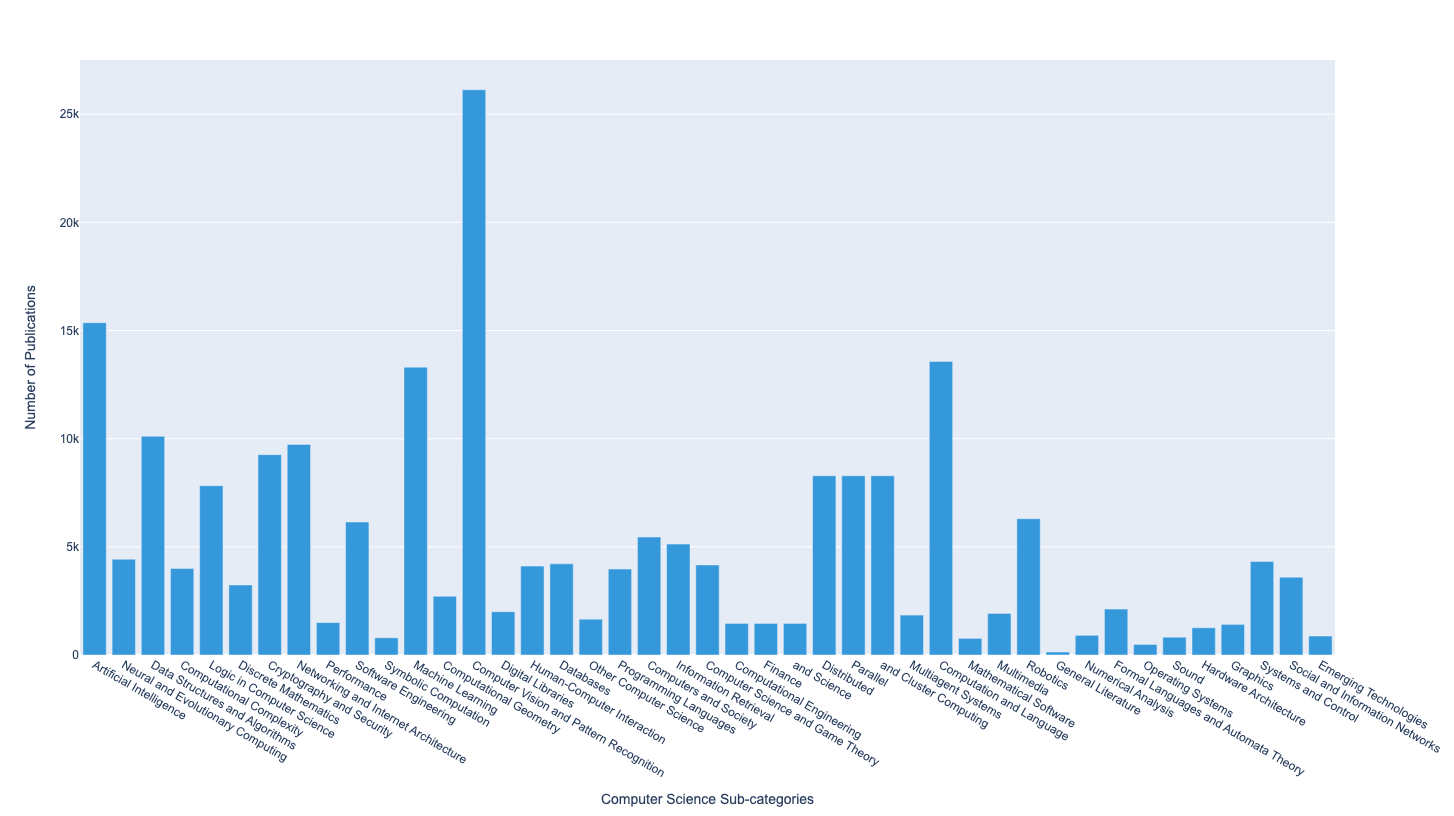}
        \caption{Distribution of research paper publications in the train dataset over time}
        \label{fig:train_bar_plot_cs_subcat} 
    \end{subfigure}%
    \hfill
    
    \begin{subfigure}{0.99\linewidth} 
        \centering
        \includegraphics[width=\linewidth]{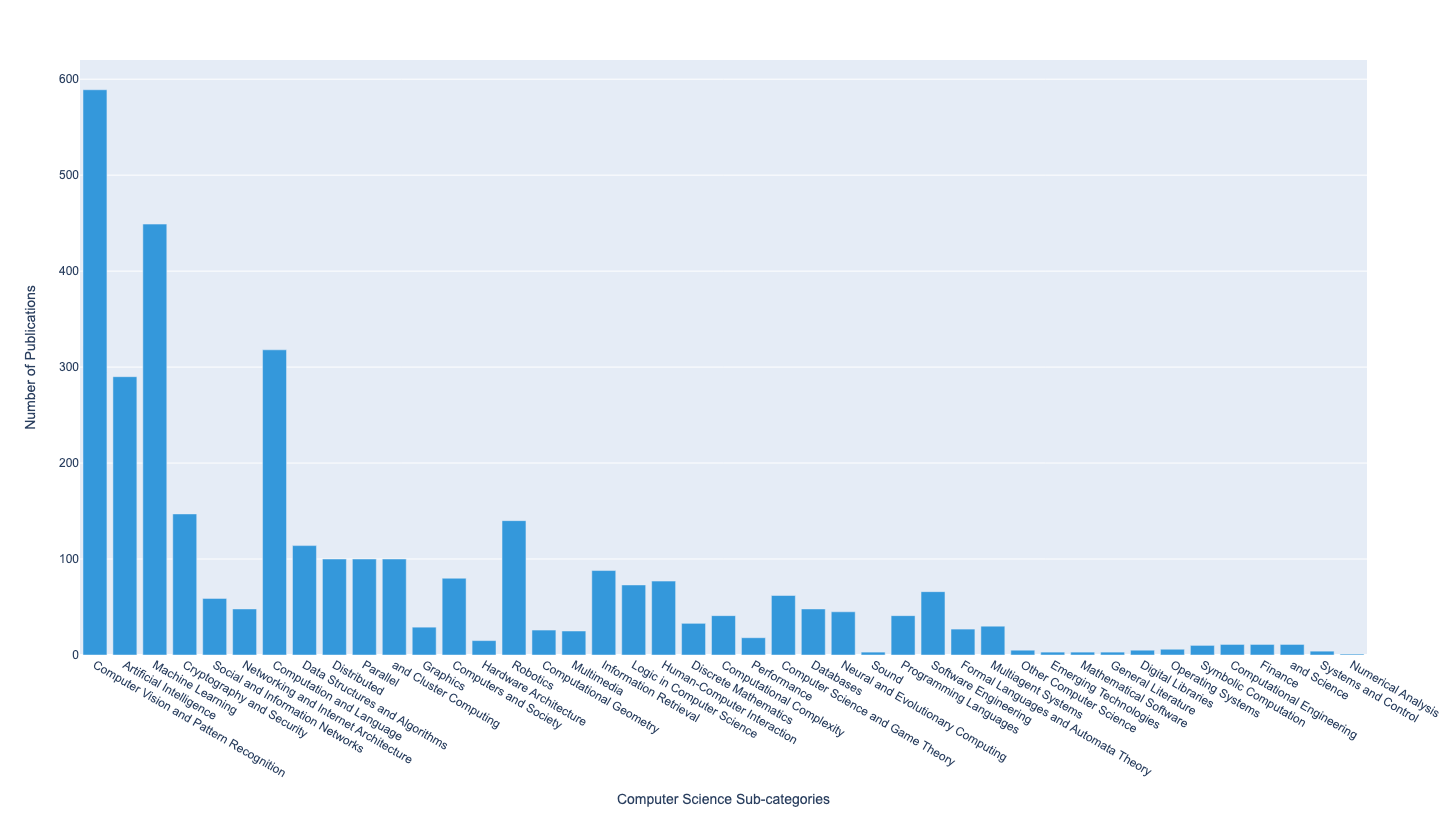}
        \caption{Distribution of research paper publications in the test dataset over time}
        \label{fig:test_bar_plot_cs_subcat} 
    \end{subfigure}
    \caption{Distribution of research paper publications over time}
    \label{fig:bar_plots} 
\end{figure}

\begin{figure}[h!]
    \centering
    \includegraphics[width=.5\linewidth]{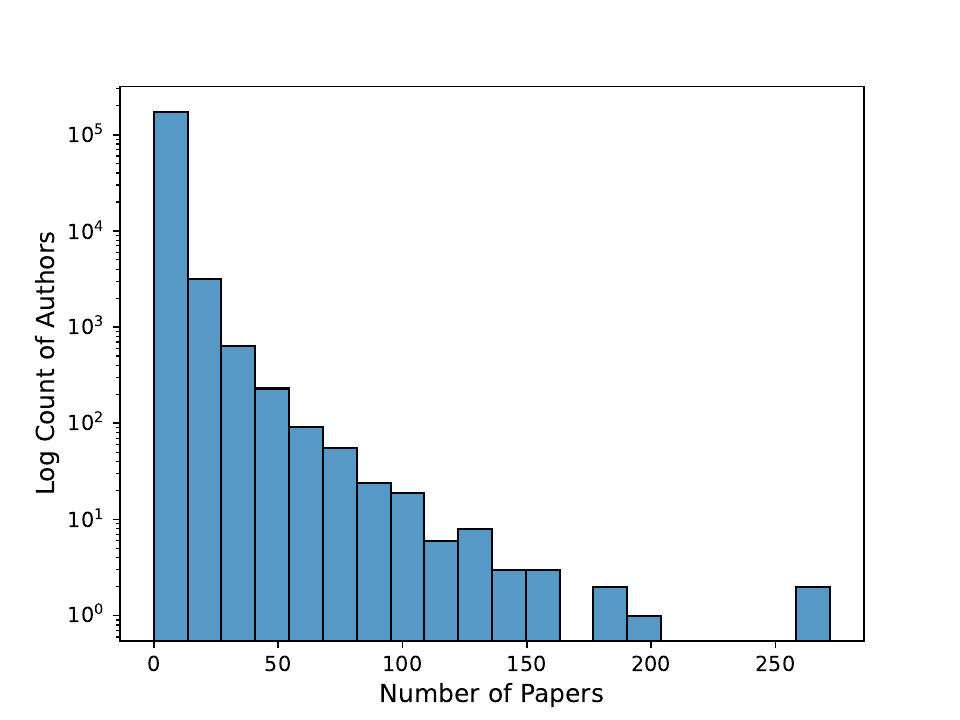}
    \caption{Distribution of the number of papers per user in the training set on a logarithmic scale}
    \label{fig: histogram_authors_log}
\end{figure}

\subsection{Recommendation models}\label{sec:apparxiv:train}

Each recommendation approach has two design dimensions: (a) how we generate embeddings for each paper (papers to be recommended, and papers uploaded by users that will be used to construct user embeddings), and (b) how similarity scores are constructed once we have an embedding for each paper. Below, we evaluate each approach by their effectiveness in recommending papers that users are likely to cite in their future works.

\paragraph{(a) Generating embeddings for each paper}
We use text-based analyses on the abstracts of each paper to generate paper embeddings. The first approach is TF-IDF, while the second employs Sentence Transformers.

\begin{description}
    \item[TF-IDF.]  The first preprocessing step involved removing stopwords—common words with little informational value, such as ``and'' and ``the''—to reduce noise and emphasize meaningful content. Next, a TF-IDF (Term Frequency-Inverse Document Frequency) vectorizer is applied to count term frequencies and scale them according to their rarity across the dataset, highlighting unique and informative words. The resulting frequency vector for each abstract is used as its embedding.

    \item [Sentence Transformers.] The author-based recommendation model using Sentence Transformers leverages contextual embeddings from sentence-level representations. The preprocessing involves tokenizing the text (title, abstract, and categories) and generating embeddings using a pre-trained Sentence Transformer model, specifically the \hyperlink{https://huggingface.co/sentence-transformers/allenai-specter}{AllenAI SPECTER model} \cite{specter2020},\footnote{Licensed under the Apache License, Version 2.0} available on Hugging Face.

\end{description}

\paragraph{(b) Constructing similarity scores for each user-paper pair.} After constructing embeddings, we have an embedding for each paper in the potential recommendation set, and for each paper authored by a user in the training set. We construct user-paper similarity scores as follows. First, we compute the cosine similarity between each recommendation set embedding and each user's papers' embeddings. Then, the similarity score between each user and the recommendations set paper is constructed using one of the following approaches:

\begin{description}
    \item[Mean score.] We take the mean of the cosine similarities between the recommendation set paper and each of the papers by the user in the training set. This approach is equivalent to constructing a user embedding as the mean embedding of their uploaded papers. 
    \item [Max score.] The mean similarity score has been recognized as not capturing \textit{diverse} interests that a user may have \cite{guo2021stereotyping,pengdiversity} -- for example, for a user who has published papers in two different subject areas, a paper should be recommended to them if it matches \textit{either} of their interests, as opposed to the average of those interests. Thus, we also construct user-paper similarity scores via the \textit{max} similarity between any of the user's training set papers, and the recommended set paper. 
\end{description}

Note that we also experimented with using dot products instead of cosine similarity; results are similar and omitted. 

\subsection{Model evaluation method}\label{sec:apparxiv:eval}
We evaluate each of the four approaches (TF-IDF and sentence transformers, each with the max or the mean scores) using citation data for 1,128 users (authors) and 14,307 papers. For each user-paper pair, we use the following as outcome data:

\begin{description} 
    \item [User cites paper in the future:] Is the paper cited by the user in the future? 
    \item [Paper cites user:] Does the paper cite the user already? We note that this data is technically available at the time of a hypothetical recommendation for a new paper, and so in theory could be used by a recommender. Thus, although metric does not directly measure the future behavior of the user, it helps in understanding the contextual alignment (determined just by natural language processing of the abstracts) of the recommended papers with the user's previous research or interests. Importantly, these references are not part of the data used to generate the similarity score. %
    
\end{description}
In both cases, we consider the presence of citations as signifying a good recommendation. %

For each recommendation approach, we calculate the relationship between this citation data and the following similarity score measures. 
\begin{itemize}
    \item \textbf{Similarity score:} The raw cosine similarity score.
    \item \textbf{Score percentile:} The percentile rank of the similarity score, where percentile is calculated \textit{for each user}.
    \item \textbf{Normalized score:} The similarity score is normalized by subtracting the mean and dividing by the standard deviation of scores \textit{for each user}.
\end{itemize}

A successful recommendation approach would have a strong relationship between text-based similarity scores and future citation outcomes. 

\subsection{Evaluation results}

We evaluate each of the 4 approaches. As summarized below, we find that the best performing approach is using the TF-IDF vectorizer to construct abstract embeddings, and then using the \textit{max} similarity scores between any of their papers in the training set and each paper in the recommended set. We find that this approach is effective at predicting future citations by the user, especially for a cold start recommender that uses only abstract textual information, and for such a sparse outcome as citations. 

\paragraph{Max score, TF-IDF}

\FloatBarrier

\begin{table}[h!]
\centering
\begin{tabular}{@{}lcccc@{}}
\textbf{Citation Type} & \textbf{No/Yes} & \textbf{Similarity Score} & \textbf{Score Percentile} & \textbf{Normalized Score} \\ \hline
\multirow{2}{*}{User cites paper} & No & 0.041454 & 0.499737 & -0.001701 \\
 & Yes & 0.123650 & 0.765008 & 1.514331 \\
\hline
\multirow{2}{*}{Paper cites user} & No & 0.041147 & 0.498860 & -0.006536 \\
 & Yes & 0.138199 & 0.784410 & 1.582545 \\ %
\end{tabular}
\caption{Average similarity measures in the \textbf{Max score, TF-IDF} model, conditioned on citation presence for different types of citations.}
\label{tab:citation_types_max}
\end{table}

\begin{figure}[h!]
   \centering
   \begin{subfigure}[b]{0.45\textwidth}
       \includegraphics[width=\textwidth]{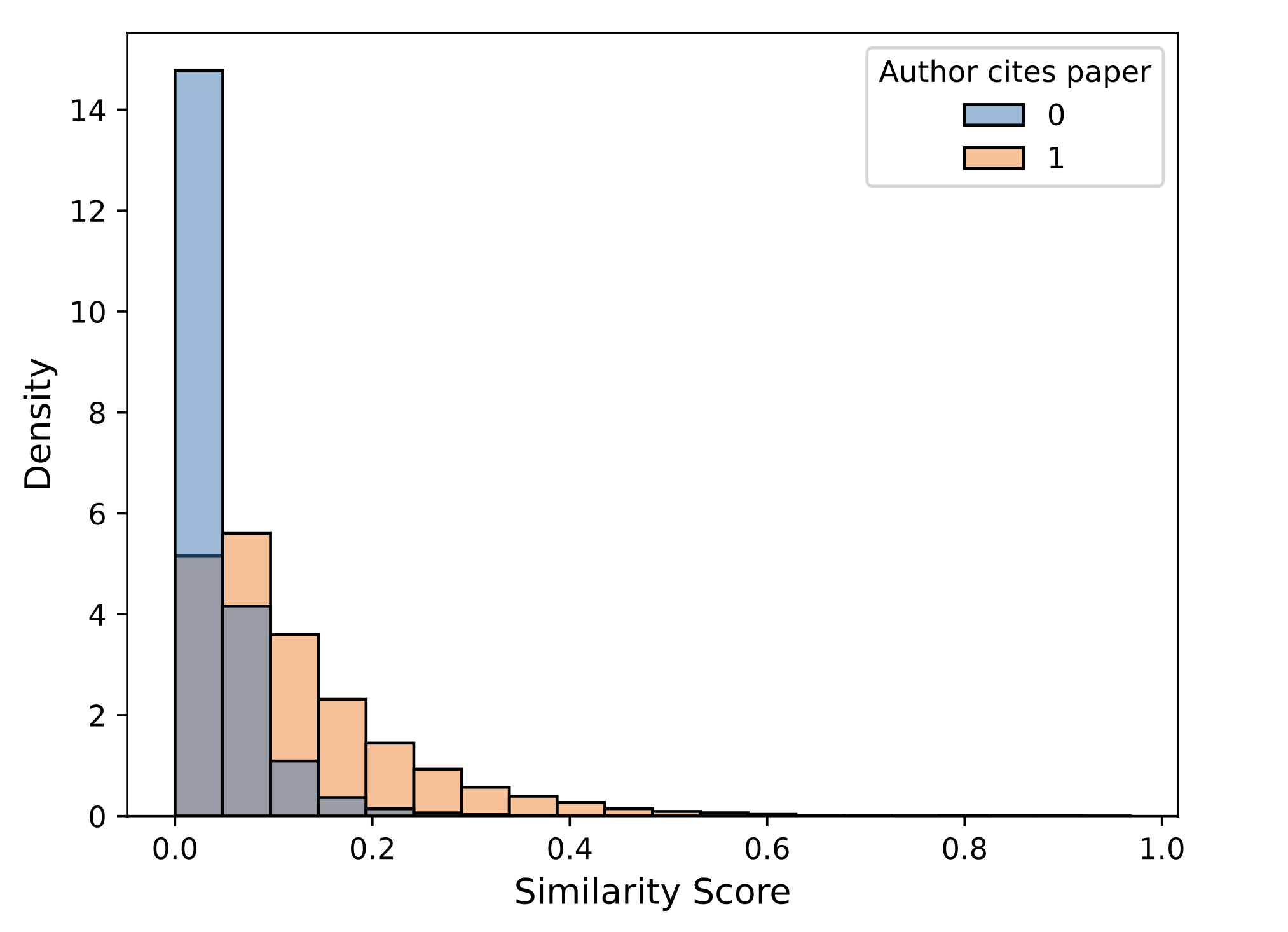}
       \caption{Density plot of similarity scores grouped by citation presence.}
       \label{fig:density_plot_max}
   \end{subfigure}
   \hfill
   \begin{subfigure}[b]{0.45\textwidth}
       \includegraphics[width=\textwidth]{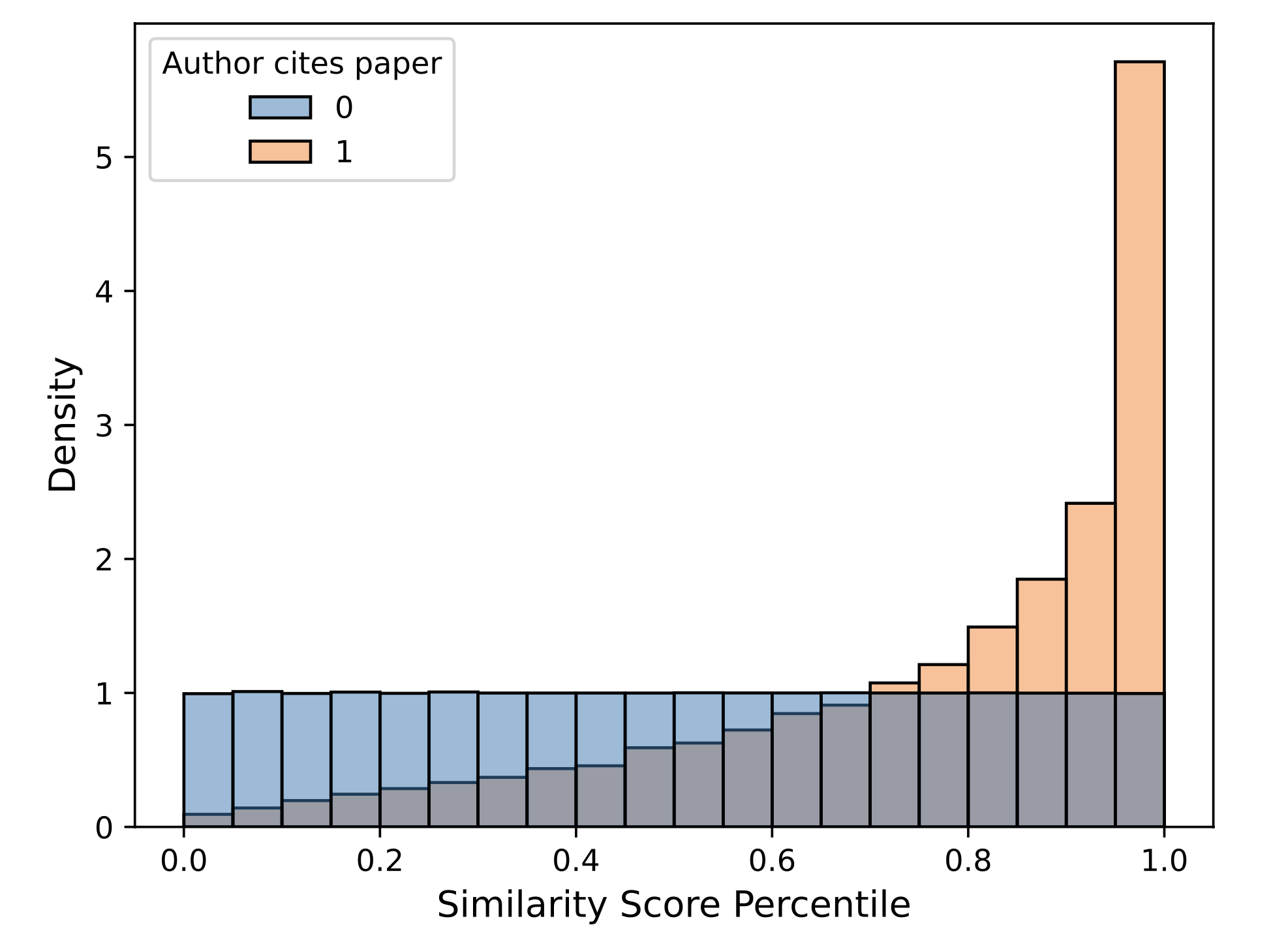}
       \caption{Density plot of similarity score percentiles grouped by citation presence.}
       \label{fig:percentile_density_plot_max}
   \end{subfigure}
   \caption{$\Pr(\text{score} | \text{User cites paper})$, the distribution of the score for a user-paper pair, conditional on whether the user cites the paper in the future, for the \textbf{Max score, TF-IDF} model. }
   \label{fig:combined_plots_tfidf_max}
\end{figure}

\begin{table}[h!]
\centering
\begin{tabular}{@{}lcccccc@{}}
\textbf{Variable} & \textbf{Coefficient} & \textbf{Std. Err} & \textbf{z-value} & \textbf{P-value} & \textbf{Adjusted $R^2$} \\ \hline
Similarity score & 12.4100 & 0.058 & 212.178 & 0.000 & 0.08915 \\ 
Score percentile &  3.9218   &   0.035   & 111.194 & 0.000  & 0.05973 \\ 
Normalized Score &  0.3905    &  0.002  &  158.421   &   0.000 & 0.05159 \\ 
\end{tabular}
\caption{Logistic regression results for predicting whether user $i$ cites paper $j$ from the similarity score $w_{ij}$, for the \textbf{Max score, TF-IDF} model 
}
\label{tab:logit_model_results_tfidf_max}

\end{table}

\Cref{tab:citation_types_max} shows the similarity measures described above averaged over all users and test papers, conditioned on whether or not a citation occurred between the user and paper (whether the user cites the paper, or vice versa). \Cref{fig:combined_plots_tfidf_max} illustrates the distribution of similarity scores for papers that were cited (orange) versus those that were not cited (blue) by the user in the future. The density for cited papers is higher at higher levels of similarity scores compared to non-cited papers. Table \ref{tab:logit_model_results_tfidf_max} performing logistic regression between the user-paper score and whether the user cites the paper with a bias term (`$\text{User cites paper} \sim 1 + \text{score}$'), for each score measure.  All measures suggest that our text-based recommendation scores are effective for predicting whether the user will cite the given paper. For example, \Cref{fig:percentile_density_plot_max} shows that $\Pr(\text{Highest score bin} | \text{Author cites paper})$ is more than five times $\Pr(\text{Highest score bin} | \text{Author does not cite paper})$.

\FloatBarrier
\paragraph{Mean score, TF-IDF}
\begin{table}[h!]
\centering
\begin{tabular}{@{}lcccc@{}}
\textbf{Citation Type} & \textbf{No/Yes} & \textbf{Similarity Score} & \textbf{Score Percentile} & \textbf{Normalized Score} \\ \hline
\multirow{2}{*}{Author cites paper} & No & 0.019405 &	0.499717	& -0.001799 \\
 & Yes & 0.044938 &	0.783110	& 1.600929\\
\hline
\multirow{2}{*}{Paper cites author} & No & 0.019323	& 0.498790 &	-0.006737 \\
 & Yes & 0.046283	& 0.801534	& 1.631049 \\ %
\end{tabular}
\caption{Average similarity measures in the \textbf{Mean score, TF-IDF} model, conditioned on citation presence for different types of citations.}
\label{tab:citation_types_mean}
\end{table}

\begin{figure}[h!]
   \centering
   \begin{subfigure}[b]{0.45\textwidth}
       \includegraphics[width=\textwidth]{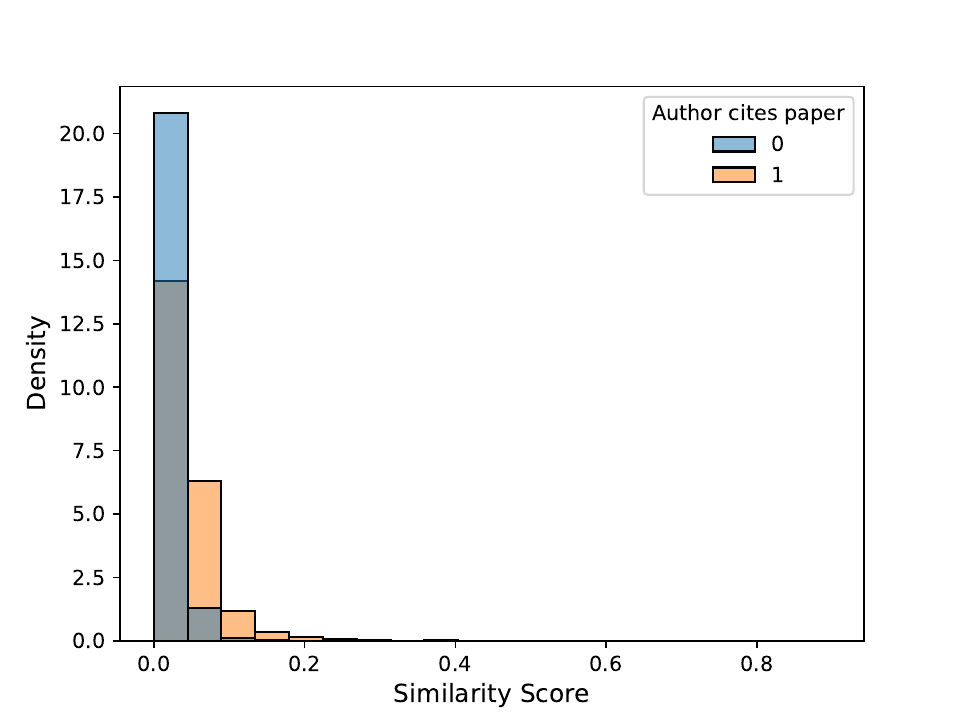}
       \caption{Density plot of similarity scores grouped by citation presence.}
       \label{fig:density_plot}
   \end{subfigure}
   \hfill
   \begin{subfigure}[b]{0.45\textwidth}
       \includegraphics[width=\textwidth]{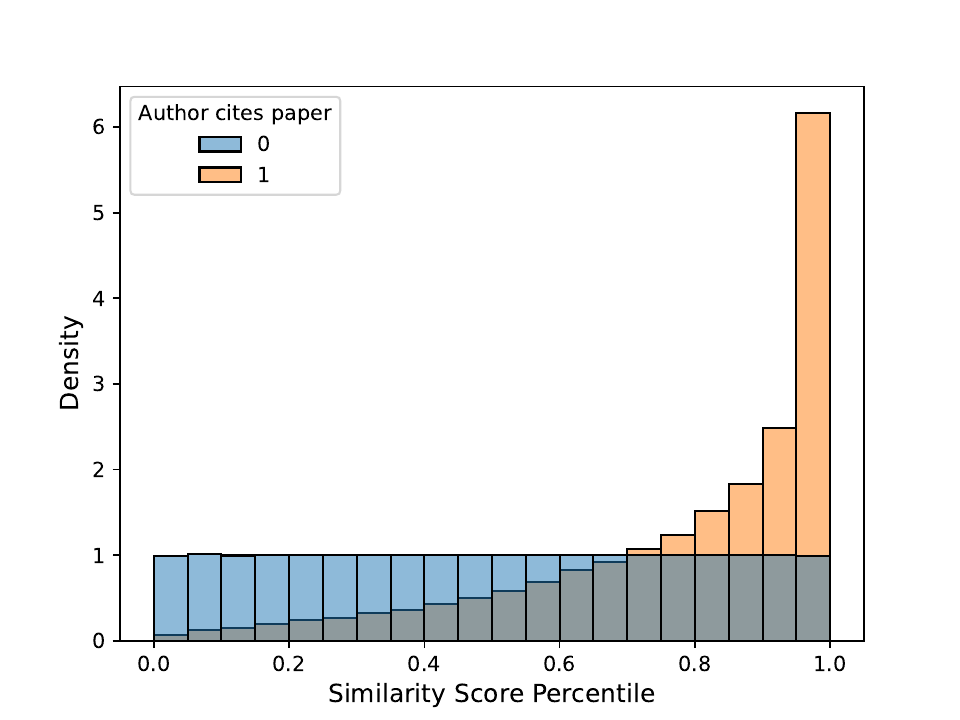}
       \caption{Density plot of similarity score percentiles grouped by citation presence.}
       \label{fig:percentile_density_plot_mean}
   \end{subfigure}
   \caption{$\Pr(\text{score} | \text{Author cites paper})$, the distribution of the score for a user-paper pair, conditional on whether the user cites the paper in the future, for the \textbf{Mean score, TF-IDF} model. }
   \label{fig:combined_plots_mean}
\end{figure}

\begin{table}[h!]
\centering
\begin{tabular}{@{}lcccccc@{}}
\textbf{Variable} & \textbf{Coefficient} & \textbf{Std. Err} & \textbf{z-value} & \textbf{P-value} & \textbf{Adjusted $R^2$} \\ \hline
Similarity score & 20.2122 &     0.131  &  154.835   &   0.000 & 0.04616 \\ 
Score percentile &  4.3535   &   0.037  &  116.516    &  0.000  & 0.06934 \\ 
Normalized Score &  0.4184 &     0.003  &  164.894   &   0.000 & 0.05815 \\ 
\end{tabular}
\caption{Logistic regression results for predicting whether user $i$ cites paper $j$ from the similarity score $w_{ij}$,, for the \textbf{Mean score, TF-IDF} model 
}
\label{tab:logit_model_results_tfidf_mean}

\end{table}
\Cref{tab:citation_types_mean} shows the similarity measures described above averaged over all users and test papers, conditioned on whether or not a citation occurred between the user and paper (whether the author cites the paper, or vice versa). \Cref{fig:combined_plots_mean} illustrates the distribution of similarity scores for papers that were cited (orange) versus those that were not cited (blue) by the user in the future. The density for cited papers is higher at higher levels of similarity scores compared to non-cited papers. Table \ref{tab:logit_model_results_tfidf_mean} performing logistic regression between the user-paper score and whether the user cites the paper with a bias term (`$\text{User cites paper} \sim 1 + \text{score}$'), for each score measure.

\paragraph{Max score, Sentence transformer}

\FloatBarrier

\begin{table}[h!]
\centering
\begin{tabular}{@{}lcccc@{}}
\textbf{Citation Type} & \textbf{No/Yes} & \textbf{Similarity Score} & \textbf{Score Percentile} & \textbf{Normalized Score} \\ \hline
\multirow{2}{*}{Author cites paper} & No & 0.666801 &	0.499821 &	-0.00183 \\
 & Yes & 0.784402	& 0.807105 &	1.30730 \\
\hline
\multirow{2}{*}{Paper cites author} & No & 0.666379	& 0.498851	& -0.005735 \\
 & Yes & 0.794945 &	0.805592	 & 1.251814\\ %
\end{tabular}
\caption{Average similarity measures in the \textbf{Max score, Sentence transformer} model, conditioned on citation presence for different types of citations.}
\label{tab:citation_types_st_max}
\end{table}

\begin{figure}[h!]
   \centering
   \begin{subfigure}[b]{0.45\textwidth}
       \includegraphics[width=\textwidth]{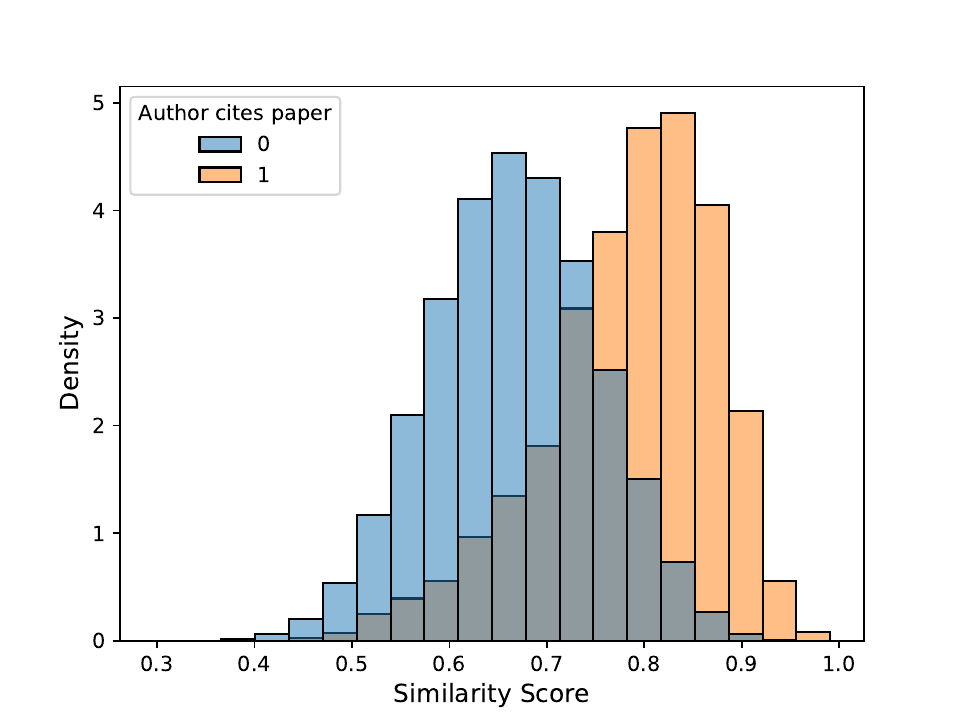}
       \caption{Density plot of similarity scores grouped by citation presence.}
       \label{fig:density_plot_st_max}
   \end{subfigure}
   \hfill
   \begin{subfigure}[b]{0.45\textwidth}
       \includegraphics[width=\textwidth]{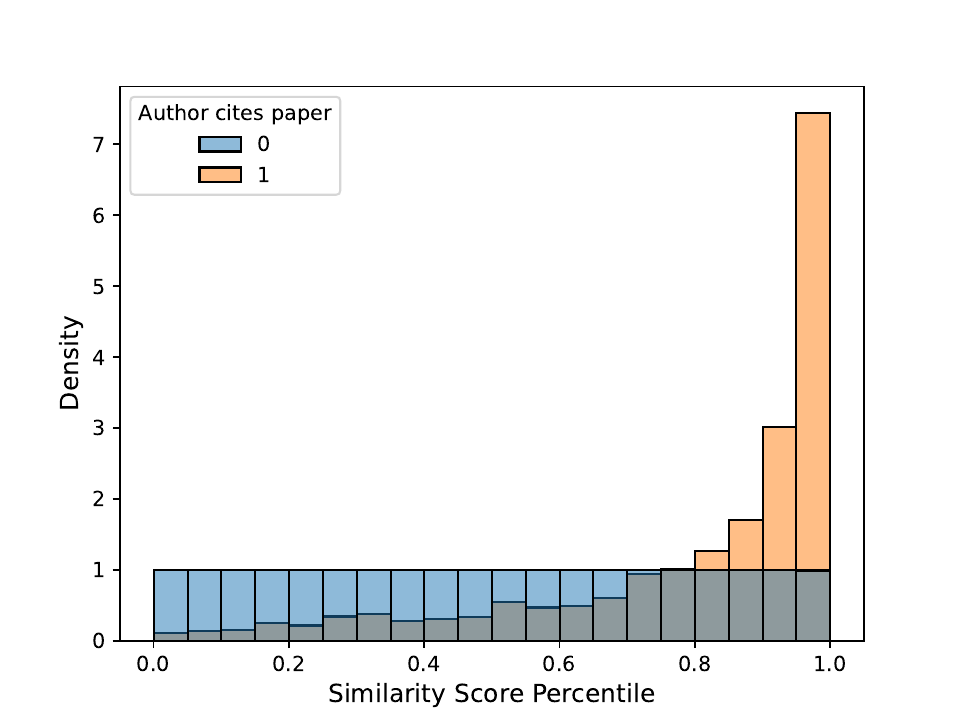}
       \caption{Density plot of similarity score percentiles grouped by citation presence.}
       \label{fig:percentile_density_plot_st_max}
   \end{subfigure}
   \caption{$\Pr(\text{score} | \text{Author cites paper})$, the distribution of the score for a user-paper pair, conditional on whether the user cites the paper in the future, for the \textbf{Max score, Sentence transformer} model. }
   \label{fig:combined_plots_st_max}
\end{figure}

\begin{table}[h!]
\centering
\begin{tabular}{@{}lcccccc@{}}
\textbf{Variable} & \textbf{Coefficient} & \textbf{Std. Err} & \textbf{z-value} & \textbf{P-value} & \textbf{Adjusted $R^2$} \\ \hline
Similarity score & 18.4557  &    0.250   &  73.695    &   0.000 & 0.1347 \\ 
Score percentile &  5.0122  &    0.098  &   51.161  &   0.000  & 0.08604 \\ 
Normalized Score &  1.2332  &    0.017   &  71.426   &   0.000  & 0.1076 \\ 
\end{tabular}
\caption{Logistic regression results for predicting whether user $i$ cites paper $j$ from the similarity score $w_{ij}$,, for the \textbf{Max score, Sentence transformer} model 
}
\label{tab:logit_model_results_st_max}

\end{table}
\Cref{tab:citation_types_st_max} shows the similarity measures described above averaged over all users and test papers, conditioned on whether or not a citation occurred between the user and paper (whether the user cites the paper, or vice versa). \Cref{fig:combined_plots_st_max} illustrates the distribution of similarity scores for papers that were cited (orange) versus those that were not cited (blue) by the user in the future. The density for cited papers is higher at higher levels of similarity scores compared to non-cited papers. Table \ref{tab:logit_model_results_st_max} shows the results of performing logistic regression between the user-paper score and whether the user cites the paper with a bias term (`$\text{User cites paper} \sim 1 + \text{score}$'), for each score measure.

\paragraph{Mean score, Sentence transformer}

\FloatBarrier

\begin{table}[h!]
\centering
\begin{tabular}{@{}lcccc@{}}
\textbf{Citation Type} & \textbf{No/Yes} & \textbf{Similarity Score} & \textbf{Score Percentile} & \textbf{Normalized Score} \\ \hline
\multirow{2}{*}{User cites paper} & No & 0.612678 &	0.499825 &	-0.001767 \\
 & Yes & 0.696291 &	0.803641 &	1.262704 \\
\hline
\multirow{2}{*}{Paper cites user} & No & 0.612397 &	0.498891 &	-0.005418 \\
 & Yes & 0.699609 &	0.796941 &	1.182465 \\ %
\end{tabular}
\caption{Average similarity measures in the \textbf{Mean score, Sentence transformer} model, conditioned on citation presence for different types of citations.}
\label{tab:citation_types_st_mean}
\end{table}

\begin{figure}[h!]
   \centering
   \begin{subfigure}[b]{0.45\textwidth}
       \includegraphics[width=\textwidth]{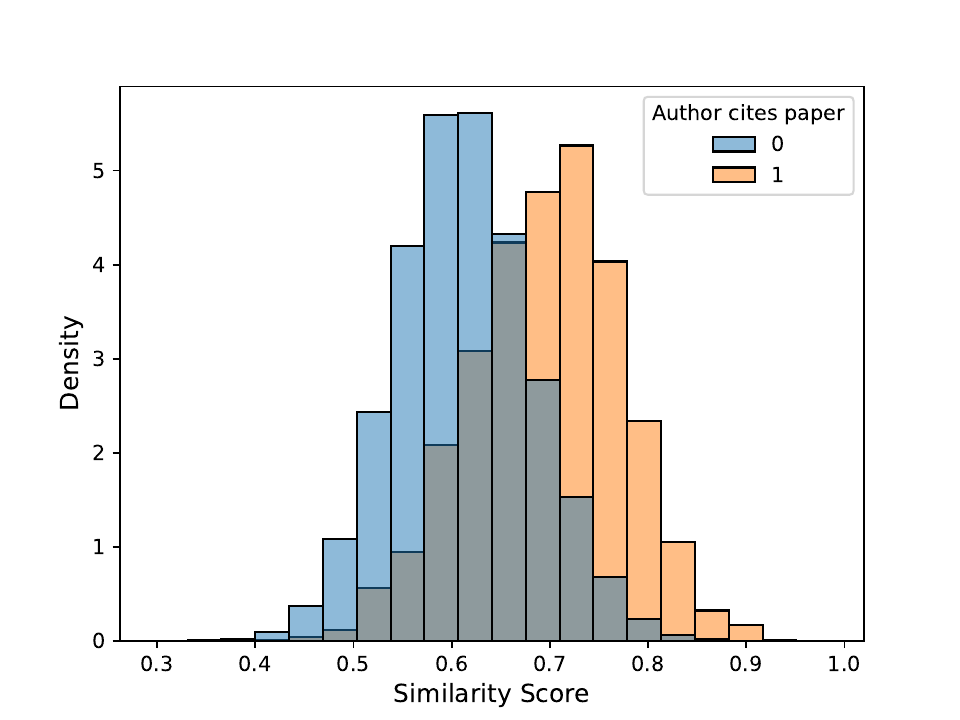}
       \caption{Density plot of similarity scores grouped by citation presence.}
       \label{fig:density_plot_st_mean}
   \end{subfigure}
   \hfill
   \begin{subfigure}[b]{0.45\textwidth}
       \includegraphics[width=\textwidth]{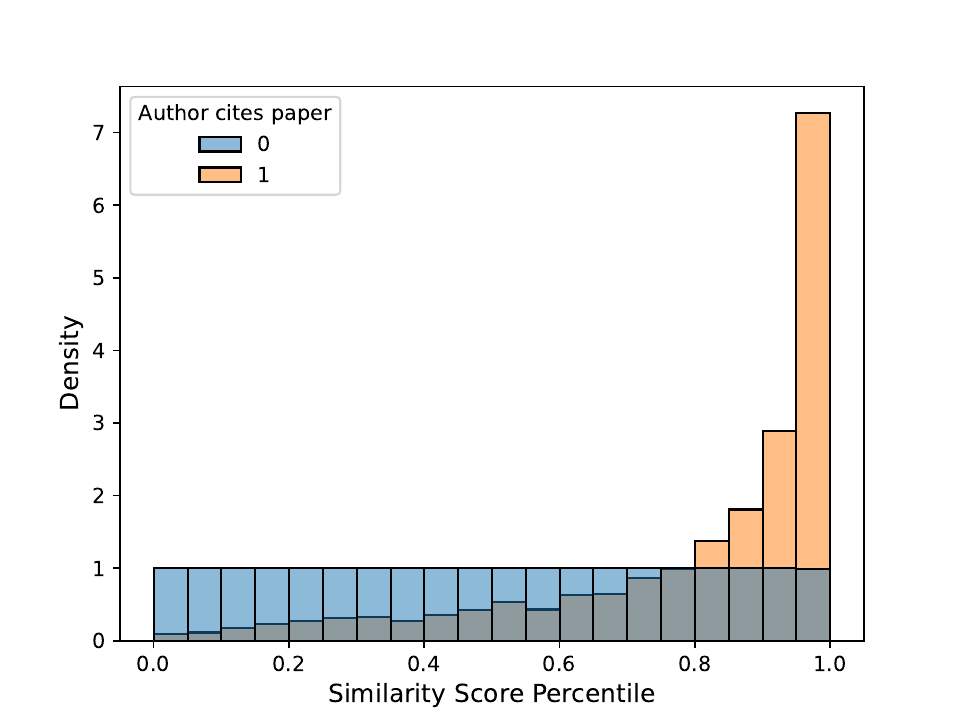}
       \caption{Density plot of similarity score percentiles grouped by citation presence.}
       \label{fig:percentile_density_plot_st_mean}
   \end{subfigure}
   \caption{$\Pr(\text{score} | \text{User cites paper})$, the distribution of the score for a user-paper pair, conditional on whether the user cites the paper in the future, for the \textbf{Mean score, Sentence transformer} model. }
   \label{fig:combined_plots_st_mean}
\end{figure}

\begin{table}[h!]
\centering
\begin{tabular}{@{}lcccccc@{}}
\textbf{Variable} & \textbf{Coefficient} & \textbf{Std. Err} & \textbf{z-value} & \textbf{P-value} & \textbf{Adjusted $R^2$} \\ \hline
Similarity score & 16.2482    &  0.246   &  66.148 & 0.000 & 0.09085 \\ 
Score percentile & 4.9088   &   0.097  &   50.820   &   0.000  & 0.08377 \\ 
Normalized Score &  1.2297   &   0.018  &   69.188  &     0.000   & 0.1026 \\ 
\end{tabular}
\caption{Logistic regression results for predicting whether user $i$ cites paper $j$ from the similarity score $w_{ij}$,, for the \textbf{Mean score, Sentence transformer} model 
}
\label{tab:logit_model_results_st_mean}

\end{table}

\Cref{tab:citation_types_st_mean} shows the similarity measures described above averaged over all users and test papers, conditioned on whether or not a citation occurred between the user and paper (whether the user cites the paper, or vice versa). \Cref{fig:combined_plots_st_mean} illustrates the distribution of similarity scores for papers that were cited (orange) versus those that were not cited (blue) by the user in the future. The density for cited papers is higher at higher levels of similarity scores compared to non-cited papers. Table \ref{tab:logit_model_results_st_mean} shows the results of performing logistic regression between the user-paper score and whether the user cites the paper with a bias term (`$\text{User cites paper} \sim 1 + \text{score}$'), for each score measure.

Altogether, each model performs fairly well; the model using sentence transformer embeddings and the max similarity score among each author's papers appears to perform the best overall.

\section{Proof of Section \ref{sec:technical} results}\label{sec:apptechproofs}

In this section, we provide complete proofs for \Cref{prop:concavetolp} and \Cref{prop:generalsymmandsparse}. In this and all other proofs in the appendix, we ignore the dependence on the utility matrix $w$ in the user and item utility and fairness functions when clear.

The following result will be broadly useful throughout the rest of the proofs.
\begin{lem}\label{lem:ifpos}
    Since $w_{ij} > 0$ for all $j$, $\IF^* > 0$.
\end{lem}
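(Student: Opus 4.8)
The plan is to establish the lower bound $\IF^* > 0$ by exhibiting a single feasible recommendation policy whose minimum normalized item utility is strictly positive. Since $\IF^*(\valuemx) = \max_\rec \IF(\rec, \valuemx)$ is a maximum over all policies in $\Delta_{n-1}^m$, any such witness policy immediately lower-bounds $\IF^*$, so no global optimization is needed.

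The natural candidate is the uniform policy $\rec_{ij} = 1/n$ for every user $i$ and item $j$, which clearly lies in $\Delta_{n-1}^m$. For this policy and any item $j$, the normalized item utility is
\[
\IU_j(\rec, \valuemx) = \frac{\sum_i \rec_{ij} \valuemx_{ij}}{\sum_i \valuemx_{ij}} = \frac{(1/n)\sum_i \valuemx_{ij}}{\sum_i \valuemx_{ij}} = \frac{1}{n},
\]
where the final cancellation is valid precisely because $\valuemx_{ij} > 0$ guarantees the denominator $\sum_i \valuemx_{ij}$ is strictly positive, so the normalization is well defined. Taking the minimum over $j$ gives $\IF(\rec, \valuemx) = 1/n$, and therefore $\IF^*(\valuemx) \geq 1/n > 0$.

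There is no real obstacle here: the only place the hypothesis $\valuemx_{ij} > 0$ is used is to ensure the denominators defining $\IU_j$ are nonzero (so the normalized utilities are meaningful) and hence that the resulting bound is strictly positive. The uniform policy is chosen merely because its per-item normalized utility cancels exactly to $1/n$ for every $j$ simultaneously, sidestepping any case analysis over which item is worst-off; any policy placing strictly positive recommendation mass on every item would serve equally well as a witness.
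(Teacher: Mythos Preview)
Your proof is correct and follows essentially the same approach as the paper: both use the uniform policy $\rec_{ij} = 1/n$ as a witness that $\IF^* > 0$. Your version is slightly more explicit in computing the exact value $\IU_j = 1/n$, whereas the paper simply observes $\IU_j(\rec) > 0$, but the argument is the same.
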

\begin{proof}
    Let $\rec_{ij} =\frac{1}{n}$. Then $\IU_j(\rec) > 0$ for all $j$ so $\IF(\rec) > 0$ and $\IF^* = \max_\rec \IF(\rec) > 0$. 
\end{proof}

First, recall the statement of \Cref{prop:concavetolp}.

\propconcavetolp*

\begin{proof}
    We first need the following result, which states that the feasible set of \ref{eqn:originalconcaveproblem} can be expressed as a linear program. We will prove this result below.

    \begin{restatable}{lem}{lemfeasiblesetlp}\label{lem:feasiblesetlp}
        Let $\mathcal{F}$ denote the feasible set of Problem \ref{eqn:originalconcaveproblem},\begin{equation}\label{eqn:generalfeas}
            \mathcal{F} \triangleq \arg\max_{\rec\in \Delta_{n-1}^m}\IF(\rec) \triangleq \arg\max_{\rec\in \Delta_{n-1}^m} \min_j \IU_j(\rec).
        \end{equation}Then $\mathcal{F}$ is the solution set of the \textit{linear} program \begin{equation}\label{eqn:generallp}
  \begin{split}
        \mathcal{F} = \arg\max_{\rec \in \Delta_{n-1}^m, \lambda} & \quad \lambda\\
        \text{subject to} &\quad \IU_j(\rec) = \lambda \quad \forall j.
        \end{split}
    \end{equation}
    \end{restatable}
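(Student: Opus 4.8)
The plan is to establish the set equality in two stages: first that Problem \eqref{eqn:generallp} really is a linear program, and second that its set of optimal $\rec$ coincides with $\mathcal{F}$. For the first stage I would simply observe that each normalized item utility $\IU_j(\rec) = \frac{\sum_i \rec_{ij}\valuemx_{ij}}{\sum_i \valuemx_{ij}}$ is a \emph{linear} function of $\rec$, because the denominator $\sum_i \valuemx_{ij}$ is a strictly positive constant (using $\valuemx_{ij}>0$). Hence the constraints $\IU_j(\rec)=\lambda$ are linear in $(\rec,\lambda)$, the domain $\Delta_{n-1}^m$ is a polytope, and the objective $\lambda$ is linear, so \eqref{eqn:generallp} is an LP.

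The crux is the following no-slack claim: every maximizer $\rec^*$ of $\min_j \IU_j(\rec)$ satisfies $\IU_j(\rec^*) = \IF^*$ for \emph{all} $j$. I would prove this by contradiction. Suppose some coordinate strictly exceeds the minimum value $\IF^*$, and split the items as $J_+ = \{j : \IU_j(\rec^*) > \IF^*\}$ and $J_{\min} = \{j : \IU_j(\rec^*) = \IF^*\}$, with $J_+ \neq \emptyset$. By \Cref{lem:ifpos} we have $\IF^* > 0$, so each item in $J_+$ receives strictly positive utility and hence positive recommendation mass from some user. The key idea is to construct a single feasible perturbation raising \emph{every} coordinate in $J_{\min}$ at once: for each $j \in J_{\min}$, pick a user $i$ and an item $j' \in J_+$ with $\rec^*_{ij'} > 0$, and shift a small mass $\delta$ from $(i,j')$ to $(i,j)$. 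This preserves each row sum (so $\rec$ stays feasible), strictly increases $\IU_j$ for every $j \in J_{\min}$, and—for $\delta$ small enough over the finitely many moves—keeps each $\IU_{j'}$, $j'\in J_+$, strictly above $\IF^*$. The new policy then has $\min_j \IU_j > \IF^*$, contradicting optimality; hence $J_+ = \emptyset$ and all $\IU_j(\rec^*)$ are equal.

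With the no-slack claim in hand, I would finish by matching optimal values and a double inclusion. Write $v^* = \IF^* = \max_\rec \min_j \IU_j(\rec)$ and let $\lambda^*$ be the optimal value of \eqref{eqn:generallp}. Any LP-feasible $\rec$ has all $\IU_j(\rec)=\lambda$, so $\min_j \IU_j(\rec)=\lambda$ and therefore $\lambda^* \le v^*$; conversely, taking a max-min optimizer $\rec^*$, the no-slack claim shows $(\rec^*, v^*)$ is LP-feasible, giving $\lambda^* \ge v^*$, so $\lambda^* = v^*$. The set equality then follows directly: if $\rec \in \mathcal{F}$ then the no-slack claim gives $\IU_j(\rec) = v^* = \lambda^*$ for all $j$, so $\rec$ solves the LP; and if $\rec$ solves the LP then $\min_j \IU_j(\rec) = \lambda^* = v^*$, so $\rec \in \mathcal{F}$.

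I expect the main obstacle to be the no-slack claim, and in particular verifying that all minimum-achieving coordinates can be raised \emph{simultaneously} by one feasible perturbation — raising them one at a time only yields another optimizer and is not enough. The two points to get right are that the source mass on $J_+$ items is guaranteed positive via \Cref{lem:ifpos} (so the shifts are well-defined), and that a single small $\delta$ across the finitely many shifts keeps the perturbed policy feasible while holding all $J_+$ coordinates above $\IF^*$.
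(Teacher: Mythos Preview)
Your proposal is correct and follows essentially the same approach as the paper: the core no-slack argument (shift mass from an above-minimum item to each minimum-achieving item, using positivity of the source mass and finiteness to pick a uniform $\delta$) matches the paper's contradiction proof, and your explicit verification that \eqref{eqn:generallp} is an LP together with the double-inclusion/optimal-value matching is simply a more careful version of the paper's concluding sentence that ``the additional constraint does not eliminate any solutions.''
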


    Now, condition (iii) on $\mathcal{S}$ implies that there is a unique $\rec^*$ in $\mathcal{S} \cap \mathcal{F}$. Thus \begin{equation}\label{eqn:slp}
  \begin{split}
        \rec^* = \mathcal{F} \cap\mathcal{S} = \arg\max_{\rec \in \Delta_{n-1}^m \cap \mathcal{S}, \lambda} & \quad \lambda\\
        \text{subject to} &\quad \IU_j(\rec) = \lambda \quad \forall j,
        \end{split}.
    \end{equation} We can add $\mathcal{S}$ as a constraint in Problem \ref{eqn:generallp} because we know that there is at least one solution to Problem \ref{eqn:generallp} in $\mathcal{S}$ -- namely, $\rec^*$ -- so this additional constraint will not change the optimal objective value, but will constrain the set of optimal solutions to be in $\mathcal{S}$, as desired.

    By condition (ii), $\rec^*$ is an optimal solution for Problem \ref{eqn:originalconcaveproblem}; by condition (i), $\mathcal{S}$ can be described by a finite set of linear constraints, and thus Problem \ref{eqn:slp} is a linear program; call this $\mathcal{L}$.
\end{proof}

We now show \Cref{lem:feasiblesetlp}.

\lemfeasiblesetlp*

\begin{proof}
    To show this, we prove by contradiction that every optimal policy gives each item $j$ the same normalized utility; that is, we show that if the policy $\rec$ is an optimal solution of Problem \ref{eqn:generalfeas}, then there is some $\lambda$ such that for all $j$, $\IU_j(\rec) = \lambda$, and in this case $\lambda = \min_j \IU_j(\rec)$. This implies that any solution $\rec$ to Problem \ref{eqn:generalfeas} is a feasible solution of \ref{eqn:generallp}. Since these two problems have the same objective, and the additional constraint in Problem \ref{eqn:generallp} does not eliminate any solutions, the two problems have the same solution set.

    To show that all items have the same normalized utility at the optimum, we suppose the contrary and produce a contradiction. Suppose that the policy $\rec$ maximizes $\min_j \IU_j(\rec)$, but there is some $j$ such that $\IU_j(\rec) > \IF^*$. We will show that this means we can construct $\rec'$ such that $\IF(\rec') > \IF^*$.

    First, note that it is impossible that $\rec_{ij} = 0$ for all $i$, otherwise $\IF^* = \IU_j(\rec) = 0$. However, by \Cref{lem:ifpos}, $\IF^* > 0$. Thus for some $i$, $\rec_{ij} > 0$.
    
    Define $\mathcal{J} = \{j': \IU_{j'}(\rec) = \IF^*\}$, and let $j' \in \mathcal{J}$. Since $\rec_{ij} > 0$ and $\sum_k \rec_{ik} = 1$, $\rec_{ij'} < 1$. Pick $\epsilon > 0$ such that  $\rec_{ij}' := \rec_{ij} - \epsilon > 0$, $\rec_{ij'}' := \rec_{ij'} + \epsilon < 1$, and $\IU_j(\rec') > \IF^*$. Since $\IU_j$ is linear in the recommendation policy, this is always possible. 
    
    Now, repeat this process for all $j' \in \mathcal{J}$, to obtain a final recommendation policy $\rec'$ in which $\IU_\ell(\rec') > \IF^*$ for all $\ell$ and thus $\IF(\rec') > \IF^*$. This contradicts the optimality of $\IF^*$.
\end{proof}

Now, we prove \Cref{prop:generalsymmandsparse}.

\propgeneralsymmandsparse*

\begin{proof} We will first prove that $\mathcal{S}_\text{symm}$ satisfies conditions (i) and (ii), and then show sparsity.

\textbf{Part 1.} Recall that for fixed $w$, \[\mathcal{S}_{\text{symm}} = \{\rec: \rec_i = \rec_{i'} \text{ if } w_i = w_{i'}\} = \bigcap_{i, i': w_i = w_{i'}} \{\rec: \rec_i = \rec_{i'}\}\] which is a finite set of linear constraints, so condition (i) holds.

Now, we show condition (ii), that always there is some $\phi^* \in \mathcal{S}_\text{symm}$ that is an optimal solution to Problem \ref{eqn:originalconcaveproblem}. Let $\rec$ be an arbitrary optimal solution to Problem \ref{eqn:originalconcaveproblem}. Then $\rec$ is also feasible: $\IF(\rec) = \IF^*$.

We first define a piece of convenient notation: if two users $i, i'$ share the same utility vector $\valuemx_i = \valuemx_{i'}$, we say that they have the same type $\tau(i) = \tau(i')$. If $\tau(i) = \tau(i') := \tau$, we write $\valuemx_{ij} = \valuemx_{i'j} := \valuemx_{\tau j}$

Let \[\phi_{ij} = \frac{1}{|\{i': \tau(i') = \tau(i)\}|}\sum_{i': \tau(i') = \tau(i)} \rec_{i'j}\] for all $i,j$. We must show that:\begin{itemize}
        \item $\phi_{ij} = \phi_{i'j}$ for all $i, i'$ where $\tau(i) = \tau(i')$,
        \item $\phi$ is a valid set of recommendation probabilities, 
        \item $\phi$ is feasible: $\IF(\phi) = \IF^*$, and
        \item $\phi$ is optimal: $\UF(\phi) \geq\UF(\rec) = \UF^*$.
    \end{itemize}
Clearly $\phi_{ij} = \phi_{i'j}$ for all $i, i'$ where $\tau(i) = \tau(i')$, since $\phi_{ij}$ depends only on $i$ through $\tau(i)$.

    For each $i$, \[\sum_j \phi_{ij} = \frac{1}{|\{i': \tau(i') = \tau(i)\}|}\sum_{i': \tau(i') = \tau(i)} \sum_j \rec_{i'j} = \frac{1}{|\{i': \tau(i') = \tau(i)\}|}\sum_{i': \tau(i') = \tau(i)} 1 = 1.\] Moreover, for all $i,j$, $0 \leq \min_{r,s} \rec_{rs} \leq \phi_{ij} \leq \max_{r,s} \rec_{rs} \leq 1.$ Thus $\phi \in \Delta_{n-1}^m$.

    To show that $\IF(\phi) = \IF^*$, notice that intuitively, when we move from $\rec$ to $\phi$, we redistribute the probability mass assigned to an item $j$ among users of the same type. Since all of these users generate the same value for item $j$, there should be no change in item $j$'s expected utility. Formally, \begin{align*}\IF(\phi) &= \min_j \frac{\sum_\tau w_{\tau j} \sum_{i: \tau(i) = \tau} \phi_{ij}}{\sum_\tau w_{\tau j} \sum_{i: \tau(i) = \tau} 1}\\
    &= \min_j \frac{\sum_\tau w_{\tau j} \sum_{i: \tau(i) = \tau} \rec_{ij}}{\sum_\tau w_{\tau j} \sum_{i: \tau(i) = \tau} 1}\\
        &= \IF(\rec)\\
        &= \IF^*. \tag{$\rec$ is optimal}
    \end{align*}
    Finally, to show that $\UF(\phi) \geq\UF(\rec)$, we observe that if users with the same values are given different recommendation probabilities, some user will be worst off and have expected value lower than the average expected value over all users in that type. By averaging, we bring all users' expected values to the average expected value across the type, increasing the expected value for the previously worst off user. Formally, for each type $\tau$, let $i(\tau) \in\arg\min_{i: \tau(i) = \tau} \UU_i(\rec)$. Then \begin{align*}
            &\UU_{i(\tau)}(\rec) \leq \UU_{i'}(\rec)\text{ for all } i': \tau(i') = \tau \\
            &\implies \frac{\sum_{j=1}^n \valuemx_{\tau j} \rec_{i(\tau)j}}{\max_j \valuevec_j} \leq \frac{\sum_{j=1}^n \valuemx_{\tau j} \rec_{i'j}}{\max_j \valuevec_j}\text{ for all } i': \tau(i') = \tau \\
            & \implies |\{i': \tau(i') = \tau\}|\frac{\sum_{j=1}^n \valuemx_{\tau j}\rec_{i(\tau)j}}{\max_j \valuevec_j} \leq \sum_{i': \tau(i') = \tau} \frac{\sum_{j=1}^n\valuemx_{\tau j}\rec_{i'j}}{\max_j \valuevec_j}\\
            &\implies \frac{\sum_{j=1}^n \valuemx_{\tau j}\rec_{i(\tau)j}}{\max_j \valuevec_j} \leq \frac{\sum_{j=1}^n \valuemx_{\tau j} \phi_{i'j}}{\max_j \valuevec_j} \text{ for all $i': \tau(i') = \tau$} \tag{definition of $\phi$}\\
            &\implies \min_{i: \tau(i) = \tau}\UU_i(\rec) \leq  \min_{i: \tau(i) = \tau}\UU_i(\phi)        \end{align*}
        Then \[\UF(\rec) = \min_\tau \min_{i: \tau(i) = \tau}\UU_i(\rec) \leq \min_\tau \min_{i: \tau(i) = \tau}\UU_i(\phi) = \UF(\phi)\] and we have shown that $\UF(\phi) \geq \UF(\rec)$.

\textbf{Part 2.} Now we show the sparsity of solutions to the linear program $\mathcal{L}$ in \Cref{prop:concavetolp}, that is, Problem \ref{eqn:slp} defined above in the proof of \Cref{prop:concavetolp}.
    Given that $\rec \in \mathcal{S}_{\text{symm}}$, we can rewrite the linear program as \begin{equation*}
        \begin{split}
        \phi = \arg\max_{\rec \in \Delta_{n-1}^K, \lambda} & \quad \lambda\\
        \text{subject to} &\quad \IU_j(\rec) = \lambda \quad \forall j,
        \end{split}
    \end{equation*} which is the same as \begin{equation*}
        \begin{split}
        \phi = \arg\max_{\rec, \lambda} & \quad \lambda\\
        \text{subject to} &\quad \IU_j(\rec) = \lambda \quad \forall j,\\
        &\quad \sum_j \rec_{kj} = 1 \quad \forall k,\\
        &\quad \rec_{kj} \geq 0 \quad \forall k,j,
        \end{split}
    \end{equation*} where $\IU_j(\rec)$ is shorthand for $\IU_j(\rec')$ where $\rec' \in \Delta_{n-1}^m$ is $\rec \in \Delta_{n-1}^K$ expressed in terms of users rather than types. This is a linear program with $nK + 1$ variables, and $n+K$ equality constraints. Then for any basic feasible solution $\rec$ of Problem \ref{eqn:slp}, there must be $nK+1$ linearly independent active constraints (see Definition 2.9 in \citet{bertsimastxt}). This means that of the $nK$ constraints $\{\rec_{kj} \geq 0\}_{k,j}$, at least $nK+1 - (n+K)$ must be binding. Equivalently, there can be at most $nK - (nK+1 - (n+K)) = n+K-1$ values of $k,j$ such that $\rec_{kj} > 0$. 

    If $\rec$ is also an optimal solution, then by Lemma \ref{lem:ifpos} there must be some $k$ such that $\rec_{kj} > 0$ for each $j$, which takes up $n$ non-zero values. Only $K-1$ non-zero values remain to be assigned, so at most $K-1$ items $j$ have $\rec_{kj}, \rec_{k'j} > 0$ for two types $k,k'$.
\end{proof}

\section{Proof of \Cref{thm:pofdecreasing}}
\label{sec:appproof1}

Recall \Cref{thm:pofdecreasing} and its setting. We let $\valuevec_1 > \valuevec_2 > ... > \valuevec_n$ and suppose that there are two user types. A user $i$ either has value $\valuemx_{ij} = \valuevec_j$ for all $j$ or $\valuemx_{ij} = \valuevec_{n-j+1}$ for all $j$, corresponding to types 1 and 2 respectively. That is, the two types have opposite preferences. We let $\alpha$ be the proportion of type 1 users in the population, out of a fixed population of $m$ users. 

\thmpofdecreasing*

\subsection{Main proof}
\begin{proof} The proof will proceed in two parts. First, we will reduce the problem to finding a solution for a linear program using the framework from \Cref{prop:concavetolp}. Then, we will find a solution to the linear program. In this proof, we will use a series of intermediate results, which we prove in the following section. The first such result is \Cref{lem:optutilconstant}.

\begin{restatable}{lem}{lemoptutilconstant}
\label{lem:optutilconstant}
    For all $0 < \alpha < 1$, $\UF^*( \alpha) = 1$.
\end{restatable}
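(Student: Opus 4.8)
The plan is to recognize that this lemma is simply the special case, for the two-opposing-types utility family parametrized by $\alpha$, of the general fact already noted in \Cref{sec:technical}: the \emph{unconstrained} optimal user-fair utility equals $1$. Recall that $\UF^*(\alpha) = \max_{\rec \in \Delta_{n-1}^m} \UF(\rec, w) = \max_\rec \min_i \UU_i(\rec, w)$, with no item-fairness constraint imposed. I would establish the two matching inequalities $\UF^*(\alpha) \le 1$ and $\UF^*(\alpha) \ge 1$.

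For the upper bound, I would note that for any policy $\rec$ and any user $i$, the numerator $\sum_j \rec_{ij}\valuemx_{ij}$ is a convex combination of the values $\{\valuemx_{ij}\}_j$, since the weights $\rec_{ij}$ are nonnegative and sum to one; hence it is at most $\max_j \valuemx_{ij}$. Dividing by $\max_j \valuemx_{ij} > 0$ gives $\UU_i(\rec, w) \le 1$ for every $i$, so $\UF(\rec, w) = \min_i \UU_i(\rec, w) \le 1$, and therefore $\UF^*(\alpha) \le 1$.

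For the lower bound, I would exhibit a single policy attaining value $1$. A type-$1$ user's favorite is item $1$, with value $\valuemx_{i1} = \valuevec_1 = \max_j \valuemx_{ij}$; a type-$2$ user's favorite is item $n$, since $\valuemx_{in} = \valuevec_{n-n+1} = \valuevec_1 = \max_j \valuemx_{ij}$. Recommending each user their favorite item deterministically then makes the numerator and denominator of $\UU_i$ coincide, so $\UU_i(\rec, w) = 1$ for every user and hence $\UF(\rec, w) = 1$. Combining the two bounds yields $\UF^*(\alpha) = 1$.

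There is no genuine obstacle here: the argument never uses the value of $\alpha$, so it holds for every $\alpha \in (0,1)$ (indeed for any utility matrix), and the only point requiring momentary care is identifying that a type-$2$ user's favorite is item $n$ rather than item $1$, which follows from the reversal $\valuemx_{ij} = \valuevec_{n-j+1}$. The role of the lemma is precisely to pin down the denominator $\UF^*(\alpha)$ appearing in $\pof(\alpha)$, so that all the substantive effort can be concentrated on computing the constrained quantity $\UF^*(\gamma = 1, \alpha)$.
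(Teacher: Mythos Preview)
Your proposal is correct and essentially the same as the paper's own proof: both observe that, absent item-fairness constraints, one can optimize each user's recommendation independently by placing all mass on $\arg\max_j \valuemx_{ij}$, giving $\UU_i = 1$ for every user. Your explicit separation into the convex-combination upper bound and the deterministic-policy lower bound is slightly more detailed than the paper's presentation, but the argument is identical in substance.
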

    By Lemma \ref{lem:optutilconstant}, $\UF^*( \alpha) := \UF^*$ is constant in $\alpha$, so $\pof(\alpha) = \nicefrac{(\UF^* - \UF^*(1, \alpha))}{\UF^*}$ and it is enough to show $\UF^*(1,\alpha)$ increases in $\alpha$.

    Recall the definition of $\mathcal{S}_{\text{symm}}$: \[\{\rec: \rec_i = \rec_{i'} \text{ if } w_i = w_{i'}\}.\] By \Cref{prop:generalsymmandsparse}, we know that $\mathcal{S}_{\text{symm}}$ is a satisfies conditions (i) and (ii) on $\mathcal{S}$ for \Cref{prop:concavetolp}. Thus by \Cref{prop:concavetolp}, it is sufficient to find $\phi$ such that \begin{equation*}
    \begin{split}
        \phi \in \arg\max_{\rec \in \mathcal{S}\cap \Delta_{n-1}^m, \lambda} & \quad \lambda\\
        \text{subject to} &\quad \IU_j(\rec) = \lambda \quad \forall j,
        \end{split}
    \end{equation*} and show that this $\phi$ is unique, where the linear program is derived from the proof of \Cref{prop:concavetolp}.
    
    Since each user in our population only has one of two possible value vectors, $w_i = (v_1, ..., v_n)$ or $w_i = (v_n, ...., v_1)$, we can identify $\mathcal{S}_{\text{symm}}$ with $\Delta_{n-1}^2$, and identify $\rec \in \mathcal{S}_{\text{symm}}$ with a pair $x,y$. Expanding the expression for $\IU_j(\rec)$ and making explicit the dependence on $\alpha$, we see that \begin{align*}\IU_j(\rec, \alpha) &= \frac{\sum_i \rec_{ij} \valuemx_{ij}(\alpha)}{\sum_i \valuemx_{ij}(\alpha)}\\
    &= \frac{\sum_{i: \tau(i) = 1} \rec_{ij} \valuevec_j + \sum_{i: \tau(i) = 2} \rec_{ij} \valuevec_{n-j+1}}{\sum_{i: \tau(i) = 1} \valuevec_j + \sum_{i: \tau(i) = 2} \valuevec_{n-j+1}}\\
&= \frac{m\alpha x_j \valuevec_j + m(1-\alpha) y_j \valuevec_{n-j+1}}{m\alpha \valuevec_j + m(1-\alpha) \valuevec_{n-j+1}}\\
&= q_j(\alpha) x_j + (1-q_j(\alpha)) y_j
\end{align*} where \[q_j(\alpha) := \frac{\alpha \valuevec_j}{\alpha \valuevec_j + (1-\alpha) \valuevec_{n-j+1}}.\]
Note that for all $j$ and $0 < \alpha < 1$, we have $0 < q_j, 1-q_j < 1$. We can thus simplify \begin{equation}\label{eqn:simpleprobeqiurs}
    \begin{split}
        \phi = \arg\max_{x,y \in \Delta_{n-1}^2, \lambda} & \quad \lambda\\
        \text{subject to} &\quad q_j(\alpha) x_j + (1-q_j(\alpha))y_j = \lambda \quad \forall j,
        \end{split}
    \end{equation}

First, we will show uniqueness and find a closed form to Problem \ref{eqn:simpleprobeqiurs} in Lemmas \ref{lem:sparse} and \ref{lem:closedforms}. Then, we will use this closed forms to show that $\UF^*(1, \alpha)$ is increasing for $\alpha < 1/2$. By symmetry, this implies that $\UF^*(1, \alpha)$ must be decreasing for $\alpha > 1/2$ and we will be done.

\begin{restatable}{lem}{lemsparse}\label{lem:sparse} Problem \ref{eqn:simpleprobeqiurs} has a unique solution $(x, y, \lambda)$. Moreover, the solution is sparse: there is some $1 \leq t \leq n$ such that for $j > t$, $x_j = 0$, and for $j < t$, $y_j = 0$.
\end{restatable}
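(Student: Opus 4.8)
The plan is to reduce Problem \ref{eqn:simpleprobeqiurs} to a per-item scalar linear program and then read off both the threshold structure and uniqueness from the strict monotonicity of the coefficients $q_j(\alpha)$. The first thing I would record is that $q_j(\alpha)$ is strictly decreasing in $j$: writing $q_j(\alpha) = \bigl(1 + \tfrac{1-\alpha}{\alpha}\tfrac{\valuevec_{n-j+1}}{\valuevec_j}\bigr)^{-1}$, the ratio $\valuevec_{n-j+1}/\valuevec_j$ strictly increases in $j$ because $\valuevec$ is strictly decreasing. This is the quantitative form of the intuition that type-1 users generate item utility efficiently for low-index items (large $q_j$) and type-2 users for high-index items (small $q_j$), which is what should force each type onto its ``own'' side.

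Next I would reparametrize. Since $\lambda = \IF^* > 0$ by \Cref{lem:ifpos}, each item receives positive mass from at least one type, and for each $j$ the segment $q_j x_j + (1-q_j) y_j = \lambda$ with $x_j, y_j \geq 0$ is parametrized by $s_j \in [0,1]$ via $x_j = \tfrac{\lambda}{q_j}(1-s_j)$ and $y_j = \tfrac{\lambda}{1-q_j}s_j$. The two simplex constraints then read $\lambda\sum_j \tfrac{1-s_j}{q_j} = 1 = \lambda \sum_j \tfrac{s_j}{1-q_j}$, so maximizing $\lambda$ is equivalent to
\[\min_{s\in[0,1]^n} \ \sum_j \frac{1-s_j}{q_j} \qquad \text{subject to} \qquad \sum_j \frac{1-s_j}{q_j} = \sum_j \frac{s_j}{1-q_j},\]
a linear program in $s$ with a single equality constraint. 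With multiplier $\mu$ on that constraint, the reduced cost of $s_j$ is a positive multiple of $q_j - (1-\mu)$, so any optimizer has $s_j = 0$ (hence $y_j = 0$) whenever $q_j > 1-\mu$ and $s_j = 1$ (hence $x_j = 0$) whenever $q_j < 1-\mu$. Because the $q_j$ are strictly decreasing and therefore distinct, $\{q_j > 1-\mu\}$ is a prefix, $\{q_j < 1-\mu\}$ is a suffix, and at most one index $t$ satisfies $q_t = 1-\mu$. This recovers the ``at most one shared item'' conclusion of \Cref{prop:generalsymmandsparse} and yields exactly the claimed structure: $y_j = 0$ for $j < t$ and $x_j = 0$ for $j > t$.

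For uniqueness, once the threshold $t$ is fixed the off-threshold coordinates are pinned ($x_j = \lambda/q_j$ for $j<t$, $y_j = \lambda/(1-q_j)$ for $j>t$), and the two simplex equations together with the equalization at $t$ give
\[\frac{1}{\lambda} = 1 + q_t \sum_{j<t}\frac{1}{q_j} + (1-q_t)\sum_{j>t}\frac{1}{1-q_j},\]
so each admissible threshold determines a single candidate solution. I expect the \emph{main obstacle} to be isolating one optimal threshold, since the program is linear and uniqueness is not automatic. I would handle this by computing the forward difference of $R(t) := 1/\lambda(t)$ and establishing the identity $R(t+1)-R(t) = (q_{t+1}-q_t)\,A(t)$, where $A(t) = \sum_{j\le t}\tfrac{1}{q_j} - \sum_{j>t}\tfrac{1}{1-q_j}$ is strictly increasing in $t$ while $q_{t+1}-q_t<0$; combined with the feasibility window $x_t,y_t\ge 0$, this makes $R$ strictly monotone on the feasible range of thresholds and forces a unique optimal $t$ (the only ties being two notations for the same boundary point where $x_t$ or $y_t$ vanishes). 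Hence $(x,y,\lambda)$ is unique. This monotonicity bookkeeping, rather than the structural threshold claim itself, is the delicate part, and it relies essentially on the strictness of $q_{j+1} < q_j$.
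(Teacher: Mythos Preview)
Your proposal is correct and takes a genuinely different route from the paper.

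The paper proceeds in three parts: first, the counting argument on basic feasible solutions from \Cref{prop:generalsymmandsparse} shows that at least $n-1$ of the $\{x_j,y_j\}$ vanish; second, an explicit exchange argument (shifting probability mass from one item to another and exhibiting a strictly improving policy) shows that the zeros of $x$ form a suffix and those of $y$ a prefix; third, a direct case analysis comparing two candidate threshold solutions with thresholds $t$ and $t'$ shows they must coincide.

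Your reparametrization by $s_j\in[0,1]$ collapses the problem to a one-equality box LP, from which the threshold structure drops out in a single step via complementary slackness rather than via exchange. This is cleaner, and it immediately delivers the ``at most one shared item'' conclusion without invoking \Cref{prop:generalsymmandsparse}. The forward-difference identity $R(t{+}1)-R(t)=(q_{t+1}-q_t)\,A(t)$ is correct, but the uniqueness conclusion really comes from the feasibility window rather than from monotonicity of $R$: the conditions $x_t,y_t\ge 0$ are equivalent to $A(t{-}1)\le 0\le A(t)$, and since $A$ is strictly increasing this pins down $t$ to a single index (or two adjacent indices giving the same point when $A(t_0)=0$). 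Your phrasing ``$R$ strictly monotone on the feasible range'' is slightly off, since on a two-element feasible window $R$ is constant; the logic underneath is nevertheless sound, and you already flag the boundary degeneracy.

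What each approach buys: your duality route is shorter, avoids the somewhat delicate improving-direction constructions, and yields the closed form for $\lambda$ essentially for free; the paper's exchange arguments are more elementary (no appeal to LP duality) and the technique is reused later for the three-type setting in \Cref{thm:misestimation}. Both hinge on the strict ordering $q_1>\cdots>q_n$, which is exactly \Cref{lem:increasing}.
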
 

Thus, to solve our original problem (Problem \ref{eqn:originalconcaveproblem}) we merely need to evaluate the user fairness $\UF$ of $x, y$ solving Problem \ref{eqn:simpleprobeqiurs} and show that this is increasing. We will do this in the remainder of the proof. 

For each $\alpha$, let $x,y$ be the solution to the corresponding Problem \ref{eqn:simpleprobeqiurs}, and define\footnote{While there is a single optimal solution $x,y$ to Problem \ref{eqn:simpleprobeqiurs} for each $\alpha$, there may be two $t$ that satisfy Lemma \ref{lem:sparse}. In particular, this occurs when for each $j$, $x_j = 0$ or $y_j =0$. By contrast, $t(\alpha)$ is unique.} \[t(\alpha) := \max\{j: x_j(\alpha) > 0\}.\]\begin{restatable}{lem}{lemclosedforms}\label{lem:closedforms}
    Let $0 < \alpha < 1$, $t:=t(\alpha)$. Define \[L_t = \sum_{j< t} \frac{1}{q_j}, \quad R_t = \sum_{j > t} \frac{1}{1 - q_j}.\] Then \[\IF^* =\frac{1}{1 + q_t L_t + (1-q_t) R_t},\] and \[x_j = \begin{cases}
        \frac{\IF^*}{q_j}, & j < t\\
        1 - \frac{L_t}{1 + q_t L_t + (1-q_t) R_t}, & j = t\\
        0, & j > t
    \end{cases}, \quad y_j = \begin{cases}
        0, & j < t\\
        1 - \frac{R_t}{1 + q_t L_t + (1-q_t) R_t}, & j = t\\
        \frac{\IF^*}{1 - q_j}, & j > t
    \end{cases}.\]
\end{restatable}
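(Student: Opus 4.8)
The plan is to take the sparse structure handed to us by \Cref{lem:sparse} and simply solve the resulting linear system for $(x,y,\lambda)$ explicitly. Fix $0 < \alpha < 1$ and let $t := t(\alpha) = \max\{j : x_j(\alpha) > 0\}$, so that by \Cref{lem:sparse} we have $y_j = 0$ for $j < t$ and $x_j = 0$ for $j > t$. The first observation is that at the optimum the common value $\lambda$ in Problem \eqref{eqn:simpleprobeqiurs} equals $\IF^*$: by \Cref{lem:feasiblesetlp} every item attains the same normalized utility at a solution of the item-fair problem, and that shared value is exactly $\IF^*$. So it suffices to determine $\lambda$ and the entries of $x,y$.

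The key simplification is that for every index $j \neq t$, only one of the two variables survives in the item constraint $q_j x_j + (1-q_j) y_j = \lambda$. For $j < t$ we have $y_j = 0$, so $q_j x_j = \lambda$ and hence $x_j = \IF^*/q_j$; for $j > t$ we have $x_j = 0$, so $(1-q_j) y_j = \lambda$ and hence $y_j = \IF^*/(1-q_j)$. (These divisions are legitimate because, as already noted after the definition of $q_j$, we have $0 < q_j, 1-q_j < 1$ for all $j$ and $0 < \alpha < 1$.) This already pins down every coordinate except the two boundary entries $x_t$ and $y_t$, and it recovers the $j<t$ and $j>t$ cases of both displayed formulas.

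To handle the boundary item, I would next invoke the two simplex constraints $\sum_j x_j = 1$ and $\sum_j y_j = 1$. Since $x_j = 0$ for $j > t$, the first gives $x_t = 1 - \lambda \sum_{j<t} 1/q_j = 1 - \lambda L_t$; symmetrically, since $y_j = 0$ for $j < t$, the second gives $y_t = 1 - \lambda \sum_{j>t} 1/(1-q_j) = 1 - \lambda R_t$. Substituting these two expressions into the remaining item constraint at $j = t$, namely $q_t x_t + (1-q_t) y_t = \lambda$, collapses to the single scalar equation $1 - \lambda\bigl(q_t L_t + (1-q_t) R_t\bigr) = \lambda$, whose solution is precisely $\lambda = \IF^* = \bigl(1 + q_t L_t + (1-q_t)R_t\bigr)^{-1}$. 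Back-substituting this value of $\lambda$ into the expressions $x_t = 1 - \lambda L_t$ and $y_t = 1 - \lambda R_t$ yields the claimed closed forms for the $j=t$ entries, completing the derivation.

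The computation is essentially mechanical given \Cref{lem:sparse}, so the only real care needed is at the threshold item $t$. In particular I would make sure the argument uses the specific index $t = t(\alpha)$ rather than an arbitrary threshold from \Cref{lem:sparse} (recall the footnote that two thresholds can coexist exactly when some item has both $x_j = 0$ and $y_j = 0$), since the closed forms for $x_t, y_t$ are stated with respect to this canonical $t$; and I would note in passing that the formulas produce a genuine feasible point (nonnegative entries summing to one), which is what makes this solved system the actual optimum. I do not expect a substantive obstacle here: once the sparsity pattern is available, the statement reduces to solving $2n$ linear equations in $2n+1$ unknowns that decouple cleanly into single-variable equations away from $t$ plus a $2\times 2$ system at $t$.
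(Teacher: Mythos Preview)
Your proposal is correct and follows essentially the same route as the paper: both read off $x_j$ and $y_j$ for $j\neq t$ from the single-variable item constraints, then use the two simplex equalities together with the $j=t$ constraint to solve for $x_t$, $y_t$, and $\IF^*$. Your ordering (first writing $x_t = 1-\lambda L_t$, $y_t = 1-\lambda R_t$, then plugging into the $j=t$ equation) is arguably a touch cleaner than the paper's elimination via $(1-x_t)/(1-y_t)=L_t/R_t$, but the substance is identical.
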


Note that for $t \neq t(\alpha)$, $x, y$ above may not be valid probability vectors.

\begin{restatable}{lem}{lemtypetwohasminiur}\label{lem:type2hasminur}
    For $0 < \alpha < 1/2$, if $\rec^*$ is the optimal recommendation policy, then $\UU_i(\rec^*, \alpha) \geq \UU_{i'}(\rec^*, \alpha)$ for $\tau(i) = 1$, $\tau(i') = 2$.
\end{restatable}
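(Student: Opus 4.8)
The plan is to translate the claim into a single scalar inequality and then attack it with the closed form from \Cref{lem:closedforms}. First I would note that both types share the same normalizing constant, since $\max_j \valuemx_{ij} = \valuevec_1$ for a type-$1$ user and also for a type-$2$ user (whose best item is item $n$, worth $\valuevec_1$). Hence, writing the unique optimal $\mathcal{S}_{\text{symm}}$-solution from \eqref{eqn:simpleprobeqiurs} as $(x,y)$, the desired inequality $\UU_i(\rec^*,\alpha)\ge \UU_{i'}(\rec^*,\alpha)$ for $\tau(i)=1,\tau(i')=2$ is exactly $U_1 \ge U_2$, where $U_1 := \sum_j x_j \valuevec_j$ and $U_2 := \sum_j y_j \valuevec_{n-j+1}$ are the unnormalized type utilities.

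Next I would exploit that at the maximal item-fair solution every item attains the same normalized utility $\IF^*$ (this is the structure underlying \eqref{eqn:simpleprobeqiurs} and \Cref{lem:closedforms}). Writing out $\IU_j(\rec,\alpha)=\IF^*$ gives, for each $j$,
\[
\alpha x_j \valuevec_j + (1-\alpha) y_j \valuevec_{n-j+1} = \IF^*\bigl(\alpha \valuevec_j + (1-\alpha)\valuevec_{n-j+1}\bigr).
\]
Summing over $j$ and using that $\{\valuevec_{n-j+1}\}_j$ is a permutation of $\{\valuevec_j\}_j$ yields the clean identity $\alpha U_1 + (1-\alpha)U_2 = \IF^* \sum_j \valuevec_j$. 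Since $U_1 - \bigl(\alpha U_1+(1-\alpha)U_2\bigr) = (1-\alpha)(U_1-U_2)$ and $1-\alpha>0$, the sign of $U_1-U_2$ equals the sign of $U_1 - \IF^*\sum_j \valuevec_j = \sum_j (x_j - \IF^*)\valuevec_j$. So the whole lemma reduces to showing $\sum_j (x_j-\IF^*)\valuevec_j \ge 0$.

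To prove that, I would plug in the sparse closed form (\Cref{lem:sparse}, \Cref{lem:closedforms}): $x$ is supported on $\{1,\dots,t\}$ with $x_j=\IF^*/q_j \ge \IF^*$ for $j<t$ (as $q_j\le 1$), $x_t$ the residual mass, and $x_j=0$ for $j>t$. Applying summation by parts against the decreasing weights $\valuevec_1>\cdots>\valuevec_n$ turns the target into $\sum_{k} P_k(\valuevec_k-\valuevec_{k+1})$ with $\valuevec_{n+1}:=0$ and $P_k=\sum_{j\le k}x_j - k\,\IF^*$; here the head partial sums ($k<t$, where $P_k=\IF^*\sum_{j\le k}\tfrac{1-q_j}{q_j}\ge 0$) are nonnegative, while the tail partial sums $P_k = 1-k\,\IF^*$ (for $k\ge t$) may be negative. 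The main obstacle is precisely to show the positive head dominates the negative tail, and this is the only place $\alpha<1/2$ is used: I would substitute $\tfrac{1-q_j}{q_j} = \tfrac{1-\alpha}{\alpha}\,\tfrac{\valuevec_{n-j+1}}{\valuevec_j}$ (so the head carries a factor $\tfrac{1-\alpha}{\alpha}>1$) together with the closed form $1/\IF^* = 1 + q_t L_t + (1-q_t)R_t$, telescope the tail, and verify the resulting inequality using $\valuevec_1>\cdots>\valuevec_n>0$ and $\tfrac{1-\alpha}{\alpha}>1$. A useful consistency check I would keep in mind is the reflection symmetry $U_1(\alpha)=U_2(1-\alpha)$ (obtained from the map $j\mapsto n-j+1$, $\alpha\mapsto 1-\alpha$, $x\leftrightarrow y$), which both confirms that the inequality must reverse for $\alpha>1/2$ and can be reused later to get the $\alpha>1/2$ monotonicity in \Cref{thm:pofdecreasing} for free.
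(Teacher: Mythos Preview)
Your reduction is correct and rather slick: the weighted-average identity $\alpha U_1+(1-\alpha)U_2=\IF^*\sum_j\valuevec_j$ (obtained by summing the equalized item constraints) and the resulting equivalence $U_1\ge U_2 \iff \sum_j(x_j-\IF^*)\valuevec_j\ge 0$ are valid, as is the Abel summation step with $P_k=\sum_{j\le k}x_j-k\IF^*$. This is, however, a genuinely different route from the paper's. The paper does not pass through $\IF^*\sum_j\valuevec_j$ at all; instead it compares term by term, showing directly from \Cref{lem:closedforms} that for $j<t$
\[
x_j - y_{\,n-j+1}
= \IF^*\Bigl(\tfrac{1}{q_j}-\tfrac{1}{1-q_{n-j+1}}\Bigr)
= \IF^*\,\frac{\valuevec_{n-j+1}}{\valuevec_j}\cdot\frac{1-2\alpha}{\alpha(1-\alpha)}\ \ge 0,
\]
and then bounds $U_1-U_2=\sum_{j}\valuevec_j\bigl(x_j-y_{n-j+1}\bigr)$ by a short chain of inequalities that only uses the monotonicity $\valuevec_1>\cdots>\valuevec_n$, the sparse support of $x,y$, and the normalization $\sum_j x_j=\sum_j y_j=1$. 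The $\alpha<1/2$ hypothesis enters exactly once, through the sign of $1-2\alpha$ above, and nothing else needs to be estimated.

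The upshot is that your approach trades that single explicit termwise comparison for the global identity plus a head-versus-tail estimate; the conceptual payoff is that your reduction isolates cleanly where $\alpha<1/2$ must be used (the factor $\tfrac{1-\alpha}{\alpha}>1$ in the head), and your reflection remark $U_1(\alpha)=U_2(1-\alpha)$ is a nice sanity check. The cost is that the step you flag as the ``main obstacle'' is only sketched: after telescoping, the inequality you must verify becomes
\[
\IF^*\sum_{j<t}\tfrac{1-q_j}{q_j}\bigl(\valuevec_j-\valuevec_t\bigr)\ \ge\ \IF^*\Bigl((t-1)\valuevec_t+\sum_{k\ge t}\valuevec_k\Bigr)-\valuevec_t,
\]
and checking this still requires unpacking $\IF^*$ via $1/\IF^*=1+q_tL_t+(1-q_t)R_t$ and some case analysis comparable in effort to the paper's direct argument. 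So your plan is sound, but the paper's termwise comparison is shorter and avoids this last estimate entirely; if you want a complete proof along your lines, you should either finish that inequality explicitly or short-circuit it by borrowing the paper's observation $x_j\ge y_{n-j+1}$ for $j<t$.
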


By Lemma \ref{lem:type2hasminur}, $\UF^*(\alpha) = \UU_i(\alpha)$ for $i$ with $\tau(i) = 2$. Then \begin{align*}
    \UF^*(1, \alpha) &= \UU_i(\alpha)\\
    &= \frac{1}{\max_j \valuevec_j} \sum_{j=1}^n y_j \valuevec_{n-j+1}\\
    &= \frac{1}{\max_j \valuevec_j}\left(y_{t(\alpha)}\valuevec_{n-t(\alpha)+1} + \sum_{j>t(\alpha)} y_j \valuevec_{n-j+1}\right)\\
    &=  \frac{1}{\max_j \valuevec_j}\left(\left(1 - \sum_{j > t(\alpha)} \frac{\IF^*(\alpha)}{1 - q_j(\alpha)}\right)\valuevec_{n-t(\alpha)+1} + \sum_{j>t(\alpha)} \frac{\IF^*( \alpha)}{1 - q_j(\alpha)} \valuevec_{n-j+1}\right)\\
    &=  \frac{1}{\max_j \valuevec_j}\left(\valuevec_{n-t(\alpha)+1} + \sum_{j > t(\alpha)}\frac{\IF^*( \alpha)}{1 - q_j(\alpha)}(\valuevec_{n-j+1}- \valuevec_{n-t(\alpha)+1})\right)
\end{align*}

The following results will enable us to conclude that $\UF^*(1,\alpha)$ is increasing for $\alpha < 1/2$.

\begin{restatable}{lem}{lemtincreasing}\label{lem:t-increasing}
    $t(\alpha)$ is weakly increasing.
\end{restatable}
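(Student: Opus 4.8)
The plan is to determine, for each candidate threshold $t$, the exact set of parameters $\alpha$ for which $t(\alpha)=t$, and then to show that these sets are intervals arranged in increasing order of $t$. By \Cref{lem:closedforms}, the closed-form solution at threshold $t$ has $x_j=\IF^*/q_j>0$ for $j<t$, $y_j=\IF^*/(1-q_j)>0$ for $j>t$, $x_t=1-\IF^* L_t$, and $y_t=1-\IF^* R_t$, with $\IF^*=\bigl(1+q_tL_t+(1-q_t)R_t\bigr)^{-1}$. Hence this solution is a legitimate pair of probability vectors precisely when $x_t\ge 0$ and $y_t\ge 0$ (the remaining coordinates are automatically in $[0,1]$). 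Writing $D_t:=L_t-R_t$, these two conditions rearrange to
\[ (1-q_t)\,D_t \leq 1 \qquad\text{(A)}\qquad\text{and}\qquad q_t\,D_t \geq -1 \qquad\text{(B)}. \]
I would first record the relevant monotonicities. Since each $q_j(\alpha)=\tfrac{\alpha \valuevec_j}{\alpha \valuevec_j+(1-\alpha)\valuevec_{n-j+1}}$ is strictly increasing in $\alpha$, the quantity $L_t(\alpha)$ is decreasing and $R_t(\alpha)$ increasing, so $D_t(\alpha)$ is strictly \emph{decreasing} in $\alpha$; conversely, via the recursions $L_{t+1}=L_t+\tfrac1{q_t}$, $R_{t+1}=R_t-\tfrac1{1-q_{t+1}}$, and $D_{t+1}=D_t+\tfrac1{q_t}+\tfrac1{1-q_{t+1}}$, the quantity $D_t$ is strictly \emph{increasing} in $t$.

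The crux is a hand-off identity linking consecutive thresholds. I would define $\alpha^*_t$ as the value at which (B) is tight at level $t$, i.e.\ $q_t(\alpha^*_t)\,D_t(\alpha^*_t)=-1$, equivalently $D_t=-1/q_t$. Substituting this into the recursion gives $D_{t+1}=-\tfrac1{q_t}+\tfrac1{q_t}+\tfrac1{1-q_{t+1}}=\tfrac1{1-q_{t+1}}$, so that $(1-q_{t+1})D_{t+1}=1$: the \emph{same} $\alpha^*_t$ makes (A) tight at level $t+1$. Thus the upper endpoint of level $t$'s validity window is exactly the lower endpoint of level $t+1$'s window.

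To pin down the direction of the transition, note that at $\alpha^*_t$ one has $D_t=-1/q_t<0$. In the region where $D_t(\alpha)<0$, the product $g_t(\alpha):=q_t(\alpha)D_t(\alpha)$ is a positive increasing factor times a negative decreasing factor, hence strictly decreasing; since $g_t(0^+)>0$ and $g_t(1^-)=-\infty$, it crosses $-1$ exactly once, from above, so (B) holds iff $\alpha\le\alpha^*_t$. A symmetric analysis of $h_t(\alpha):=(1-q_t(\alpha))D_t(\alpha)$ — strictly decreasing while $D_t>0$, where it crosses $1$ at $\alpha^*_{t-1}$ — shows (A) holds iff $\alpha\ge\alpha^*_{t-1}$. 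Combining, level $t$ is valid exactly for $\alpha\in[\alpha^*_{t-1},\alpha^*_t]$, with the conventions $\alpha^*_0:=0$ and $\alpha^*_n:=1$ (the endpoints handled by $L_1=0$ and $R_n=0$, which make (A) at $t=1$ and (B) at $t=n$ hold trivially). These intervals tile $(0,1)$ in increasing order of $t$, and since \Cref{lem:sparse} guarantees a unique solution for each $\alpha$, the threshold $t(\alpha)$ is weakly increasing.

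The main obstacle I anticipate is establishing that the relevant product ($g_t$ for (B), $h_t$ for (A)) is monotone \emph{exactly} in the regime where its inequality becomes tight: away from the crossing these products need not be monotone, so the argument must be confined to the sign-definite region of $D_t$ (negative for (B), positive for (A)), where the pinned-down sign at the critical value yields a single clean crossing. Verifying that the crossing indeed lies in that region, via the boundary behavior of $g_t$ and $h_t$, is the technical heart; the hand-off identity of the second paragraph is what ensures the windows abut rather than overlap or leave gaps.
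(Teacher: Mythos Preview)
Your approach is correct and genuinely different from the paper's. The paper argues by contradiction: assuming $\alpha<\alpha'$ with $t(\alpha)>t(\alpha')$, it splits into cases according to whether $\IF^*(\alpha)<\IF^*(\alpha')$ or $\IF^*(\alpha)\ge\IF^*(\alpha')$ and, in each case, manipulates the closed forms from \Cref{lem:closedforms} to reach an inequality of the form $0\le\cdots<0$. You instead construct, for each candidate threshold $t$, the exact validity window $[\alpha^*_{t-1},\alpha^*_t]$ and use the hand-off identity $q_tD_t=-1\Rightarrow(1-q_{t+1})D_{t+1}=1$ to show these windows abut perfectly. Your route yields strictly more information---the explicit transition points $\alpha^*_t$ and, via $D_t(\alpha^*_{t-1})>0>D_t(\alpha^*_t)$ with $D_t$ decreasing, their ordering---at the price of the sign-restricted monotonicity arguments for $g_t$ and $h_t$ and a careful treatment of the boundary cases $t=1,n$. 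The paper's contradiction is shorter but opaque about where the transitions actually occur.

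One point to make explicit in your write-up: you implicitly use that the unique optimal solution is \emph{exactly} the level-$t$ candidate whenever that candidate is valid. This follows because \Cref{lem:sparse} guarantees the optimal solution has the sparse structure for some $t^*$, hence coincides with the level-$t^*$ candidate, which is therefore valid; since for $\alpha$ in the interior of a window only one candidate is valid, $t^*=t$ and so $t(\alpha)=t$. Without spelling out this link, a valid candidate could in principle be merely feasible rather than optimal, and the tiling argument would not determine $t(\alpha)$.
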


\begin{restatable}{lem}{lemiurincreasing} \label{lem:iurincreasing}$\IF^*(\alpha)$ is increasing for $\alpha \leq 1/2$.
\end{restatable}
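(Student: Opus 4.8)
The plan is to work from the closed form of \Cref{lem:closedforms}. Since $\IF^*(\alpha) = 1/D(\alpha)$ with $D(\alpha) = 1 + q_{t}L_{t} + (1-q_{t})R_{t}$, $t=t(\alpha)$, and $\IF^*(\alpha)>0$ by \Cref{lem:ifpos}, it is equivalent to show that $D$ is non-increasing on $(0,\tfrac12]$. First I would note that $\IF^*(\alpha)$ is continuous: it is the optimal value of the parametric program $\max_{x,y\in\Delta_{n-1}}\min_j\IU_j(x,y;\alpha)$ whose feasible region is the fixed compact set $\Delta_{n-1}^2$ and whose objective is jointly continuous, so continuity follows from Berge's maximum theorem; hence $D$ is continuous. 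By \Cref{lem:t-increasing} the integer threshold $t(\alpha)$ is a weakly increasing step function with finitely many jumps. It therefore suffices to show $D$ is non-increasing on each maximal interval on which $t$ is constant, and then glue across the finitely many transition points using continuity of $D$.

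On such an interval I would differentiate $D$ directly. Writing $M_j = \alpha\valuevec_j + (1-\alpha)\valuevec_{n-j+1}$ for the positive denominator of $q_j$, a short computation gives $q_j'(\alpha)=\valuevec_j\valuevec_{n-j+1}/M_j^2>0$, so each $q_j$ is strictly increasing. Substituting this into $dD/d\alpha = q_t'(L_t-R_t) - q_t\sum_{j<t} q_j'/q_j^2 + (1-q_t)\sum_{j>t} q_j'/(1-q_j)^2$ and simplifying (using $q_t=\alpha\valuevec_t/M_t$ and $1-q_t=(1-\alpha)\valuevec_{n-t+1}/M_t$), the explicit $\alpha$-dependence collapses and one obtains
\[
\frac{dD}{d\alpha} = \frac{G(t)}{M_t^2}, \qquad
G(t) := \valuevec_t\valuevec_{n-t+1}(2t-1-n) - \valuevec_t^2\!\sum_{j<t}\frac{\valuevec_{n-j+1}}{\valuevec_j} + \valuevec_{n-t+1}^2\!\sum_{j>t}\frac{\valuevec_j}{\valuevec_{n-j+1}},
\]
a quantity depending only on $t$ and the fixed values $\{\valuevec_j\}$. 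Thus on each interval the sign of $dD/d\alpha$ equals that of $G(t)$, and the task reduces to showing $G(t)\le 0$ for the relevant $t$.

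I claim $G(t)\le 0$ whenever $t\le (n+1)/2$. Reindexing the last sum by $i=n-j+1$ rewrites it as $\sum_{i\le n-t}\valuevec_{n-i+1}/\valuevec_i = \sum_{j<t}\valuevec_{n-j+1}/\valuevec_j + \sum_{i=t}^{n-t}\valuevec_{n-i+1}/\valuevec_i$; since the constant term of $G(t)$ equals $-\valuevec_t\valuevec_{n-t+1}$ summed over the $n+1-2t$ indices $i\in[t,n-t]$, regrouping gives
\[
G(t) = (\valuevec_{n-t+1}^2 - \valuevec_t^2)\sum_{j<t}\frac{\valuevec_{n-j+1}}{\valuevec_j} + \valuevec_{n-t+1}\sum_{i=t}^{n-t}\Bigl(\valuevec_{n-t+1}\frac{\valuevec_{n-i+1}}{\valuevec_i} - \valuevec_t\Bigr).
\]
For $t\le(n+1)/2$ we have $\valuevec_t\ge\valuevec_{n-t+1}$, so the first term is $\le 0$; and for each $i\in[t,n-t]$ the strict monotonicity of $\valuevec$ gives $\valuevec_{n-t+1}\le\valuevec_i$ and $\valuevec_{n-i+1}\le\valuevec_t$, hence $\valuevec_{n-t+1}\valuevec_{n-i+1}\le\valuevec_t\valuevec_i$ and each summand is $\le 0$. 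Thus $G(t)\le 0$ (with equality only at the middle index $t=(n+1)/2$ for odd $n$, where the sum is empty, consistent with $\IF^*$ being flat near $\alpha=\tfrac12$).

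It then remains to show $t(\alpha)\le(n+1)/2$ for $\alpha\le\tfrac12$, so that $G(t(\alpha))\le 0$ throughout. Here I would use the reflection identity $q_j(1-\alpha)=1-q_{n-j+1}(\alpha)$, which shows the program at $1-\alpha$ is the program at $\alpha$ after reversing item indices $j\mapsto n+1-j$ and swapping $x\leftrightarrow y$; by uniqueness (\Cref{lem:sparse}) this reflects the support threshold about $(n+1)/2$, and combining with $t$ weakly increasing (\Cref{lem:t-increasing}) and $\alpha\le\tfrac12\le 1-\alpha$ forces $t(\alpha)\le(n+1)/2$. Consequently $dD/d\alpha\le 0$ on every interval of constancy, and by continuity $D$ is non-increasing, so $\IF^*$ is (weakly) increasing on $(0,\tfrac12]$. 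I expect the main obstacles to be the algebraic collapse of $dD/d\alpha$ into the $\alpha$-free expression $G(t)$ and the sign analysis of $G(t)$ via the pairing above; cleanly extracting $t(\alpha)\le(n+1)/2$ from the reflection symmetry is the one remaining delicate point.
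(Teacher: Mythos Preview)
Your argument is correct, and its high-level scaffolding matches the paper's proof exactly: you invoke continuity of $\IF^*$ (the paper does this via the compactness of $\Delta_{n-1}^2$ and continuity of the objective, you via Berge), partition $(0,\tfrac12]$ into the intervals $\mathcal{A}(t)=\{\alpha:t(\alpha)=t\}$ using \Cref{lem:t-increasing}, show $D(\alpha)=1+q_tL_t+(1-q_t)R_t$ is non-increasing on each such interval, and glue. Both proofs also reduce to showing $t(\alpha)\le(n+1)/2$ on $(0,\tfrac12]$.

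The two proofs diverge in how they execute the two substantive steps. For monotonicity on a fixed-$t$ interval, the paper rewrites $D-1$ as a sum of explicitly monotone pieces: it pairs $q_t/q_j$ with $(1-q_t)/(1-q_{n-j+1})$ for $j<t$ (calling the pair $h_t$), handles the leftover middle terms $(1-q_t)/(1-q_j)$ for $t<j\le n-t+1$ separately (calling each $g$), and shows $h_t'\le 0$ and $g'\le 0$ by direct calculation. You instead differentiate $D$ in one stroke and observe the striking collapse $dD/d\alpha=G(t)/M_t^2$ with $G(t)$ independent of $\alpha$; your subsequent sign analysis of $G(t)$ via the reindexing $i=n-j+1$ is essentially the same pairing the paper uses, but now applied once to a static expression rather than term-by-term to functions of $\alpha$. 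This is cleaner and makes the role of $t\le(n+1)/2$ completely transparent. For the threshold bound $t(\alpha)\le(n+1)/2$, the paper (in its \Cref{lem:tleqhalf}) argues at $\alpha=\tfrac12$ by constructing a symmetric candidate with $t'=\lfloor(n+1)/2\rfloor$ and deriving a contradiction from the closed form; your reflection identity $q_j(1-\alpha)=1-q_{n-j+1}(\alpha)$ together with uniqueness and \Cref{lem:t-increasing} gives the same conclusion more conceptually (indeed it yields $t(\alpha)+t(1-\alpha)\in\{n,n+1\}$ and hence $2t(\alpha)\le n+1$). Both routes are valid; yours is shorter and more symmetric, while the paper's is more explicit about the individual monotone contributions.
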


\begin{restatable}{lem}{lemqincreasing}\label{lem:increasing}
    $q_j(\alpha)$ strictly increases as $\alpha$ increases and strictly decreases as $j$ increases.
\end{restatable}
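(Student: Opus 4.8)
The plan is to verify the two monotonicity claims separately, both of which follow from rewriting $q_j(\alpha)$ in a form that isolates the relevant dependence. Setting $a = \valuevec_j$ and $b = \valuevec_{n-j+1}$ (both strictly positive, and the only way $j$ and $\alpha$ enter), I would recast
\[
q_j(\alpha) = \frac{\alpha a}{\alpha a + (1-\alpha)b} = \left(1 + \frac{b}{a}\cdot\frac{1-\alpha}{\alpha}\right)^{-1}.
\]
Both parts then reduce to tracking how the single quantity $\frac{b}{a}\cdot\frac{1-\alpha}{\alpha}$ in the denominator moves, using that $q_j$ is a strictly decreasing function of that quantity.

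For monotonicity in $\alpha$, note that on $(0,1)$ the factor $\frac{1-\alpha}{\alpha} = \frac1\alpha - 1$ is strictly decreasing, so for fixed $j$ the denominator strictly decreases and hence $q_j(\alpha)$ strictly increases. Equivalently, a one-line quotient-rule computation gives
\[
\frac{\partial q_j}{\partial \alpha} = \frac{\valuevec_j\,\valuevec_{n-j+1}}{\bigl(\alpha \valuevec_j + (1-\alpha)\valuevec_{n-j+1}\bigr)^2} > 0,
\]
where the numerator collapses to $\valuevec_j\valuevec_{n-j+1}$ after cancellation and positivity is immediate since every $\valuevec$ entry is positive; I would present whichever of the two arguments reads more cleanly.

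For monotonicity in $j$, I would invoke the ordering $\valuevec_1 > \cdots > \valuevec_n$. As $j$ increases, $\valuevec_j$ strictly decreases while $\valuevec_{n-j+1}$ strictly increases (since the index $n-j+1$ decreases), so the ratio $\valuevec_{n-j+1}/\valuevec_j$ strictly increases. With $\frac{1-\alpha}{\alpha} > 0$ held fixed, the denominator in the reciprocal form strictly increases in $j$, so $q_j(\alpha)$ strictly decreases in $j$.

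I expect no genuine obstacle here: this is an elementary monotonicity lemma, and once $q_j$ is written in the reciprocal form both conclusions are essentially immediate. The only points worth stating carefully are that strict positivity and strict ordering of the $\valuevec$ entries are what deliver the \emph{strict} inequalities, and that restricting $\alpha$ to $(0,1)$ is what makes $\frac{1-\alpha}{\alpha}$ well-defined and strictly monotone.
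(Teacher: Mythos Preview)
Your proposal is correct and essentially identical to the paper's own proof: the paper also rewrites $q_j(\alpha) = \bigl(1 + (\tfrac{1}{\alpha}-1)\tfrac{\valuevec_{n-j+1}}{\valuevec_j}\bigr)^{-1}$ and then reads off both monotonicities from the behavior of the single term $(\tfrac{1}{\alpha}-1)\tfrac{\valuevec_{n-j+1}}{\valuevec_j}$. Your optional derivative computation is a minor embellishment, not a different route.
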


Since $t(\alpha)$ is increasing and $v_j$ decreases in $j$, $\valuevec_{n-t(\alpha)+1}$ is increasing in $\alpha$. Also, $\IF^*(\alpha)$ is increasing for $\alpha < 1/2$, and $q_j(\alpha)$ is increasing, so $\frac{\IF^*(\alpha)}{1-q_j(\alpha)}$ is increasing. For $j > t(\alpha)$, $\valuevec_{n-j+1} - \valuevec_{n-t(\alpha)+1} > 0$. Thus, $\UF^*(1, \alpha)$ is increasing.
\end{proof} 

\subsection{Supporting lemma proofs}

We will first prove Lemma \ref{lem:increasing}, as it is a simple result that we will use extensively.
\lemqincreasing*
\begin{proof}
    Recall the definition of $q_j(\alpha)$:\[q_j(\alpha) = \frac{\alpha \valuevec_j}{\alpha \valuevec_j + (1-\alpha)\valuevec_{n-j+1}}.\]

    We can rewrite this as \[q_j(\alpha) = \frac{1}{1 + \frac{(1-\alpha)\valuevec_{n-j+1}}{\alpha \valuevec_j}} =  \frac{1}{1 + (\frac{1}{\alpha}-1)\frac{\valuevec_{n-j+1}}{ \valuevec_j}},\] which is strictly increasing in $\alpha$ since $\valuevec_{n-j+1}/\valuevec_j > 0$.

Since $v_j$ is decreasing in $j$, $\valuevec_{n-j+1}/\valuevec_j$ is increasing in $j$. Then since $1/\alpha > 1$, $\left(\frac{1}{\alpha} - 1\right)\frac{v_{n-j+1}}{v_j}$ is increasing, so $q_j(\alpha)$ is decreasing.
\end{proof}

\lemoptutilconstant*

\begin{proof} Simply put, the argument below says that since there are no total utility or item fairness constraints, we can simply maximize utility for each user individually. When we do this, each user gets a utility ratio of $1$. Formally, recall that \[
        \UF^*( \alpha) = \max_{\rec \in \Delta_{n-1}^m}\min_i \frac{\sum_{j=1} \rec_{ij}\valuemx_{ij}(\alpha)}{\max_j \valuemx_{ij}(\alpha)}
    \] We can pick $x_i$ for each user $i$ independently, so this problem becomes     
    \[
        \UF^*( \alpha) = \min_i \frac{1}{\max_j \valuemx_{ij}}\max_{\rec_i \in \Delta_{n-1}} \sum_{j=1} x_{ij}\valuemx_{ij}(\alpha).
    \]
    We know that $\sum_{j=1} \rec_{ij}\valuemx_{ij}$ is maximized by putting all probability mass of $\rec_i$ on the coordinate of $\valuemx_i$ with the highest value, yielding expected value $\max_j \valuemx_{ij}$. Then \[
    \UF^*( \alpha) = \min_i \frac{\max_j \valuemx_{ij}}{\max_j \valuemx_{ij}} = 1.
    \]
\end{proof}

\lemsparse*

\begin{proof}Let $\alpha$ be fixed; we will suppress the dependence on $\alpha$ for the remainded of the proof. Recall that Problem \ref{eqn:simpleprobeqiurs} is a linear program, and we can therefore rely on known facts about the structure of solutions to linear programs to prove this result.

\paragraph{Part 1} First, we show that at least $n-1$ of $\{x_j, y_j\}_j$ must be zero in every basic feasible solution. 

By \Cref{prop:generalsymmandsparse}, since $K=2$, at most $n+2-1 = n+1$ of $\{x_j, y_j\}_j$ can be non-zero and thus at least $n-1$ of these are zero.

\paragraph{Part 2} Now, we show that every optimal basic feasible solution has the form above.

If $x,y, \lambda$ define an optimal basic feasible solution and $x_j = 0$, then if $i > j$, $x_i = 0$. To see this, suppose this is not the case, that is, there is some $j < i$ such that $x_i := c > 0$ but $x_j = 0$. We will show that we can form $x', y'$ with $\min_j \IU_j(x', y') > \lambda$, which means that the policy $\rec'$ formed by giving recommendation policy $x'$ to all users of type 1 and  recommendation policy $y'$ to all users of type 2, is a better solution to Problem \ref{eqn:originalconcaveproblem} than the policy $\rec$ defined by $x,y$, which means $\rec$ is not optimal and we have a contradiction.

Since $x,y$ are feasible, we must have $\IU_i(x,y) = \IU_j(x', y') = \lambda$, so \begin{equation}\label{eqn:lemma5iureq}
     q_i c + (1-q_i ) y_i = (1-q_j)y_j.
\end{equation}
Define $x', y'$ as follows: 
\[x_\ell' = \begin{cases}
    c-\epsilon_1, & \ell = j\\
    0, & \ell = i\\
    x_\ell  +\frac{\epsilon_1}{n-2}, & \text{otherwise}
\end{cases},\]  \[y_\ell' = \begin{cases}
    y_j - \left(\frac{q_i c}{1-q_i} + \epsilon_2\right), & \ell = j\\
    y_i + \left(\frac{q_i c}{1-q_i} + \epsilon_2\right), & \ell = i\\
    y_\ell, & \text{otherwise}
\end{cases}.\]
Intuitively, to define $x'$ we move all probability mass away from $i$ to $j$. To define $y'$, we move sufficient mass from $j$ to $i$ to offset the decrease in value to item $i$ from changing $x$ to $x'$. In order to \textit{strictly} increase $\IU_\ell$ for all items $\ell$, in $x'$ we also move $\epsilon_1$ mass to the other items.

These will be valid probability vectors for $\epsilon_1, \epsilon_2 > 0$ small enough; for $x'$ this is clear. For $y'$, we must show that $y_j > \frac{q_i c}{1-q_i}$, and $y_i + \frac{q_i c}{1-q_i} < 1$. Rearranging Equation \ref{eqn:lemma5iureq} and noticing that $q_j > q_i$ by Lemma \ref{lem:increasing}, \[y_i + \frac{q_i c}{1-q_i} = \frac{(1-q_j) y_j}{1-q_i} < y_j \leq 1.\]
Moreover, \begin{align*}y_j - \frac{q_ic}{1-q_i} &= \frac{q_i c + (1-q_i) y_i}{1-q_j}  - \frac{q_ic}{1-q_i} \tag{Equation \ref{eqn:lemma5iureq}} \\
&> \frac{q_i c + (1-q_i) y_i}{1-q_i}  - \frac{q_ic}{1-q_i} \tag{$q_j > q_i$}\\
&= y_i.\\
& \geq 0\end{align*}
Now, we can show that $\IF(x', y') > \IF(x, y) = \IF^*$, which is a contradiction. To do this, it is sufficient to show that $\IU_\ell(x', y') > \IU_\ell(x, y)$ for each $\ell$.
\begin{itemize}
    \item $\ell \notin \{i, j\}$: Since $x'_\ell > x_\ell$, $y'_\ell = y_\ell$, $\IU_\ell(x',y') > \IU_\ell(x, y)$.
    \item $\ell = i$: \begin{align*}
        \IU_i(x', y') &= (1-q_i)\left(y_i + \frac{q_i c}{1 - q_i} + \epsilon_2\right)\\
        &= (1-q_i)y_i + q_i \lambda + \epsilon_2\\
        &= \IU_i(x,y) + \epsilon_2
    \end{align*}
    \item $\ell = j$:
    \begin{align*}
        \IU_j(x', y') &= q_j (\lambda-\epsilon_1) + (1-q_j) \left(y_j - \frac{q_i \lambda}{1-q_i} - \epsilon_2\right)\\
        &= (1-q_j) y_j  + \lambda\left(q_j - q_i \frac{1-q_j}{1-q_i}\right) - \epsilon_1 q_j - \epsilon_2(1-q_j)\\
        &= \IU_j(x, y) +\lambda\left(q_j - q_i \frac{1-q_j}{1-q_i}\right) - \epsilon_1 q_j - \epsilon_2(1-q_j) 
    \end{align*} Since $q_j > q_i$, $q_j - q_i\frac{1-q_j}{1 - q_i} >0$, and thus $\IU_j(x', y') > \IU_j(x,y)$ if $\epsilon_1, \epsilon_2$ are small enough.
\end{itemize}

Since we did not use the fact that $\alpha < 1/2$ here, a symmetric argument shows that we cannot have $y_j = 0$, $y_i > 0$, and $i < j$, 

Let $k$ be the number of indices $j$ such that $x_j > 0$. The above arguments together imply that $x_j = 0, j > k$. There are then at least $n-1-(n-k) = k-1$ indices $j$ such that $y_j = 0$; the argument above implies that these zeroes are on the lowest possible indices, that is, $y_j = 0$, $j < k$. Thus if we take $t = k$, we have that for $j > t$, $x_j =0$, and for $j < t$, $y_j = 0$.

\paragraph{Part 3} Finally, we show that there is only a single optimal solution, which is a basic feasible solution of this form. 

Suppose that $x', y'$ is another optimal basic feasible solution; we will show that $x' = x, y' = y$. Denote $t(x,y) = \max\{j: x_j > 0\}$.  
\begin{itemize}
    \item If $t(x, y) = t(x', y')$, then:\begin{itemize}
        \item For $j< t$, $y_j = y_j' = 0$, and \begin{align*}
            q_j x_j &= \IU_j(x,y)\tag{Lemma \ref{lem:closedforms}}\\
            &= \IF(x,y) \tag{feasibility}\\
            &= \IF(x', y') \tag{optimality}\\
            &= \IU_j(x', y')\tag{feasbility}\\
            &= q_j x_j',\tag{Lemma \ref{lem:closedforms}}
        \end{align*} so $x_j = x_j'$. 
        \item Similarly for $j > t$, $x_j = x_j' = 0$, and \[(1-q_j) y_j  =\IU_j(x, y) = \IF(x,y) = \IF(x', y') = \IU_j(x', y') = (1-q_j) y_j',\] so $y_j' = y_j$. 
        \item Finally, this means $y_t = 1 - \sum_{j > t} y_j  = 1 - \sum_{j >t} y_j'= y_t'$, and $x_t = 1 - \sum_{j < t} x_j = 1 - \sum_{j<t} x_j' = x_t'$.
    \end{itemize}  
    \item Otherwise, without loss of generality let $t(x', y') > t(x, y) := t$. For $j < t$, \[q_j x_j = \IU_j(x,y) = \IU_j(x', y') = q_j x_j',\] so $x_j' = x_j$. For $j > t$, $x_j = 0$.

    Furthermore, \begin{align*}&q_t x_t + (1-q_t)y_t= \IU_t(x,y) = \IU_t(x', y') = q_t x_t'\\
    &\iff q_t\left(1 - \sum_{j< t} x_j\right) + (1-q_t)y_t = q_t x_t' \\
    &\iff q_t\left(1 - \sum_{j< t} x_j'\right) + (1-q_t)y_t = q_t x_t' \\
    &\iff (1-q_t)y_t = q_t\left(\sum_{j \leq t} x_t' - 1\right)\end{align*}
    Since $q_t, 1-q_t> 0$, and $y_t \geq 0, \sum_{j\leq t} x_t' \leq 1$, it must be the case that $y_t= 0$ and $\sum_{j \leq t} x_t' = 1$. However, $y_t > 0$ by definition of $t$, and we have a contradiction.
\end{itemize}

This means that there is only one optimal basic feasible solution, and thus there is only one solution to the linear program.%
\end{proof}

\lemclosedforms*

\begin{proof} Fix $0 < \alpha < 1$. This result mainly follows from the fact that $\sum_j x_j = \sum_j y_j = 1$, and $\IU_j(x^*,y^*) = \IU_{j'}(x^*,y^*)$ for all $j, j'$, as well as the sparse solution structure shown in Lemma \ref{lem:sparse}.

We know that the optimal solution satisfies $\IF^* = \IU_j = \IU_{j'}$ for all $j, j'$. We also already know that for $j > t$, $x_j = 0$, and for $j < t$, $y_j = 0$.
    
    For $j < t$, $\IF^* = \IU_j = x_j q_j + 0 \implies x_j = \frac{\IF^*}{q_j}.$ 

    For $j > t$, $\IF^* = \IU_j = 0 +  y_j(1 - q_j) \implies y_j = \frac{\IF^*}{1 - q_j}$.

    For $j = t$, we know $\IF^* = \IU_t = q_t x_t + (1-q_t) y_t$. Then for $j < t$, $x_j = \frac{1}{q_j}(q_t x_t + (1-q_t) y_t)$, so \begin{equation}\label{eqn:sumoverxts}
    1 - x_t = \sum_{j< t} x_j = \sum_{j < t}\frac{1}{q_j}(q_t x_t + (1-q_t) y_t) =(q_t x_t + (1-q_t) y_t) \sum_{j<t}\frac{1}{q_t} = (q_t x_t + (1-q_t) y_t) L_t.\end{equation} Similarly, $y_j = \frac{1}{1-q_j}(q_t x_t + (1-q_t) y_t)$ so \begin{equation}\label{eqn:sumoveryts}1 - y_t = \sum_{j>t} \frac{1}{1 - q_j}(q_t x_t + (1-q_t) y_t) = (q_t x_t + (1-q_t) y_t) R_t.\end{equation}

    Combining equations \ref{eqn:sumoverxts} and \ref{eqn:sumoveryts}, we obtain \begin{equation}\label{eqn:xtexpr}
        \frac{L_t}{R_t}(1-y_t) = 1-x_t \implies x_t = 1 - \frac{L_t}{R_t}(1-y_t)
    \end{equation} Substituting equation \ref{eqn:xtexpr} into equation \ref{eqn:sumoveryts}, we get \begin{align*}
        &1 - y_t = \left(q_t \left(1 - \frac{L_t}{R_t}(1-y_t)\right) + (1-q_t)y_t\right) R_t\\
        &\iff 1 + q_t L_t - q_t R_t= y_t(1 + q_t L_t + (1-q_t) R_t)\\
        &\implies y_t = \frac{1 + q_t L_t - q_t R_t}{1 + q_t L_t + (1-q_t) R_t} = 1 - \frac{R_t}{1 + q_t L_t + (1-q_t) R_t}
    \end{align*}

    Then \[x_t = 1 - \frac{L_t}{R_t}(1-y_t) = 1 - \frac{L_t}{1 + q_t L_t + (1-q_t) R_t}.\]

    Finally, we see that \begin{align*}
        \IF^* &= q_t x_t + (1-q_t) y_t\\
        &= q_t \left(1 - \frac{L_t}{1 + q_t L_t + (1-q_t) R_t}\right) + (1-q_t) \left(1 - \frac{R_t}{1 + q_t L_t + (1-q_t) R_t}\right)\\
        &= 1 - \frac{q_t L_t + (1-q_t) R_t}{1 + q_t L_t + (1-q_t) R_t}\\
        &= \frac{1}{1 + q_t L_t + (1-q_t) R_t}.
    \end{align*}    \end{proof}

\lemtypetwohasminiur*
\begin{proof} Intuitively, the type with a larger proportion in the population must shoulder the burden of ensuring item fairness to mediocre items, and thus users from this type will have a lower recommendation probability on the items they really like compared to the smaller group. 

Formally, let $i, i'$ be users such that $\tau(i) = 1$, $\tau(i') = 2$. For $\alpha \leq 1/2$, $j < t(\alpha)$, \begin{align*}
    x_j - y_{n-j+1} &= \frac{1}{q_j(\alpha)} - \frac{1}{1-q_j(\alpha)}\\
    &= \frac{\alpha \valuevec_j + (1-\alpha) \valuevec_{n-j+1}}{\alpha \valuevec_j} - \frac{\alpha \valuevec_{n-j+1} - (1-\alpha) \valuevec_j}{(1-\alpha) \valuevec_j}\\
    &= \frac{1}{\valuevec_j}\left(\valuevec_j + \frac{1-\alpha}{\alpha}\valuevec_{n-j+1}- \valuevec_j - \frac{\alpha}{1-\alpha}\valuevec_{n-j+1}\right)\\
    &= \frac{\valuevec_{n-j+1}}{\valuevec_j}\left(\frac{1-\alpha}{\alpha}- \frac{\alpha}{1-\alpha}\right)\\
    &= \frac{\valuevec_{n-j+1}}{\valuevec_j}\frac{1-2\alpha}{\alpha(1-\alpha)}\\
    &\geq 0 \tag{$0 < \alpha \leq 1/2$}
\end{align*}
So \begin{align*}
    &\left(\max_j \valuevec_j\right) \cdot \left(\UU_i - \UU_{i'}\right)\\ 
    &=\sum_{j \leq t} \valuevec_j x_j - \sum_{j \geq t} \valuevec_{n-j+1} y_j\\
    &= \sum_{j \leq t} \valuevec_j (x_j - y_{n-j+1}) - \sum_{j< t} \valuevec_j y_{n-j+1} \tag{collecting coefficients of $\valuevec_j$}\\
    &\geq \sum_{j \leq t} \valuevec_t(x_j - y_{n-j+1}) - \sum_{j > t} \valuevec_j y_{n-j+1}\tag{$\valuevec_t \leq \valuevec_j$ for $j \leq t$, $x_j > y_{n-j+1}$}\\
    & \geq \valuevec_t \sum_{j \leq t} (x_j - y_{n-j+1}) - \valuevec_t \sum_{j > t} y_{n-j+1}\tag{$\valuevec_t > \valuevec_j$ for $j > t$}\\
    &\geq \valuevec_t - \valuevec_t \sum_{j\leq t} y_{n-j+1} - \valuevec_t \sum_{j> t} y_{n-j+1}\tag{$\sum_{j \leq t} x_j = 1$}\\
    &= \valuevec_t - \valuevec_t  \tag{$\sum_{j \leq t} y_j = 1$}\\
    &= 0
\end{align*} Since $\max_j \valuevec_j > 0$, we have that $\UU_i \geq \UU_{i'}$.
\end{proof}

\lemtincreasing*

\begin{proof}
    Suppose not, that is, there are some $\alpha < \alpha'$ such that $t := t(\alpha) > t(\alpha') := t'$. We will split this problem into several cases based on which of $\alpha, \alpha'$ gives a solution with higher item fairness, and show the implications of decreasing $t$ on the closed-form solutions from Lemma \ref{lem:closedforms}, eventually reaching a contradiction in each case.

    Recall from Lemma \ref{lem:increasing} that $q_j(\alpha) = \frac{\alpha v_j}{\alpha v_j + (1-\alpha) v_{n-j+1}}$ is increasing in $\alpha$, so $q_j(\alpha) < q_j(\alpha')$ for all $j$.

    First, let $\IF^*(\alpha) < \IF^*( \alpha')$. If $t = n$ then for $j < n$, $y_j = 0$, and thus $y_n = 1$. So \[\IF^*(\alpha) = (1-q_n(\alpha)) > (1-q_n(\alpha')) \geq (1-q_n(\alpha')) y_n' = \IF^*( \alpha')\] since $y_n' \leq 1$, and we have a contradiction. Otherwise if $t < n$, let $j > t$. Then \begin{align*}
        (1-q_j(\alpha)) y_j = \IF^*(\alpha) < \IF^*( \alpha') = (1-q_j(\alpha')) y_j' \iff \frac{y_j}{y_j'} < \frac{1-q_j(\alpha')}{1 - q_j(\alpha)} < 1 \implies y_j < y_j'
    \end{align*}
    Then $y_t = 1 - \sum_{j > t} y_j > 1 - \sum_{j > t} y_j' \geq y_t'$. So \begin{align*}&q_t(\alpha) x_t + (1-q_t(\alpha)) y_t < (1-q_t(\alpha')) y_t' < (1-q_t(\alpha')) y_t\\
    &\implies q_t(\alpha) x_t < (1-q_t(\alpha')) y_t -  (1-q_t(\alpha)) y_t\\
    &\implies 0 \leq q_t(\alpha) x_t < y_t (q_t(\alpha) - q_t(\alpha')) < 0
    \end{align*} and we have a contradiction.

    Now, let $\IF^*(\alpha) \geq \IF^*( \alpha')$ instead. If $t' = 1$ then $x_1' = 1$ and \[\IF^*( \alpha') = q_1(\alpha') > q_1(\alpha) \geq q_1(\alpha) x_1 = \IF^*(\alpha),\] and we have a contradiction. Otherwise if $t' > 1$, then for $j < t'$, \[q_j(\alpha) x_j = \IF^*( \alpha) \geq \IF^*( \alpha') = q_j(\alpha') x_j' \implies \frac{x_j}{x_j'} \geq \frac{q_j(\alpha')}{q_j(\alpha)} > 1 \implies x_j > x_j'.\] Then $x_{t'}' = 1 - \sum_{j < t'} x_j' > 1 - \sum_{j < t'} x_j \geq x_{t'}$. So \begin{align*}
        &q_{t'}(\alpha') x_{t'}' + (1-q_{t'}(\alpha')) y_{t'}' = \IF^*( \alpha') \leq \IF^*(\alpha)= q_{t'}(\alpha) x_t\\
        &\implies (1-q_{t'}(\alpha')) y_{t'}' \leq q_{t'}(\alpha) x_{t'} - q_{t'}(\alpha') x_{t'}' < q_{t'}(\alpha) x_{t'} - q_{t'}(\alpha') x_{t'}\\
        &\implies 0 \leq (1-q_{t'}(\alpha')) y_{t'}' < x_{t'} (q_{t'}(\alpha) - q_{t'}(\alpha')) \leq 0
    \end{align*} and we have a contradiction. Thus all cases result in a contradiction.
\end{proof}

\lemiurincreasing*

\begin{proof}

We first need the following Lemma, which shows that $\alpha \leq 1/2$ implies that $t(\alpha)$ must be an index at most halfway through the list of items. \begin{restatable}{lem}{lemtleqhalf}\label{lem:tleqhalf} If $\alpha \leq 1/2$, then $t(\alpha) \leq (n+1)/2$.
\end{restatable}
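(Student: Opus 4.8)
The plan is to prove the bound by combining a \emph{reversal symmetry} of the two-type instance with the monotonicity of the threshold already established in \Cref{lem:t-increasing}. The guiding idea is that swapping the roles of the two user types is the same as replacing $\alpha$ by $1-\alpha$ and reversing the item order $j \mapsto n-j+1$; since $\alpha \le 1/2$ means $\alpha \le 1-\alpha$, the monotonicity of $t$ will sandwich $t(\alpha)$ against its reflection and force $2t(\alpha) \le n+1$. So the work is to make the reflection precise and then chain two inequalities.

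First I would record the elementary identity relating $q_j$ at $\alpha$ and at $1-\alpha$. Directly from $q_j(\alpha) = \frac{\alpha \valuevec_j}{\alpha \valuevec_j + (1-\alpha)\valuevec_{n-j+1}}$, a one-line computation gives
\[
q_j(1-\alpha) = \frac{(1-\alpha)\valuevec_j}{(1-\alpha)\valuevec_j + \alpha \valuevec_{n-j+1}} = 1 - q_{n-j+1}(\alpha).
\]
Let $(x,y)$ be the unique optimal solution of Problem \eqref{eqn:simpleprobeqiurs} at $\alpha$ (type~1 receives $x$, type~2 receives $y$), guaranteed by \Cref{lem:sparse}. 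I would then study the map $(x,y)\mapsto(\tilde x,\tilde y)$ with $\tilde x_j = y_{n-j+1}$ and $\tilde y_j = x_{n-j+1}$, which is a bijection $\Delta_{n-1}^2 \to \Delta_{n-1}^2$. Using the identity above, for \emph{every} $(x,y)$,
\[
\IU_j(\tilde x,\tilde y, 1-\alpha) = (1-q_{n-j+1}(\alpha))\,y_{n-j+1} + q_{n-j+1}(\alpha)\,x_{n-j+1} = \IU_{n-j+1}(x,y,\alpha),
\]
so $\min_j \IU_j(\tilde x,\tilde y,1-\alpha) = \min_j \IU_j(x,y,\alpha)$. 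Hence the two instances have equal optimal item fairness, $\IF^*(1-\alpha)=\IF^*(\alpha)$, and the reflection carries the optimal (item-utility-equalizing) solution at $\alpha$ to an optimal solution at $1-\alpha$; by the uniqueness in \Cref{lem:sparse} it is \emph{the} solution at $1-\alpha$.

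Next I would translate this to the thresholds. By definition $t(1-\alpha) = \max\{j : \tilde x_j > 0\} = \max\{j : y_{n-j+1} > 0\}$, and writing $S = \{k : y_k(\alpha) > 0\}$ (nonempty since $\sum_k y_k = 1$) this equals $n+1 - \min S$. The sparsity structure of \Cref{lem:closedforms} gives $y_k(\alpha) = 0$ for all $k < t(\alpha)$, so $\min S \ge t(\alpha)$ and therefore $t(1-\alpha) \le n+1 - t(\alpha)$. On the other hand, since $\alpha \le 1/2$ we have $\alpha \le 1-\alpha$, so \Cref{lem:t-increasing} gives $t(\alpha) \le t(1-\alpha)$. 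Chaining the two inequalities yields $t(\alpha) \le n+1 - t(\alpha)$, i.e.\ $t(\alpha) \le (n+1)/2$, as claimed.

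The only delicate points are bookkeeping rather than conceptual. The main one is to justify that the reflected policy is not merely feasible but \emph{optimal} for the $1-\alpha$ instance and then invoke uniqueness (not just optimality) of \Cref{lem:sparse} to identify it with $(\tilde x,\tilde y)$; the bijection argument preserving $\min_j \IU_j$ is what supplies optimality and $\IF^*(1-\alpha)=\IF^*(\alpha)$. The second is the index bookkeeping at the threshold itself, where $y_{t(\alpha)}$ may be zero or positive; I would emphasize that the argument only needs $\min S \ge t(\alpha)$, which follows from $y_k=0$ for $k<t(\alpha)$ alone, so these ambiguous boundary cases never have to be separated.
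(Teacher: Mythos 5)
Your proof is correct, but it takes a genuinely different route from the paper's. Both arguments lean on the monotonicity of $t(\alpha)$ (\Cref{lem:t-increasing}), but the paper uses it only to reduce to the single point $\alpha = 1/2$: there it plugs $t' = n/2$ (even $n$) or $t' = (n+1)/2$ (odd $n$) into the closed forms of \Cref{lem:closedforms}, checks via $q_j = 1 - q_{n-j+1}$ that this yields a feasible item-utility-equalizing pair $(x',y')$, and then derives a contradiction with optimality when $t(1/2) > (n+1)/2$ through a chain of inequalities ending in $q_{t'}x_{t'} > q_{t'}x_{t'}$. You instead promote the identity $q_j(1-\alpha) = 1 - q_{n-j+1}(\alpha)$ to a measure-preserving bijection between the whole instances at $\alpha$ and $1-\alpha$, invoke the uniqueness in \Cref{lem:sparse} to identify the reflected optimum as \emph{the} optimum at $1-\alpha$, read off $t(1-\alpha) \le n+1-t(\alpha)$ from the sparsity pattern of \Cref{lem:closedforms}, and sandwich using $t(\alpha) \le t(1-\alpha)$. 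What your route buys: no explicit construction, no feasibility verification (the step the paper compresses into ``this is simple to see \dots and simplifying''), no parity case split, and no contradiction argument; you also get $\IF^*(\alpha) = \IF^*(1-\alpha)$ and the exact coordinate-level relation between the two optima as a by-product, which is strictly more information than the lemma asserts. What the paper's route buys: an explicit symmetric solution at $\alpha = 1/2$, in the same spirit as the $\alpha = 1/2$ computations reused later in the proof of \Cref{thm:misestimation}. Your two flagged obligations are indeed the only ones, and both hold: the displayed identity shows the reflection carries LP-feasible (equalizing) points to LP-feasible points of equal value, and $y_k = 0$ for $k < t(\alpha)$ is exactly the sparsity statement of \Cref{lem:closedforms}, noting that $t(\alpha) = \max\{j : x_j > 0\}$ never exceeds any threshold witnessing \Cref{lem:sparse}.
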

Now, for $1 \leq t \leq n$, define $\mathcal{A}(t) = \{\alpha: t(\alpha) = t\}$. Since $t(\alpha)$ is weakly increasing in $\alpha$ by Lemma \ref{lem:t-increasing}, $\mathcal{A}(t)$ is an interval in $(0,1)$, and $[\mathcal{A}(1), \mathcal{A}(2), ..., \mathcal{A}(n)]$ forms a consecutive partition of $(0,1)$. We will now show that $\IF^*( \alpha)$ is increasing on each of the first $\lfloor(n+1)/2\rfloor$ intervals, which by Lemma \ref{lem:tleqhalf} contain all $\alpha \in (0, 1/2]$.

\begin{restatable}{lem}{lemiurincreasingwithin} \label{lem:iurincreasingwithin}$\IF^*( \alpha)$ is increasing on $\mathcal{A}(t)$ for $t \leq (n+1)/2$.
\end{restatable}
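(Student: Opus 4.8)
The plan is to read off the closed form for $\IF^*$ from \Cref{lem:closedforms} and reduce the claim to a single inequality. On the interval $\mathcal{A}(t)$ the index $t = t(\alpha)$ is fixed, so $\IF^*(\alpha) = 1/D(\alpha)$ with $D(\alpha) = 1 + q_t L_t + (1-q_t)R_t = 1 + \sum_{j<t}\frac{q_t}{q_j} + \sum_{j>t}\frac{1-q_t}{1-q_j}$, where the $q_j = q_j(\alpha)$ vary with $\alpha$ but $t$, and hence the index ranges, do not. Since $\IF^* > 0$ by \Cref{lem:ifpos}, it is equivalent to show that $D$ is decreasing in $\alpha$.

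First I would substitute $s = (1-\alpha)/\alpha$ and $r_j = v_{n-j+1}/v_j$, so that $q_j = 1/(1+sr_j)$ and we get the clean identities $1/q_j = 1 + s r_j$ and $1/(1-q_j) = 1 + 1/(sr_j)$. As $\alpha$ increases $s$ decreases, so it suffices to show $D$ is increasing in $s$. Differentiating each summand in $s$ (each is a ratio of affine functions of $s$, so the derivatives are elementary) collapses to $\frac{dD}{ds} = \frac{1}{(1+sr_t)^2}\left[\sum_{j<t}(r_j - r_t) + \sum_{j>t}\frac{r_t}{r_j}(r_j - r_t)\right]$, so the whole lemma reduces to showing the bracketed quantity $S$ is nonnegative.

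Showing $S \ge 0$ using $t \le (n+1)/2$ is the main obstacle, since the lower-sum terms $r_j - r_t$ are negative (as $r_j$ is increasing in $j$, from $v$ decreasing) while the upper-sum terms are positive, and one must argue the positives dominate. The two facts I would exploit are the reciprocal symmetry $r_{n+1-j} = 1/r_j$ and the consequence $r_t \le 1$ of $t \le (n+1)/2$ (since then $t \le n-t+1$, giving $v_t \ge v_{n-t+1}$). I would pair each lower index $j < t$ with $\sigma(j) = n+1-j$, which satisfies $\sigma(j) > t$ precisely because $j < t \le (n+1)/2$, so $\sigma(j)$ is a genuine upper index; using $r_{\sigma(j)} = 1/r_j$ the paired contribution simplifies to $(r_j - r_t) + r_t(1 - r_t r_j) = r_j(1 - r_t^2) \ge 0$. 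The remaining upper indices $k$ not of the form $\sigma(j)$ contribute $r_t(1 - r_t/r_k) > 0$ since $r_k > r_t$. Summing, $S \ge 0$, which gives $\frac{dD}{ds} \ge 0$ and hence $\IF^*$ nondecreasing in $\alpha$ on $\mathcal{A}(t)$.

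Finally I would record the strictness: the inequality is strict whenever $t < (n+1)/2$ (then $r_t < 1$ makes each paired term strictly positive, or, in the edge case $t = 1$, the unpaired upper terms are strictly positive), so $\IF^*$ is strictly increasing on those intervals. Only in the boundary case $t = (n+1)/2$ do all paired terms vanish and no unpaired upper indices remain, giving $S = 0$ and $\IF^*$ constant on that middle interval — consistent with $\alpha = 1/2$ being the symmetric maximizer, and harmless for the main proof since there the strict increase of $\UF^*(1,\alpha)$ comes instead from the strictly increasing factors $1/(1-q_j(\alpha))$.
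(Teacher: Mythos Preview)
Your proof is correct and follows the same structural approach as the paper: both reduce to showing the denominator $D(\alpha) = 1 + q_tL_t + (1-q_t)R_t$ is nonincreasing on $\mathcal{A}(t)$, both pair each lower index $j<t$ with its reflection $n{+}1{-}j$ in the upper sum, and both handle the leftover middle indices $t<k\le n{-}t{+}1$ separately using $t\le(n{+}1)/2$ (equivalently $v_t\ge v_{n-t+1}$). The one substantive difference is presentational: the paper differentiates each piece $g(\alpha)=\frac{1-q_t}{1-q_j}$ and $h_t(\alpha)=\frac{q_t}{q_j}+\frac{1-q_t}{1-q_{n-j+1}}$ directly in $\alpha$, whereas your change of variables $s=(1-\alpha)/\alpha$, $r_j=v_{n-j+1}/v_j$ collapses the whole derivative to a single factored expression $\frac{dD}{ds}=\frac{1}{(1+sr_t)^2}S$ and turns the sign check into the transparent identity $r_j(1-r_t^2)\ge 0$ for paired terms --- a genuinely cleaner route to the same inequality.
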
 Finally, we show that this implies $\IF^*( \alpha)$ is increasing on $0 < \alpha \leq 1/2$. Notice that $\IU_j(x,y,\alpha)$ is continuous in $\alpha$ for each fixed $x,y$, so $\IF(x,y, \alpha) = \min_j \IU_j (x,y,\alpha)$ is continuous in $\alpha$ for each fixed $x,y$. Recall that $\IF^*(\alpha) = \max_{x,y \in \Delta_{n-1}} \IF(x,y,\alpha)$. As the supremum of a set of continuous functions on a compact set must itself be continuous 
, this means that $\IF^*(\alpha)$ must be continuous in $\alpha$, since $\Delta_{n-1} \times \Delta_{n-1}$ is compact. Since $\IF^*(\alpha)$ is continuous and is increasing on each interval $\{\mathcal{A}(t)\}_{t \leq (n+1)/2}$ by Lemma \ref{lem:iurincreasingwithin}, $\IF^*(\alpha)$ must be increasing on $\bigcup_{t \leq (n+1)/2} \mathcal{A}(t)$. By Lemma \ref{lem:tleqhalf}, $\bigcup_{t \leq (n+1)/2} \mathcal{A}(t) \supset (0, 1/2]$. So $\IF^*( \alpha)$ is increasing on $(0,1/2]$.\end{proof}

Now, we show Lemmas \ref{lem:tleqhalf} and \ref{lem:iurincreasingwithin}.

\lemtleqhalf*

\begin{proof}
    By Lemma \ref{lem:t-increasing}, it is enough to show that for $\alpha = 1/2$, $t(\alpha) \leq (n+1)/2$. 

    Suppose that $\alpha = 1/2$ and we have $x, y$ of the form in Lemma \ref{lem:closedforms} such that $t := t(\alpha) > (n+1)/2$. 

    If $n$ is even, take $t' = n/2$ in the definitions of $x_j, y_j$ in Lemma \ref{lem:closedforms}; if $n$ is odd, take $t' = (n+1)/2$. We now verify that this results in $x', y'$ that are a feasible solution to Problem \ref{eqn:simpleprobeqiurs} and that satisfy $x_j' = y_{n-j+1}'$. This is simple to see after observing that when $\alpha = 1/2$, $q_j = 1- q_{n-j+1}$, so for even $n$, $L_{t'} = R_{t'} + \frac{1}{1 - q_{t'}}$, and for odd $n$, $L_{t'} = R_{t'}$, and simplifying.

    Since by assumption $t(\alpha)$ yields the optimal solution, $\IF(x, y) > \IF(x', y')$. By the construction of solutions in Lemma \ref{lem:closedforms}, for all $j$ we have $\IU_j(x',y') = \IF(x', y')$ and $\IU_j(x, y) = \IF(x, y)$. Then for $j < t$, $q_j x_j = \IU_j(x, y) > \IU_j(x', y') = q_j x_j'$, so $x_j > x_j'$. Then since $\IU_{t'}(x,y) > \IU_{t'}(x', y')$, \begin{align*}
        q_{t'} x_{t'} &> q_{t'} x_{t'}' + (1- q_{t'}) y_{t'}'\\
        &= q_{t'} \left(1 - \sum_{j< t'} x_j'\right) + (1- q_{t'})y_{t'}'\\
        &> q_{t'} \left(1 - \sum_{j< t'} x_j\right) + (1- q_{t'})y_{t'}'\\
        &> q_{t'} \left(1 - \sum_{j< t'} x_j\right)\\
        &\geq q_{t'} x_{t'}
    \end{align*} which is a contradiction.
\end{proof}

\lemiurincreasingwithin*

\begin{proof}
    Recall that for fixed $t$, \[\IF^*(\alpha) = \frac{1}{1 + q_t(\alpha) L_t(\alpha) + (1-q_t(\alpha))R_t(\alpha)}.\] It is sufficient to show that $q_t L_t + (1-q_t)R_t$ is decreasing.

    We first expand the expression:\begin{align*}
        &q_t(\alpha) L_t(\alpha) + (1-q_t(\alpha)) R_t(\alpha)\\
        &= \sum_{j< t} \frac{q_t(\alpha)}{q_j(\alpha)} + \sum_{j>t}\frac{1-q_t(\alpha)}{1-q_j(\alpha)}\\
        &= \sum_{j < t} \frac{q_t(\alpha)}{q_j(\alpha)} + \sum_{j\geq n- t} \frac{1 - q_t(\alpha)}{1 - q_j(\alpha)} + \sum_{t < j < n-t}\frac{1-q_t(\alpha)}{1 -q_j(\alpha)}
        \end{align*}

        Let $t < j \leq n-t+1$. Then \begin{align*}
            g(\alpha) &:= \frac{1 - q_t(\alpha)}{1 - q_j(\alpha)}\\
            &= \frac{(1-\alpha) \valuevec_{n-t+1}}{\alpha \valuevec_t + (1-\alpha) \valuevec_{n-t+1}} \frac{\alpha \valuevec_j + (1-\alpha) \valuevec_{n-j+1}}{(1 - \alpha) \valuevec_{n-j+1}}\\
            &= \frac{\valuevec_{n-t+1}}{\valuevec_{n-j+1}} \frac{\alpha \valuevec_j + (1-\alpha) \valuevec_{n-j+1}}{\alpha \valuevec_t + (1-\alpha) \valuevec_{n-t+1}}\\
            &=  \frac{\valuevec_{n-t+1}}{\valuevec_{n-j+1}} \frac{ \valuevec_{n-j+1} + \alpha (\valuevec_j -\valuevec_{n-j+1})}{\valuevec_{n-t+1} + \alpha (\valuevec_t - \valuevec_{n-t+1})}\\
        \end{align*}
        So \begin{align*}g'(\alpha) &= \frac{\valuevec_{n-t+1}}{\valuevec_{n-j+1}} \frac{(\valuevec_j - \valuevec_{n-j+1})(\valuevec_{n-t+1} + \alpha (\valuevec_t-\alpha) \valuevec_{n-t+1})) - (\valuevec_t - \valuevec_{n-t+1})(\valuevec_{n-j+1} + \alpha (\valuevec_j - \valuevec_{n-j+1}))}{(\valuevec_{n-t+1} + \alpha (\valuevec_t - \valuevec_{n-t+1}))^2}\\
        &\propto (\valuevec_j - \valuevec_{n-j+1})(\valuevec_{n-t+1} + \alpha (\valuevec_t- \valuevec_{n-t+1})) - (\valuevec_t - \valuevec_{n-t+1})(\valuevec_{n-j+1} + \alpha (\valuevec_j -\valuevec_{n-j+1}))\\
        &= (\valuevec_j - \valuevec_{n-j+1}) \valuevec_{n-t+1} - (\valuevec_t - \valuevec_{n-t+1})\valuevec_{n-j+1}\\
        &= \valuevec_j \valuevec_{n-t+1} - \valuevec_t \valuevec_{n-j+1}\\
        &\leq \valuevec_t \valuevec_{n-j+1} - \valuevec_t \valuevec_{n-j+1} \tag{$j > t, n-j+1 < n-t+1$} \\
        &= 0
        \end{align*} So $g$ is a decreasing function of $\alpha$. Now, consider 
        \begin{align*}
        &\sum_{j< t} \frac{q_t(\alpha)}{q_j(\alpha)} + \sum_{j > n-t} \frac{1-q_t(\alpha)}{1-q_j(\alpha)}\\
        &= \sum_{j=1}^{t-1} \frac{q_t(\alpha)}{q_j(\alpha)} + \sum_{j=n-t+2}^n  \frac{1-q_t(\alpha)}{1-q_j(\alpha)}\\
        &= \sum_{j=1}^{t-1} \frac{q_t(\alpha)}{q_j(\alpha)} + \sum_{i=1}^{t-1}  \frac{1-q_t(\alpha)}{1-q_{n-i+1}(\alpha)} \tag{$i = n-j+1$}\\
        &= \sum_{j=1}^{t-1} \frac{q_t(\alpha)}{q_j(\alpha)} +\frac{1-q_t(\alpha)}{1-q_{n-j+1}(\alpha)} 
        \end{align*}
    Then for $1 \leq j \leq t-1$,
        \begin{align*}
       h_t(\alpha) &:= \frac{q_t(\alpha)}{q_j(\alpha)} +\frac{1-q_t(\alpha)}{1-q_{n-j+1}(\alpha)} \\
       &= \frac{\alpha \valuevec_t}{\alpha \valuevec_t + (1-\alpha) \valuevec_{n-t+1}}\cdot \frac{\alpha \valuevec_j + (1-\alpha) \valuevec_{n-j+1}}{\alpha \valuevec_j} \\
       &\quad+ \frac{(1-\alpha) \valuevec_{n-t+1}}{\alpha \valuevec_t + (1-\alpha) \valuevec_{n-t+1}}\cdot\frac{\alpha \valuevec_{n-j+1} + (1-\alpha) \valuevec_j}{(1-\alpha) \valuevec_j}\\
       &= \frac{\valuevec_t(\alpha \valuevec_j + (1-\alpha) \valuevec_{n-j+1}) + \valuevec_{n-t+1} (\alpha \valuevec_{n-j+1} + (1-\alpha) \valuevec_j)}{\valuevec_j (\alpha \valuevec_t + (1-\alpha) \valuevec_{n-t+1})}\\
       &= 1 + \frac{(1-\alpha) \valuevec_t \valuevec_{n-j+1} + \alpha \valuevec_{n-t+1} \valuevec_{n-j+1}}{\valuevec_j (\alpha \valuevec_t + (1-\alpha)\valuevec_{n-t+1})}\\
       &= 1 + \frac{\valuevec_{n-j+1}}{\valuevec_j}\cdot\frac{(1-\alpha) \valuevec_t + \alpha \valuevec_{n-t+1}}{\alpha \valuevec_t + (1-\alpha) \valuevec_{n-t+1}}\\
       &= 1 + \frac{\valuevec_{n-j+1}}{\valuevec_j}\cdot\frac{\valuevec_t + \alpha(\valuevec_{n-t+1} - \valuevec_t)}{\valuevec_{n-t+1} + \alpha (\valuevec_t - \valuevec_{n-t+1})}.
    \end{align*} So \begin{align*}
        h_t'(\alpha) &= \frac{\valuevec_{n-j+1}}{\valuevec_j} \\
        &\quad \cdot \frac{(\valuevec_{n-t+1} - \valuevec_t)(\valuevec_{n-t+1} + \alpha (\valuevec_t - \valuevec_{n-t+1})) - (\valuevec_t - \valuevec_{n-t+1})(\valuevec_t + \alpha(\valuevec_{n-t+1} - \valuevec_t))}{(\valuevec_{n-t+1} + \alpha (\valuevec_t - \valuevec_{n-t+1}))^2}\\
        &\propto -(\valuevec_t - \valuevec_{n-t+1})(\valuevec_{n-t+1} + \alpha (\valuevec_t - \valuevec_{n-t+1})) - (\valuevec_t - \valuevec_{n-t+1})(\valuevec_t + \alpha(\valuevec_{n-t+1} - \valuevec_t))\\
        &= -(\valuevec_t - \valuevec_{n-t+1})(\valuevec_{n-t+1} + \alpha (\valuevec_t - \valuevec_{n-t+1}) + \valuevec_t + \alpha(\valuevec_{n-t+1} - \valuevec_t))\\
        &= -(\valuevec_t - \valuevec_{n-t+1})(\valuevec_t + \valuevec_{n-t+1})\\
        &\leq -(\valuevec_t - \valuevec_{n-t+1}) \tag{$t \leq (n+1)/2 \leq n-t+1$}\\
        &\leq 0
    \end{align*}
    and $h_t$ is decreasing in $\alpha$ for $1 \leq j \leq t-1$. Then $\sum_{j=1}^{t-1} h_t(\alpha)$ is decreasing, and $q_t L_t + (1-q_t) R_t = \sum_{j = 1}^{t-1} h_t(\alpha) + \sum_{j=t+1}^{n-t+1} g_t(\alpha)$ is decreasing, and $\IF^*(\alpha)$ is increasing.
\end{proof}

\section{Proof of Theorem \ref{thm:misestimation}}
\label{sec:appproof2}
Recall Theorem \ref{thm:misestimation}.

\thmmisestimation*

\subsection{Main proof}

\begin{proof} 

We first find a worst-case price of misestimation with fairness constraints, and then find the price of misestimation without fairness constraints.

\paragraph{Item fairness constraints with misestimation.}

First, notice that in the misestimated value matrix $\hat\valuemx$ there are three distinct user value types, and thus we can identify $\mathcal{S}_\text{symm}$ with $\Delta_{n-1}^3$ and a policy $\rec \in \mathcal{S}_\text{symm}$ with vectors $x,y,z$, where $x$ is the recommendation policy for users with value $\hat \valuemx_{ij} = \valuevec_j$, $y$ is the recommendation policy for users with value $\hat \valuemx_{ij} = \valuevec_{n-j+1}$, and $z$ is the recommendation policy for the mis-estimated users.

Let \[\mathcal{S}' = \{x,y,z \in \mathcal{S}_\text{symm}: x_j = y_{n-j+1}, z_j = z_{n-j+1} \text{ for all } j\}.\]
Since $y$ is completely determined by $x$, we can identify $\mathcal{S}'$ with $\Delta_{n-1}^2$ and can identify $x,y,z \in \mathcal{S'}$ simply by $x,z$.

\begin{restatable}{lem}{lemthmtwoextrasymm}\label{lem:thm2:extrasymm}
    If a policy $\rec = (x,y,z) \in \mathcal{S}_{\textnormal{symm}}$ solves $\UF(\rec) = \UF^*(1)$, then there is some policy $\rec' = x',z' \in \mathcal{S}'$ that solves $\UF(\rec') = \UF^*(1)$.
\end{restatable}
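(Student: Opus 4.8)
The plan is to exploit a reflection symmetry of the misestimated instance. Define the reversal operator on $\Delta_{n-1}$ by $(\rev{u})_j = u_{n-j+1}$, and consider the map sending a symmetric policy $(x,y,z)$ to its reflection $(\rev{y}, \rev{x}, \rev{z})$. This map corresponds to relabeling items via $j \mapsto n-j+1$ together with swapping the two correctly-estimated user types. Because $\valuevec_j \mapsto \valuevec_{n-j+1}$ under the item relabeling, this exchanges the value vectors of the two estimated types and leaves the misestimated type's value vector $\frac{\valuevec_j + \valuevec_{n-j+1}}{2}$ invariant (it is palindromic); since moreover the two estimated types occur in equal proportion $\beta$, the whole instance is invariant. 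I would first verify that this reflection is a genuine symmetry of both the objective $\UF$ and the item-fairness feasible region, so that $(\rev{y}, \rev{x}, \rev{z})$ is again both optimal and feasible whenever $(x,y,z)$ is.

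Next I would average the policy with its reflection: set $x' = \tfrac12(x + \rev{y})$, $y' = \tfrac12(y + \rev{x})$, $z' = \tfrac12(z + \rev{z})$. A direct check gives $x'_j = \tfrac12(x_j + y_{n-j+1}) = y'_{n-j+1}$ and $z'_j = \tfrac12(z_j + z_{n-j+1}) = z'_{n-j+1}$, so $(x',y',z') \in \mathcal{S}'$. It then remains to argue that averaging preserves feasibility and optimality. The feasible set $\{\rec : \IF(\rec) = \IF^*\}$ is the set of maximizers of the concave functional $\IF$ over the convex set $\Delta_{n-1}^m$, hence convex — indeed it is exactly the polytope of \Cref{lem:feasiblesetlp} with $\gamma = 1$ — so the midpoint of two feasible points is feasible. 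Likewise $\UF$ is concave (a minimum of linear normalized utilities), giving $\UF(x',y',z') \geq \tfrac12\UF(x,y,z) + \tfrac12\UF(\rev{y},\rev{x},\rev{z}) = \UF^*(1)$, and optimality of $\UF^*(1)$ forces equality. This produces the desired $\rec' \in \mathcal{S}'$ attaining $\UF^*(1)$.

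The step I expect to require the most care is verifying that the reflection genuinely fixes both $\UF$ and $\IF$, since these functionals carry the normalizers $\max_j \valuemx_{ij}$ and $\sum_i \valuemx_{ij}$. For the user side, each estimated type has normalizer $\valuevec_1$, and the misestimated type's normalizer $\max_j \tfrac{\valuevec_j + \valuevec_{n-j+1}}{2}$ is unchanged under reversal, so the normalized per-user utilities are merely permuted among users of the swapped types, leaving $\min_i \UU_i$ fixed. For the item side I would check column-by-column that the relabeling $j \mapsto n-j+1$ carries $\IU_j(x,y,z)$ to $\IU_{n-j+1}(\rev{y},\rev{x},\rev{z})$, matching numerators and denominators using the equal proportions of the two estimated types and the palindromic misestimated vector; this leaves $\min_j \IU_j$ invariant. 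Once this invariance is in hand, the convexity/concavity argument above is routine, and the lemma follows.
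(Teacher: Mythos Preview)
Your proposal is correct and takes essentially the same approach as the paper: both construct the averaged policy $x'_j = \tfrac12(x_j + y_{n-j+1})$, $z'_j = \tfrac12(z_j + z_{n-j+1})$ and verify it lies in $\mathcal{S}'$, remains feasible, and is still optimal. The only difference is presentational: you frame the construction via the reflection symmetry $(x,y,z)\mapsto(\rev{y},\rev{x},\rev{z})$ and then invoke convexity of the feasible set and concavity of $\UF$, whereas the paper computes directly that $\IU_j(\rec') = \tfrac12(\IU_j(\rec)+\IU_{n-j+1}(\rec)) = \IF^*$ and that each user's normalized utility under $\rec'$ is at least the minimum of the two swapped types' utilities under $\rec$; these are the same calculations in different clothing.
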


Applying the framework in \Cref{prop:concavetolp} to this proof with $\mathcal{S} =\mathcal{S'}$ and using $\hat\valuemx$ as the user and item values, we reduce the problem to finding \begin{equation}\label{eqn:thm2lp}
    \begin{split}
        \phi = \arg\max_{x,z \in \Delta_{n-1}^2, \lambda} & \quad \lambda\\
        \text{subject to} &\quad \IU_j(x,z) = \lambda \quad \forall j,
        \end{split}
    \end{equation}
and showing that this $\phi$ is unique.

First, we use an analogue of the sparsity result \Cref{prop:generalsymmandsparse} to show that the optimal $x,z$ has the following sparse structure.

\begin{restatable}{lem}{lemsparsexytwo}\label{lem:thm2:sparsexy}
    Let $x, z$ be an optimal basic feasible solution to the linear program Problem \ref{eqn:thm2lp}. There is some $j \leq \frac{n+1}{2}$ such that for all $j' > j$, $x_{j'} = 0$ and for all $j' < j$, $z_{j'} = z_{n-j'+1} = 0$. Let the \textit{pivot index} $t$ denote the minimum such $j$.
\end{restatable}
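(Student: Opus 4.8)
The plan is to mirror \Cref{lem:sparse}: rewrite the item utilities so that Problem \ref{eqn:thm2lp} is a transparent linear program, and then pin down the support of an optimal basic feasible solution by an exchange argument. Let $a$ denote the population mass of each strong type and $b$ the mass of the misestimated users, and recall the three value profiles $\valuevec_j$, $\valuevec_{n-j+1}$, and $\hat\valuevec_j = (\valuevec_j + \valuevec_{n-j+1})/2$. Imposing the column symmetry defining $\mathcal{S}'$ (namely $y_j = x_{n-j+1}$ and $z_j = z_{n-j+1}$), a direct computation gives $\IU_j$ as the ratio with numerator $a\valuevec_j x_j + a\valuevec_{n-j+1} x_{n-j+1} + b\hat\valuevec_j z_j$ and denominator $a\valuevec_j + a\valuevec_{n-j+1} + b\hat\valuevec_j$. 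Two simplifications drive everything: since the masses sum to one and $2\hat\valuevec_j = \valuevec_j + \valuevec_{n-j+1}$, the denominator collapses to exactly $\hat\valuevec_j$; consequently the misestimated users' contribution to $\IU_j$ is simply $b z_j$, independent of the values. Writing $\sigma_j = a(\valuevec_j x_j + \valuevec_{n-j+1} x_{n-j+1})/\hat\valuevec_j$ for the strong-type contribution, feasibility reads $\sigma_j + b z_j = \lambda$ for every $j$, and one checks $\IU_j = \IU_{n-j+1}$ automatically, so there are only $\lceil n/2\rceil$ distinct constraints; as in \Cref{lem:feasiblesetlp}, an optimal solution meets all of them with $\lambda = \IF^*$.

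The identity $b z_j = \lambda - \sigma_j$ is the crux: it shows the support of $z$ is exactly $\{j : \sigma_j < \lambda\}$, so it suffices to determine the structure of $x$ and then read off $z$. To show $x$ is supported on a prefix, I would argue as in Part 2 of \Cref{lem:sparse}: if $x_j > 0$ while $x_{j'} = 0$ for some $j' < j$, then because $\valuevec_{j'} > \valuevec_j$ the same probability mass generates strictly more item-value at $j'$, so moving it there increases the aggregate $\sum_j \valuevec_j x_j$; since summing the equalities $\sigma_j\hat\valuevec_j + b\hat\valuevec_j z_j = \lambda \hat\valuevec_j$ ties this aggregate to $\lambda$, the freed budget can be redistributed using $x$ and the palindromic $z$ to raise every $\IU_\ell$ strictly above $\lambda$, contradicting optimality. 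Hence $x$ is supported on some prefix $\{1,\dots,t\}$ with $t$ at most the central index.

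With $x$ a prefix, the mirrored type-$2$ policy $y_j = x_{n-j+1}$ covers the symmetric suffix, so for central indices $j$ with $t < j \le n-t+1$ we have $\sigma_j = 0 < \lambda$, forcing $z_j > 0$; conversely $z_j = 0$ exactly where the strong types already reach $\lambda$, which the exchange argument places below the prefix boundary. This yields $z$ supported on the central band $\{t,\dots,n-t+1\}$, which is symmetric by the constraint $z_j = z_{n-j+1}$ and nonempty since $z \in \Delta_{n-1}$; nonemptiness of a band symmetric about the center forces $t \le (n+1)/2$. Taking $t$ minimal gives the stated pivot. Finally, the basic-feasible-solution count for Problem \ref{eqn:thm2lp} — at most $\lceil n/2\rceil + 1$ nonzero reduced coordinates, by the argument behind \Cref{prop:generalsymmandsparse} — is tight for this configuration and confirms that the two supports overlap in at most the single index $t$.

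I expect the exchange bookkeeping to be the main obstacle. Unlike \Cref{lem:sparse}, where $x$ and $y$ are independent, altering $x_j$ here simultaneously perturbs $\IU_j$ through the type-$1$ term and $\IU_{n-j+1}$ through the mirrored term $x_{n-j+1}$, and any compensating change in $z$ must respect both the simplex constraint and the palindromic symmetry. Choosing the perturbation magnitudes so that all affected $\IU_\ell$ move the right way while $x$ and $z$ stay valid distributions is delicate; the same care is what rules out a gap between the prefix and the band and thereby upgrades the monotone supports into the exact single-pivot statement.
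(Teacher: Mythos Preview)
Your plan parallels the paper's—exchange arguments plus a BFS sparsity count—but it skips a structural step that the paper uses precisely to dissolve the mirror coupling you flag as your main obstacle. Before any prefix argument, the paper proves separately (\Cref{lem:thm2:xhalfempty}) that $x_j = 0$ for every $j > (n+1)/2$, via the simplest possible exchange: shift mass from $x_j$ to $x_{n-j+1}$. Because $\IU_j = \IU_{n-j+1}$ depends on $x$ only through the pair $(x_j, x_{n-j+1})$, and because $q_{n-j+1} > 1/2$ when $j > (n+1)/2$, this move strictly raises $\IU_j$ without touching any other item. Only after this fold does the paper write the LP in the reduced coordinates $x_1,\dots,x_h,z_1,\dots,z_h$ with $h=\lfloor(n+1)/2\rfloor$, where now $\sigma_j = 2\beta q_j x_j$ cleanly (no $x_{n-j+1}$ term), the BFS count gives at most $h+1$ nonzeros, each $j\le h$ must have $x_j>0$ or $z_j>0$, and the contiguity exchanges for $x$ and for $z$ proceed exactly as in \Cref{lem:sparse}.

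Without that decoupling your prefix argument does not close. The aggregate identity you invoke ties $2a\sum_j v_j x_j$ to $\lambda$ only through $\sum_j \hat v_j z_j$, which is \emph{not} fixed by the simplex constraint on $z$; so increasing $\sum_j v_j x_j$ does not by itself free slack to raise every $\IU_\ell$. More seriously, your derivation of $t \le (n+1)/2$ is circular: you read off $z$'s band support from ``for $t < j \le n-t+1$, $\sigma_j = 0$'' and ``for $j < t$, $\sigma_j = \lambda$,'' but the first interval is only nonempty if $t \le (n+1)/2$, and the second equality uses $\sigma_j = 2\beta q_j x_j$, which already presumes $x_{n-j+1} = 0$. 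Nor does the identity $b z_j = \lambda - \sigma_j$ on its own pin down $z$'s support once $x$'s prefix is known—you still need a separate exchange (the paper's Part~2 argument for $z$) to rule out $z_j > 0$ at some $j < t$. Inserting the half-support lemma first resolves all of this at once and is what makes the subsequent bookkeeping no harder than in \Cref{lem:sparse}.
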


This sparse structure implies uniqueness.

\begin{restatable}{lem}{lemuniquettwo}\label{lem:thm2:unique}
    Problem \ref{eqn:thm2lp} has a unique optimal solution.
\end{restatable}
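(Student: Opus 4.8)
The plan is to reduce uniqueness of the \emph{optimal solution} to uniqueness of the \emph{optimal basic feasible solution (BFS)}, and then to read off the latter from the sparse structure in \Cref{lem:thm2:sparsexy}. Establishing this uniqueness is exactly what is needed to verify condition (iii) of \Cref{prop:concavetolp} for $\mathcal{S}'$, completing the reduction of $\UF^*(1,\hat\valuemx)$ to Problem \ref{eqn:thm2lp}. First I would note that the feasible region of Problem \ref{eqn:thm2lp} is a bounded polytope: $x,z$ lie in the compact set $\Delta_{n-1}\times\Delta_{n-1}$ and $\lambda=\IU_j(x,z)\in[0,1]$. The set of optimizers is therefore a bounded face of this polytope, and a nonempty bounded face with more than one point has at least two vertices, each of which is an optimal BFS. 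So it suffices to show any two optimal BFS coincide. By \Cref{lem:thm2:sparsexy}, every optimal BFS carries a well-defined pivot index $t\le(n+1)/2$, with $x$ supported on $\{1,\dots,t\}$ and the column-symmetric $z$ supported on $\{t,\dots,n-t+1\}$.

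The next step is to show that, once the pivot $t$ and the (unique) optimal value $\lambda=\IF^*$ are fixed, every coordinate is forced. Using the symmetry $y_j=x_{n-j+1}$, $z_j=z_{n-j+1}$ defining $\mathcal{S}'$ (and population shares $\beta,\beta,1-2\beta$ for the two known types and the misestimated type), the item-utility denominator collapses to the symmetric quantity $\tfrac12(\valuevec_j+\valuevec_{n-j+1})$, so that $\IU_j(x,z)=a_j x_j+a_{n-j+1}x_{n-j+1}+(1-2\beta)z_j$ with $a_j=2\beta\,\valuevec_j/(\valuevec_j+\valuevec_{n-j+1})$. Since $a_j$ is proportional to $q_j(\tfrac12)$, it is strictly decreasing in $j$ by \Cref{lem:increasing}. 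The equal-utility constraints then reduce to one-variable equations off the pivot: for $j<t$ they give $a_j x_j=\lambda$, and for $t<j<n-t+1$ they give $(1-2\beta)z_j=\lambda$, while the items $j>n-t+1$ are redundant copies of the low items under column symmetry. Thus all coordinates except the boundary pair $x_t,z_t$ are determined by $\lambda$, and the two simplex constraints together with the pivot equation $a_t x_t+(1-2\beta)z_t=\lambda$ form a nonsingular linear system in $(x_t,z_t,\lambda)$. Hence an optimal BFS with a given pivot is unique.

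It remains to rule out two optimal BFS with distinct pivots. Both attain the same optimal $\lambda=\IF^*$, so suppose for contradiction there are optimizers $(x,z)$ and $(x',z')$ with pivots $t<t'$. For $j<t$ the low-item equations force $x_j=\lambda/a_j=x_j'$; moreover, because $t<t'$ the index $t$ is \emph{interior} for the $t'$-solution (one checks $n-t+1\ge t'$ from $t,t'\le(n+1)/2$), giving $x_t'=\lambda/a_t$, whereas for the $t$-solution the boundary equation with $z_t\ge0$ forces $x_t\le\lambda/a_t$. Tracking these determined values through the simplex constraints and invoking the strict monotonicity of $a_j$ reproduces the sign contradiction used in the cross-pivot case of \Cref{lem:sparse} (a coordinate that is strictly positive by definition of the pivot is driven to vanish). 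Hence the pivots agree, and combined with the previous step the optimal BFS---and therefore the optimal solution---is unique.

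The main obstacle I anticipate is the bookkeeping at the two boundary items $j=t$ and $j=n-t+1$ and in the ``middle'' block $t<j<n-t+1$, which carries the extra $z$-variable arising from the third (misestimated) user type; this is precisely what distinguishes the argument from the two-type \Cref{lem:sparse}. Getting the cross-pivot contradiction to close cleanly requires the correct monotonicity of $a_j$ (together with the fact that the middle-block coefficient $1-2\beta$ is constant in $j$) and careful sign tracking so that the additional variable does not spoil the inequalities, essentially fusing a three-type analogue of \Cref{lem:t-increasing} with the uniqueness step of \Cref{lem:sparse}.
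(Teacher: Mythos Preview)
Your plan is correct and follows essentially the same route as the paper: reduce to uniqueness of optimal basic feasible solutions, use the pivot structure from \Cref{lem:thm2:sparsexy}, pin down all coordinates from the equal-utility and simplex equations when the pivots agree, and for distinct pivots $t<t'$ use $z_t'=0$ together with $z_t\ge 0$ and the simplex constraint on $x$ to force $x_j'=0$ for $j>t$, contradicting $x_{t'}'>0$. One small simplification: the strict monotonicity of $a_j$ is not actually needed in the cross-pivot step---once you have $x_j=x_j'$ for $j<t$ and $x_t\le\lambda/a_t=x_t'$, the simplex constraint alone ($\sum_{j\le t}x_j=1\le\sum_{j\le t}x_j'\le 1$) delivers the contradiction, exactly as in the paper's argument.
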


We can now describe what this solution $(x,z)$ looks like.
\begin{restatable}{lem}{lemclosedformtwo}\label{lem:thm2:closedform}
Let $(x,z)$ be the optimal solution to Problem \ref{eqn:halfproblem}, and let $t$ be the pivot element of $(x,z)$; suppose that $t \neq \frac{n+1}{2}$. Define \[L_t := \sum_{j < t} \frac{1}{q_j}.\] Then \[\lambda = \frac{2\beta q_t + \nicefrac{1}{2}(1-2\beta)}{1 + q_t L_t + \nicefrac{1}{2}(n-2t)},\]\[z_j = \begin{cases}
    0, & j < t\\
    \frac{1}{2}\left(1 - (n-2t)\frac{\lambda}{1 - 2\beta}\right), & j \in \{t, n-t+1\}\\
    \frac{\lambda}{1 - 2\beta}, & t < j < n-t+1
\end{cases}.\]  
    Finally, \[x_j = \begin{cases}
        \frac{\lambda}{2\beta q_j}, & j < t\\
        1- \frac{\lambda}{2\beta}L_t, & j = t\\
        0, & j > t
    \end{cases}.\] If $t = \frac{n+1}{2}$, then $x$ remains the same, but $z_t = 1$ and \[\lambda = \frac{2\beta q_t + (1-2\beta)}{1 + q_t L_t}.\]
\end{restatable}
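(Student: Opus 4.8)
The plan is to compute directly the unique optimizer guaranteed by \Cref{lem:thm2:unique}, reading off its support from \Cref{lem:thm2:sparsexy} and then solving the resulting linear system. The starting point is an explicit formula for the item utilities under the misestimated matrix $\hat\valuemx$. In the population there are $m\beta$ users of each real type and $m(1-2\beta)$ mis-estimated users whose estimated value for item $j$ is $\tfrac{\valuevec_j + \valuevec_{n-j+1}}{2}$; the key simplification is that the denominator $\sum_i \hat\valuemx_{ij}$ collapses to $\tfrac{m}{2}(\valuevec_j + \valuevec_{n-j+1})$, since $\beta(\valuevec_j + \valuevec_{n-j+1}) + (1-2\beta)\tfrac{\valuevec_j + \valuevec_{n-j+1}}{2} = \tfrac12(\valuevec_j + \valuevec_{n-j+1})$. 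Using the column symmetry $y_j = x_{n-j+1}$ that defines $\mathcal{S}'$ and writing $q_j = \tfrac{\valuevec_j}{\valuevec_j + \valuevec_{n-j+1}}$ (so that $q_{n-j+1} = 1-q_j$), this yields the clean form
\[ \IU_j(x,z) = 2\beta q_j x_j + 2\beta(1-q_j)x_{n-j+1} + (1-2\beta)z_j. \]

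Next I would feed in the sparsity of \Cref{lem:thm2:sparsexy}: $x_{j}=0$ for $j>t$, and $z_{j}=z_{n-j+1}=0$ for $j<t$. Because $t \le \tfrac{n+1}{2}$, the reflected index $n-j+1$ exceeds $t$ exactly when $j < n-t+1$, which lets me split the feasibility constraints $\IU_j = \lambda$ into three regimes. For $j<t$ only the $x_j$ term survives, giving $x_j = \tfrac{\lambda}{2\beta q_j}$; for $t<j<n-t+1$ only the $z_j$ term survives, giving $z_j = \tfrac{\lambda}{1-2\beta}$; and the constraints for $j\ge n-t+1$ turn out to be exact reflections of those already imposed, via $q_{n-j+1}=1-q_j$ and $z_j = z_{n-j+1}$. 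This reduces the whole system to the three unknowns $x_t, z_t, \lambda$.

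The remaining three scalar equations are the two normalizations and the single pivot constraint. From $\sum_j x_j = 1$ I obtain $x_t = 1 - \tfrac{\lambda}{2\beta}L_t$ with $L_t = \sum_{j<t}\tfrac{1}{q_j}$; from $\sum_j z_j = 1$, counting the $n-2t$ interior indices each carrying mass $\tfrac{\lambda}{1-2\beta}$ and using $z_t = z_{n-t+1}$, I obtain $z_t = \tfrac12\big(1-(n-2t)\tfrac{\lambda}{1-2\beta}\big)$; and substituting both into the pivot equation $2\beta q_t x_t + (1-2\beta)z_t = \lambda$ (the $x_{n-t+1}$ term vanishing when $t<\tfrac{n+1}{2}$) and solving for $\lambda$ produces $\lambda = \tfrac{2\beta q_t + \frac12(1-2\beta)}{1 + q_t L_t + \frac12(n-2t)}$. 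The degenerate case $t=\tfrac{n+1}{2}$ (odd $n$) is handled separately: there $n-t+1=t$, so $q_t=\tfrac12$ and the mis-estimated users place all mass on the single middle item ($z_t=1$), collapsing the pivot equation to $\lambda = 2\beta x_t + (1-2\beta)$, which together with $x_t = 1-\tfrac{\lambda}{2\beta}L_t$ determines the boundary value.

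I expect the main obstacle to be the bookkeeping of the reflected indices: one must verify that imposing $\IU_j=\lambda$ only for $j \le t$, together with the column symmetry and the identity $q_{n-j+1}=1-q_j$, genuinely forces the constraints for all items $j$, so that the constructed candidate is feasible for every item and not merely for the support indices. Once this consistency is confirmed, uniqueness is already supplied by \Cref{lem:thm2:unique}, and everything downstream is a routine linear solve; the only delicate point is the degenerate middle-item case, where the two ``endpoint'' coordinates $z_t, z_{n-t+1}$ coincide into a single variable and the generic counting of interior indices no longer applies.
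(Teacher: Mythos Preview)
Your proposal is correct and follows essentially the same approach as the paper: both read off the support structure from \Cref{lem:thm2:sparsexy}, use the feasibility equations $\IU_j=\lambda$ to solve for the nonzero entries of $x$ and $z$ in terms of $\lambda$, and then close the system via the two simplex normalizations and the pivot equation at $j=t$. Your explicit verification that the constraints for the reflected indices $j\ge n-t+1$ are automatically satisfied via $q_{n-j+1}=1-q_j$ and $z_j=z_{n-j+1}$ is slightly more careful than the paper, which instead works directly in the reduced Problem~\ref{eqn:halfproblem} indexed by $1\le j\le h$ and thereby avoids that check.
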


If we try $t = 1$ and obtain $z_1 < 0$ above, we can conclude that $t > 1$. Suppose $t = 1$. Then $L_t = 0$, so \[\lambda = \frac{2\beta q_1 + \nicefrac{1}{2}(1-2\beta)}{1 + \nicefrac{1}{2}(n-2)}\] and \begin{align*}z_1 = \frac{1}{2}\left(1 - (n-2)\frac{\lambda}{1-2\beta}\right) = \frac{1}{2}\left(1 - (n-2)\frac{1}{1-2\beta} \frac{2\beta q_1 + \nicefrac{1}{2}(1-2\beta)}{\nicefrac{n}{2}}\right) = \frac{1}{n}- \frac{n-2}{n}\frac{2\beta q_1}{1-2\beta}
\end{align*}
Then $t = 1$ and $z_1 < 0$ if and only if \begin{align*}\frac{1}{n}- \frac{n-2}{n}\frac{2\beta q_1}{1-2\beta} < 0 &\iff \frac{v_n}{v_1} < \frac{n-2}{\nicefrac{1}{2\beta} - 1} - 1.
\end{align*} since $q_1 = \frac{v_1}{v_1 + v_n}$.
If $\beta > \frac{1}{n}$, then \[\frac{n-2}{\nicefrac{1}{2\beta}- 1} -1 > \frac{n-2}{\nicefrac{n}{2} - 1} - 1 = 2\frac{n-2}{n-2} - 1 = 1.\] So if $\beta > \frac{1}{n}$, $t > 1$ and $z_1 = z_n = 0$. In this case, regardless of whether a mis-estimated user has $w_{i1} = v_1$ or $w_{in} = v_1$, that user will \textit{never} be recommended their most preferred item. 

Moreover, no matter the true type of the mis-estimated user, \[\UU_i(x,z) = \frac{1}{v_1}\left(\sum_{1 \leq j \leq n} v_j z_j\right) = \frac{1}{v_1}\left(\sum_{1 < j < n} v_j z_j\right) < \frac{v_2}{v_1}\left(\sum_{1 < j < n} z_j\right) = \frac{v_2}{v_1}\] Thus if $v_2 < \frac{v_1}{n} \epsilon$, then $\UU_i(x,z) < \frac{\epsilon}{n}$.

Taking $\hat\rec$ to be the recommendation probabilities induced by this pair $x, z$, this implies that $\UF(\hat\rec) < \epsilon / n$.

\paragraph{Item fairness constraints without mis-estimation.} This becomes precisely the problem we solve in \Cref{thm:pofdecreasing}; if the two types occur in equal proportion in the mis-estimated group as well as the correctly estimated group, then $\alpha = 1/2$.

If $\alpha = 1/2$, then by \Cref{lem:thm2:xhalfempty} with $\beta = 1$, $t = \lfloor \frac{n+1}{2}\rfloor$, and we can show that \[L_t = \begin{cases}R_t - \frac{1}{q_t}, & n \text{ even}\\
R_t, & n \text{odd}
\end{cases}.\]
By \Cref{lem:closedforms} we know that \[\IF^*  = \frac{1}{1 + q_t L_t + (1-q_t) R_t},\] so if $n$ is even, we have \begin{align*}
    \IF^* &= \frac{1}{1 + q_t (R_t - \frac{1}{q_t}) + (1-q_t) R_t}= \frac{1}{R_t} = \frac{1}{\sum_{j > t}\frac{1}{1 -q_j}} >  \frac{1}{2(n - (n/2))} = \frac{1}{n}
\end{align*} where we use the fact from \Cref{lem:qhalfineq} that if $j > \frac{n+1}{2}$, then $q_j < \frac{1}{2}$. Likewise if $n$ is odd, \begin{align*}
    \IF^* = \frac{1}{1 + q_t R_t + (1-q_t) R_t} = \frac{1}{1 + R_t} = \frac{1}{1 + \sum_{j > t}\frac{1}{1 -q_j}} >  \frac{1}{1 + 2(n - (n+1)/2)} = \frac{1}{n}.
\end{align*}
Then a user $i$ of type 1 will have normalized utility will be \[\UU_i(x,y) \geq \frac{x_1 v_1}{v_1} = \frac{\IF^*}{q_1} = \IF^*\frac{v_1 + v_n}{v_n} \geq \IF^* > \frac{1}{n}.\] Similarly a user of type 2 will have utility \[\UU_i(x,y) \geq \frac{y_n v_1}{v_1} = \frac{\IF^*}{1-q_n} = \IF^*\frac{v_1 + v_n}{v_n} \geq \IF^* > \frac{1}{n}.\]

This means that $\UF^*(1) > 1/n$.

\paragraph{Price of estimation with item fairness constraints.}

The two arguments above imply that \begin{align*}
    \pom(1, w, \hat w) &= 1 - \frac{\UF(\hat \rec, w)}{\UF^*(1, w)}\\
    &> 1 - \frac{\epsilon/n}{1/n}\\
    &= 1 - \epsilon.
\end{align*}

\paragraph{Price of estimation without item fairness constraints.} In this case an optimal recommendation policy may choose each user's policy individually without regard to item utilities. Given the mis-estimated values, it is optimal to give mis-estimated users the item $j^*$ or $n-j^*+1$ that maximizes $\frac{\valuevec_j + \valuevec_{n-j+1}}{2}$ deterministically. It is thus also optimal to recommend the item $j^*$ and $n-j^*+1$ each with probability $1/2$. This yields expected value \[\max_j \frac{\valuevec_j + \valuevec_{n-j+1}}{2} \geq \frac{\valuevec_1 + \valuevec_n}{2} > \frac{\valuevec_1}{2},\] and thus a mis-estimated user $i$ will receive a normalized value of $\UU_i(\rec') = 1/2$.

The optimal utility of each user if their preferences were correctly estimated is 1, so the price of misestimation is $1/2$.
\end{proof}

\subsection{Supporting lemma proofs}

First, we show a fact that will be helpful later on.
\begin{lem}\label{lem:qhalfineq}
    If $j < \frac{n+1}{2}$, then $q_j > 1/2$; if $j > \frac{n+1}{2}$, then $q_j < 1/2$. If $j = \frac{n+1}{2}, q_j = 1/2$.
\end{lem}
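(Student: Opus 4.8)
The plan is to unwind the definition of $q_j$ in the present (three-type) setting and reduce the entire claim to comparing the two indices $j$ and $n-j+1$. Since the two correctly-estimated opposing types occur in equal proportion $\beta$ each, the relevant mixing weight between them is $\alpha = 1/2$, so the quantity $q_j$ here is $q_j(1/2) = \frac{\valuevec_j}{\valuevec_j + \valuevec_{n-j+1}}$. I would begin by writing this out explicitly, noting that the denominator $\valuevec_j + \valuevec_{n-j+1}$ is strictly positive because all values are positive.

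The key algebraic observation is that $q_j$ compares to $1/2$ exactly as $\valuevec_j$ compares to $\valuevec_{n-j+1}$. Clearing denominators gives $q_j > 1/2 \iff 2\valuevec_j > \valuevec_j + \valuevec_{n-j+1} \iff \valuevec_j > \valuevec_{n-j+1}$, and the analogous equivalences hold with $>$ replaced by $=$ and by $<$. So the lemma reduces to determining the order of $\valuevec_j$ and $\valuevec_{n-j+1}$ under each hypothesis on $j$.

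It then remains to translate the hypotheses into index comparisons. I would note that $j < \frac{n+1}{2} \iff 2j < n+1 \iff j < n-j+1$; since $\valuevec_1 > \valuevec_2 > \cdots > \valuevec_n$ is strictly decreasing, $j < n-j+1$ yields $\valuevec_j > \valuevec_{n-j+1}$ and hence $q_j > 1/2$. Symmetrically, $j > \frac{n+1}{2}$ gives $j > n-j+1$, so $\valuevec_j < \valuevec_{n-j+1}$ and $q_j < 1/2$; and $j = \frac{n+1}{2}$ (possible only when $n$ is odd) gives $n-j+1 = j$, so $\valuevec_j = \valuevec_{n-j+1}$ and $q_j = 1/2$.

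There is no substantive obstacle here: the statement is a direct consequence of the strict monotonicity of $\{\valuevec_j\}$ together with the symmetry of the index map $j \mapsto n-j+1$ about $\frac{n+1}{2}$. The only point requiring minor care is the elementary index arithmetic relating $j \lessgtr \frac{n+1}{2}$ to $j \lessgtr n-j+1$, and confirming that the relevant $q_j$ is evaluated at $\alpha = 1/2$ in this theorem's context.
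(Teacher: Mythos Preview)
Your proposal is correct and follows essentially the same approach as the paper: both arguments reduce the comparison of $q_j$ with $1/2$ to the comparison of $\valuevec_j$ with $\valuevec_{n-j+1}$, using the strict monotonicity of $\{\valuevec_j\}$ and the index symmetry about $\frac{n+1}{2}$. The only cosmetic difference is that the paper rewrites $q_j = \frac{1}{1 + \valuevec_{n-j+1}/\valuevec_j}$ before comparing, whereas you clear denominators directly.
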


\begin{proof}
    If $j < \frac{n+1}{2}$, then $\valuevec_j > \valuevec_{n-j+1}$ and \[q_j = \frac{\valuevec_j}{\valuevec_j + \valuevec_{n-j+1}} =\frac{1}{1 + \frac{\valuevec_{n-j+1}}{\valuevec_j}} > \frac{1}{1 + 1} = \frac{1}{2};\] if $j < \frac{n+1}{2}$, then $\valuevec_j < \valuevec_{n-j+1}$ and a symmetric argument holds. 
    
    If $j = \frac{n+1}{2}$, then $\valuevec_j = \valuevec_{n-j+1}$ and thus $q_j = 1/2$.
\end{proof}

In the remainder of this section, we will use $\rev{x}$ to denote the reverse of a vector $x$, that is, $\rev{x}_j = x_{n-j+1}$ for all $j$. 

\lemthmtwoextrasymm*
\begin{proof}
        Suppose $(x,y,z) := \rec$ is an optimal solution to $\UF^*(1)$.

        First, if $\UF^*(1) = \UU_i(\rec)$ for some $i$ in the first or second type of user
        
        Let $\rec' = x', y', z'$ where $x_j' = y_{n-j+1}' = \frac{1}{2}(x_j+ y_{n-j+1})$, and $z_j' = \frac{1}{2}(z_j + z_{n-j+1}$.

        Then clearly $\rec' \in \mathcal{S}'$, and $\sum_j x_j' = \sum_j y_j' = \sum_j z_j' = 1$, $0 \leq x_j', y_j', z_j' \leq 1$ for all $j$.

        Furthermore, for users $i, i'$ of the first and second types respectively with correctly estimated types, \[\UU_i(\rec') = \sum_{j=1}^n x_j' v_j = \sum_{j=1}^n \frac{1}{2}(x_j + y_{n-j+1}) v_j = \frac{\UU_i(\rec) + \UU_{i'}(\rec)}{2} \geq \min\{\UU_i(\rec), \UU_{i'}(\rec)\}\]
        and by an analogous argument, \[\UU_{i'} \geq\min\{\UU_i(\rec), \UU_{i'}(\rec)\}.\]

        Moreover, the estimated normalized utility of a mis-estimated user $i$, will be \begin{align*}\UU_i(\rec') &= \sum_{j=1}^n z_j' \frac{v_j + v_{n-j+1}}{2}\\
        &= \frac{1}{2}\sum_{j=1}^n z_j \frac{v_j + v_{n-j+1}}{2} 
 + \frac{1}{2} \sum_{j=1}^n  z_{n-j+1} \frac{v_j + v_{n-j+1}}{2}\\
 &= \frac{1}{2}(\UU_i(\rec') + \UU_i(\rec')) \\
 &= \UU_i(\rec').\end{align*} So $\min_i \UU_i(\rec') \geq \min_i \UU_i(\rec)$.

        Finally, for any $j$, we can expand the definition of $\IU_j$ to see that \[
            \IU_j(\rec') = \frac{1}{2}(\IU_j(\rec) + \IU_{n-j+1}(\rec)) = \IF^*.
        \]

        Thus $\rec'$ is also an optimal solution to Problem \ref{eqn:originalconcaveproblem}.
\end{proof}

\lemsparsexytwo*

\begin{proof}

\textbf{Part 1.} While it is tempting attempt to directly apply the sparsity result in \Cref{prop:generalsymmandsparse} here, we will need a slightly stronger result to ensure that there are no $j$ such that $x_j = z_j = 0$. First, we must show that $x_j > 0$ and $y_j > 0$ are almost mutually exclusive.

\begin{restatable}{lem}{lemxhalfempty}\label{lem:thm2:xhalfempty}
    Let $(x,z)$ be an optimal solution to Problem \ref{eqn:thm2lp}. For $j > \frac{n+1}{2}$, $x_j = 0$. 
\end{restatable}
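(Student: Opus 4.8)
The plan is to first rewrite the item-utility constraint in a simple bilinear form tailored to $\mathcal{S}'$, and then run a mass-shifting exchange argument that exploits the coupling $x_j = y_{n-j+1}$ that $\mathcal{S}'$ builds in. Writing $\beta$ for the population share of each correctly-estimated type (so $1-2\beta$ is the misestimated share) and setting $q_j := \frac{\valuevec_j}{\valuevec_j + \valuevec_{n-j+1}}$, I would first compute $\IU_j$ for a policy $(x,z)\in\mathcal{S}'$. The key simplification is that item $j$'s denominator collapses, since $\sum_i \hat\valuemx_{ij} = m\beta\valuevec_j + m\beta\valuevec_{n-j+1} + m(1-2\beta)\tfrac{\valuevec_j+\valuevec_{n-j+1}}{2} = \tfrac{m}{2}(\valuevec_j+\valuevec_{n-j+1})$. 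Using $y_j = x_{n-j+1}$ and $z_j = z_{n-j+1}$, this yields
\[
\IU_j(x,z) = 2\beta q_j\, x_j + 2\beta(1-q_j)\, x_{n-j+1} + (1-2\beta)\, z_j .
\]
Two structural facts fall out immediately and drive the whole argument: (a) $\IU_j = \IU_{n-j+1}$ for every $j$, seen by substituting $q_{n-j+1} = 1-q_j$ and $z_{n-j+1}=z_j$, so items occur in symmetric pairs; and (b) the variables $x_j$ and $x_{n-j+1}$ enter only this single paired constraint $\IU_j = \IU_{n-j+1}$ (each with coefficient $2\beta q_j$ and $2\beta(1-q_j)$), and no other $\IU_k$.

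Next I would argue by contradiction. Suppose $x_{j_0} > 0$ for some $j_0 > \frac{n+1}{2}$, and set $j_1 = n - j_0 + 1 < \frac{n+1}{2}$, so that $j_0 \neq j_1$ and, by \Cref{lem:qhalfineq}, $q_{j_0} < \tfrac12$. Define a perturbed policy $\rec'$ by the exchange $x_{j_0} \mapsto x_{j_0} - \delta$ and $x_{j_1} \mapsto x_{j_1} + \delta$ for a small $\delta > 0$, leaving $z$ and all other coordinates of $x$ (hence of $y = \mathrm{rev}(x)$) unchanged. For $\delta$ small enough this is a valid policy in $\mathcal{S}'\subseteq\Delta_{n-1}^m$, since $x_{j_0} > 0$ and $\sum_j x_j$ is preserved. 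By observation (b) only the pair $\{j_0,j_1\}$ is touched, and a one-line computation gives
\[
\IU_{j_0}(\rec') - \IU_{j_0}(x,z) = 2\beta\delta\,(1 - 2q_{j_0}) > 0,
\]
with $\IU_k(\rec') = \IU_k(x,z)$ for all $k \notin \{j_0,j_1\}$. The intuition is exactly the ``weaker preferences'' mechanism: because of the coupling $x_{j_1} = y_{j_0}$, raising $x_{j_1}$ moves type-$2$ probability onto the low-value item $j_0$, which type $2$ values highly, and this more than offsets the type-$1$ probability pulled off $j_0$ (whose coefficient $q_{j_0} < \tfrac12$ is small).

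Finally I would convert this into a contradiction with optimality. Since $(x,z)$ solves Problem \eqref{eqn:thm2lp}, all its item utilities equal a common value, which is $\IF^*$ by (the $\mathcal{S}'$-restricted form of) \Cref{lem:feasiblesetlp}. Hence $\rec'$ has $\IU_{j_0}(\rec') = \IU_{j_1}(\rec') > \IF^*$ while every other item stays at $\IF^*$. If $\{j_0,j_1\}$ already exhausts all items (the $n=2$ case) then $\IF(\rec') = \min_j \IU_j(\rec') > \IF^*$ directly contradicts the optimality of $\IF^*$; otherwise $\IF(\rec') = \IF^*$ but $\rec'$ is a maximizer of $\min_j \IU_j$ possessing an item strictly above the minimum, which \Cref{lem:feasiblesetlp} forbids (recall $\IF^* > 0$ by \Cref{lem:ifpos}). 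Either way we reach a contradiction, so $x_{j_0} = 0$. The only delicate step is setting up the reduced form and verifying the coupling (b) that confines the exchange to a single symmetric item-pair; once that bookkeeping is done, the exchange computation and the appeal to \Cref{lem:feasiblesetlp} are immediate.
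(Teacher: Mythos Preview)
Your argument is correct and follows the same core exchange idea as the paper's proof: both show that if $x_{j_0}>0$ for some $j_0>\frac{n+1}{2}$, shifting $x$-mass from $j_0$ to its mirror $j_1=n-j_0+1$ improves the paired item utility because $q_{j_0}<\tfrac12$. The one difference is in how the contradiction is closed. The paper moves \emph{all} of $x_{j_0}$ to $x_{j_1}$ while spreading a tiny $\epsilon$ across every other coordinate of $x$, so that \emph{every} $\IU_\ell$ strictly increases and one obtains $\IF(x',z)>\IF^*$ directly. You instead keep all other coordinates fixed, so only $\IU_{j_0}=\IU_{j_1}$ rise while the rest stay at $\IF^*$, and then invoke \Cref{lem:feasiblesetlp} (every maximizer of $\min_j\IU_j$ has all item utilities equal) to rule out $\rec'$. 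Your route is a bit cleaner—no $\epsilon$-spreading bookkeeping and no need to check strict increase at every item—at the cost of one extra appeal to \Cref{lem:feasiblesetlp}; the paper's route avoids that appeal by producing a strict improvement of $\IF$ outright. Note that the unrestricted \Cref{lem:feasiblesetlp} already suffices (no ``$\mathcal{S}'$-restricted form'' is needed), since $\rec'\in\mathcal{S}'\subseteq\Delta_{n-1}^m$ and the lemma applies to any policy in $\Delta_{n-1}^m$ with the estimated utilities $\hat w$.
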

In particular, if we let $h = \lfloor\frac{n+1}{2}\rfloor$ indicate the index halfway through the set of items, this means that we can simplify the problem above to 
\begin{equation}\label{eqn:halfproblem}
\begin{split}
    \arg\max_{\substack{x_1, ..., x_h,\\
    z_1, ..., z_{h},\\ \lambda}} & \quad \lambda \\
    \text{subject to} & \quad \IU_j(x,z) = \lambda, 1 \leq j \leq h\\
    & \quad \sum_{j \leq h} x_j = \sum_{j \leq h} z_j  + \sum_{h < j \leq n} z_{n-j+1}= 1\\
    & \quad x_j, z_j \geq 0.
\end{split}\end{equation} 
Now this is a linear program with $2h + 1$ variables and $h + 2$ constraints, so by an argument analogous to Problem \ref{eqn:halfproblem}, we find that every basic feasible solution $x_{1:h}, z_{1:h}$ must share $h-1$ zeros. This means that there is a single index $t$ such that $x_t > 0, z_t > 0$; if there were more than one, by the pigeonhole principle there must be some $t' \leq h$ such that $x_{t'} = z_{t'} = x_{n-t'+1} = 0$, and $\IU_{t'}(x,z) = 0$. Then $\IF(x,z) = 0$; however, by \Cref{lem:ifpos}, $\IF(x,z) = \IF^* > 0$, so we have a contradiction.

In terms of $x,z$, \begin{align*}\IU_j(x,z) &= \frac{\beta \valuevec_j x_j + \beta \valuevec_{n-j+1} x_{n-j+1} + (1-2\beta)\frac{\valuevec_j + \valuevec_{n-j+1}}{2} z_j}{\beta \valuevec_j + \beta \valuevec_{n-j+1} + (1-2\beta)\frac{\valuevec_j + \valuevec_{n-j+1}}{2}} \\
&= \frac{\beta \valuevec_j x_j + \beta \valuevec_{n-j+1} x_{n-j+1} + (1-2\beta)\frac{\valuevec_j + \valuevec_{n-j+1}}{2} z_j}{\frac{\valuevec_j + \valuevec_{n-j+1}}{2}}\\
&= 2 \beta\left( \frac{\valuevec_j}{\valuevec_j + \valuevec_{n-j+1}} x_j + \frac{\valuevec_{n-j+1}}{\valuevec_j + \valuevec_{n-j+1}}  x_{n-j+1}\right) + (1-2\beta) z_j\\
\end{align*}

Let $q_j = \frac{\valuevec_j}{\valuevec_j + \valuevec_{n-j+1}}$. Note that $q_j$ does not depend on $\beta$, and $q_j$ is decreasing in $j$. Then \[\IU_j(x,y,z) = 2 \beta\left(q_j x_j + (1-q_j) y_j\right) + (1-2\beta) z_j.\]

\textbf{Part 2.} Moreover, the set of $j$ such that $x_j = 0$ is contiguous; similarly for $z_j$.

    We prove this by contradiction, showing that if this set is not contiguous, we can move around probability mass until we achieve greater than optimal item fairness. Let $i < j \leq h$, and let $x, z$ be an optimal solution to Problem \ref{eqn:halfproblem}. We need to show that if $x_i = 0$ then $x_j = 0$, and if $z_i > 0$, then $z_j > 0$.

    Suppose first that $x_i = 0$ and $x_j > 0$. Notice that since $x,z$ are optimal, \[(1-2\beta) z_i = \IU_i(x,z) = \IU_j(x,z) = 2\beta q_j x_j + (1-2\beta) z_j\]
    
    Define \[x_\ell' = \begin{cases}
        x_j - \epsilon, & \ell = i\\
        x_i + \frac{\epsilon}{h-1}, & \ell = j\\
        x_\ell + \frac{\epsilon}{h-1}, & \text{otherwise}
    \end{cases}, \quad z_\ell' = \begin{cases}
        z_j, & \ell = i\\
        z_i, & \ell = j\\
        z_\ell, & \text{otherwise}
    \end{cases}\] for $\epsilon$ small enough that $x' \in \Delta_{h-1}$. Now for all $\ell \notin\{i, j\}$, $\IU_\ell(x', z') > \IU_\ell(x, z)$, since we have increased the probability mass on $x_\ell$ while leaving $z_\ell$ unchanged. Furthermore, \begin{align*}
        \IU_i(x',z') &= 2\beta q_i \left(x_j - \epsilon\right) + (1-2\beta) z_j\\
        &= 2\beta(q_i - q_j) x_j + 2\beta q_j x_j + (1-2\beta) z_j - 2\beta q_i \epsilon\\
        &= 2\beta(q_i - q_j) x_j + (1-2\beta)z_i - 2\beta q_i \epsilon\\
        &= \IU_i(x,z) + 2\beta(q_i - q_j) x_j - 2\beta q_i \epsilon\\
        &> \IU_i(x,z)
    \end{align*} as long as $q_i \epsilon < (q_i - q_j) x_j$, which we can ensure by taking $\epsilon$ small enough, since $q_i > q_j$. Moreover, \begin{align*}
        \IU_j(x', z') &= 2\beta q_j \left(x_i + \frac{\epsilon}{h-1}\right) + (1-2\beta) z_i\\
        &= 2\beta q_j \left(x_i + \frac{\epsilon}{h-1}\right) + 2\beta q_j x_j + (1-2\beta) z_j\\
        &> 2\beta q_j x_j + (1-2\beta) z_j\\
        &= \IU_j(x,z).
    \end{align*} Thus \[\IF(x', z') = \min_\ell \IU_\ell(x', z') > \min_\ell \IU_\ell(x,z) = \IF(x,z) = \IF^*\] and we have a contradiction.

    Similarly, suppose that $z_i > 0$ and $z_j = 0$. Then since $x,z$ is optimal, we have \begin{equation}\label{eqn:halfempty:iureq}2\beta q_i x_i + (1-2\beta) z_i = \IU_i(x,z) = \IU_j(x,z) = 2\beta q_j x_j.\end{equation} Define \[z_\ell' = \begin{cases}
        z_i - \epsilon, & \ell = j\\
        z_j + \frac{\epsilon}{h-1}, & \ell = i\\
        z_\ell + \frac{\epsilon}{h-1}, & \text{otherwise}
    \end{cases}, \quad x_\ell' = \begin{cases}
        x_j - \delta, & \ell = j\\
        x_i + \delta, & \ell = i\\
        x_\ell, & \text{otherwise}
    \end{cases}\] for $\delta = \frac{1-2\beta}{2\beta q_i} z_i > 0$ and taking $\epsilon$ small enough that $z' \in \Delta_{h-1}$ and $\epsilon < z_i (1 - \nicefrac{q_j}{q_i})$. This implies that \[\delta = \frac{1-2\beta}{2\beta q_i}z_i < \frac{(1-2\beta)(z_i - \epsilon)}{2\beta q_j}\]

    Now \begin{align*}
        x_i + \delta &= x_i + \frac{1-2\beta}{2\beta q_i} z_i \\
        &= \frac{1}{2\beta q_i}(2\beta q_i x_i + (1-2\beta)z_i)\\
        &= \frac{1}{2\beta q_i}2\beta q_j x_j\tag{\ref{eqn:halfempty:iureq}}\\
        &= \frac{q_j}{q_i} x_j\\
        &< x_j\tag{$q_j < q_i$}\\
        &\leq 1
    \end{align*}
    and \begin{align*}
        x_j - \delta &> x_j - \frac{(1-2\beta)(z_i - \epsilon)}{2\beta q_j}\\
        &= \frac{1}{2\beta q_j}(2\beta q_j x_j - (1-2\beta)(z_i - \epsilon))\\
        &> \frac{1}{2\beta q_j}(2 \beta q_j x_j - (1-2\beta) z_i)\\
        &= \frac{1}{2\beta q_j}(2 \beta q_j x_j - (2\beta q_j x_j - 2\beta q_i x_i)) \tag{\ref{eqn:halfempty:iureq}}\\
        &= \frac{1}{2\beta q_j}2\beta q_i x_i\\
        &> 0
    \end{align*} and $x_i + \delta > x_i \geq 0$, $x_j - \delta < x_j \leq 1$, so $x' \in \Delta_{h-1}$.
    
    Now as above $\IU_\ell(x', z') > \IF(x,z)$ for all $\ell \notin\{i,j\}$, since there is strictly more probability mass on $z_\ell$ while $x_\ell$ remains the same. Furthermore, \begin{align*}
        \IU_i(x', z') - \IU_i(x,z)
        &= 2\beta q_i(x_i + \delta) + (1-2\beta) (z_j + \frac{\epsilon}{h-1}) - (2\beta q_i x_i + (1-2\beta) z_i) \\
        &> 2\beta q_i(x_i + \delta) - (2\beta q_i x_i + (1-2\beta) z_i) \\
        &= 2\beta q_i \delta - (1-2\beta) z_i \\
        &= 2\beta q_i \frac{(1-2\delta)}{2\beta q_i}z_i - (1-2\beta) z_i \\
        &= 0
    \end{align*} and \begin{align*}
        \IU_j(x', z') - \IU_j(x,z) &= 2\beta q_j (x_j - \delta) + (1-2\beta)(z_i - \epsilon) - 2\beta q_j x_j \\
        &= (1-2\beta)(z_i - \epsilon) - 2\beta q_j \delta\\
        &> (1-2\beta)(z_i - \epsilon) - 2\beta q_j\frac{(1-2\beta)(z_i - \epsilon)}{2\beta q_j}\\
        &= 0
    \end{align*}
Then \[\IF(x', z') = \min_\ell \IU_\ell(x',z')> \min_\ell \IU_\ell(x,z) = \IF(x,z) = \IF^*\] and we have found $x', z'$ achieving a higher-than-optimal $\IF$, which is a contradiction.
\end{proof}

\lemxhalfempty*

\begin{proof}
    We prove this by contradiction. Suppose that for $j > \frac{n+1}{2}$, $x_j > 0$. Then define $x'$ such that\[x_\ell' = \begin{cases}
        0, & \ell = j\\
        x_{n-j+1} + (x_j - \epsilon) & \ell = n-j+1\\
        x_\ell + \frac{\epsilon}{n-2}
    \end{cases}\] where $0 < \epsilon < (2q_{n-j+1} - 1)x_j$. Since $j > \frac{n+1}{2}$, $q_{n-j+1} > 1/2$ by \Cref{lem:qhalfineq}, so such $\epsilon$ exists. Then for all $\ell \notin \{j, n-j+1\}$,\begin{align*}\IU_\ell(x', z) &= 2\beta\left(q_\ell x_\ell' + (1-q_\ell)x_{n-\ell+1}'\right) + (1-2\beta)z_\ell\\
    &> 2\beta\left(q_\ell x_\ell + (1-q_\ell)x_{n-\ell+1}\right) + (1-2\beta)z_\ell\\
    &= \IU_\ell(x,z).
    \end{align*}
    Since $\epsilon < (2 q_{n-j+1} -1)x_j$, we can rearrange to see that $(1-q_{n-j+1}) x_j < q_{n-j+1} (x_j - \epsilon)$. This implies that \begin{align*}
        \IU_{n-j+1}(x', z) &= 2\beta\left(q_{n-j+1}(x_{n-j+1} + (x_j -\epsilon)) + q_j \cdot 0\right) + (1-2\beta)z_{n-j+1} \\
        &=2 \beta\left(q_{n-j+1} x_{n-j+1} + q_{n-j+1} (x_j - \epsilon)\right) + (1-2\beta)z_{n-j+1} \\
        &> 2 \beta \left(q_{n-j+1} x_{n-j+1} + 2\beta (1-q_{n-j+1}) x_j\right) + (1-2\beta)z_{n-j+1} \\
        &= \IU_{n-j+1}(x,z)
    \end{align*} and \begin{align*}
        \IU_j(x', z) &= 2\beta\left(q_j \cdot 0 + (1-q_j)\cdot (x_{n-j+1} + (x_j - \epsilon))\right) + (1-2\beta) z_j\\
        & 2\beta\left(q_{n-j+1} x_{n-j+1} + q_{n-j+1}(x_j - \epsilon)\right) + (1-2\beta)z_j \tag{$q_{n-j+1} = 1- q_j$}\\
        &> 2\beta\left(q_{n-j+1} x_{n-j+1} + (1-q_{n-j+1})x_j\right) + (1-2\beta)z_j \\
        &= 2\beta\left((1-q_j) x_{n-j+1} + q_j x_j\right) + (1-2\beta)z_j\tag{$q_{n-j+1} = 1- q_j$} \\
        &= \IU_j(x, z)
    \end{align*} Thus we have found a solution $x', z$ that satisfies \[\IF(x', z) = \min_j \IU_j(x', z) > \min_j \IU_j(x,z) = \IF(x,z) = \IF^*\] and we have a contradiction.
\end{proof}

\lemuniquettwo*
\begin{proof}
    Suppose $(x, z)$ and $(x', z')$ are both optimal basic feasible solutions to Problem \ref{eqn:halfproblem}. Let $t$ be the pivot index of $(x,z)$ and $t'$ be the pivot index of $(x', z')$, and suppose without loss of generality that $t \leq t'$.

    First, if $t < t'$, then for all $j < t$, $z_j = z_j' = 0$, and \[2\beta q_j x_j = \IU_j(x, z) = \IF(x,z) = \IF(x', z') = \IU_j(x', z') = 2\beta q_j x_j'\] so $x_j = x_j'$. If $j > t$, $x_j = 0$. For $j = t$, $z_t' = 0$ and \begin{align*}
        &2\beta q_t x_t + (1-2\beta) z_t = 2\beta q_t x_t'\\
        &\implies 2\beta q_t\left(1 - \sum_{j < t} x_j\right) + (1-2\beta) z_t = 2\beta q_t x_t'\\
        &\implies 2\beta q_t\left(1 - \sum_{j < t} x_j'\right) + (1-2\beta) z_t = 2\beta q_t x_t'\\
        &\implies  (1-2\beta) z_t = 2\beta q_t \left(\sum_{j \leq t} x_j' - 1\right).
    \end{align*} But $(1-2\beta) z_t \geq 0$, so $\sum_{j \leq t} x_j' = 1$ and for $j > t$, $x_j' = 0$. This implies that $t' \leq t$, which is a contradiction, so it is not possible for $t' > t$.

    Now, if $t = t'$, then again for all $j < t$, $z_j = z_j' = 0$, and $x_j = x_j'$. For $t < j \leq h$, $x_j = x_j' = 0$ and if $t < h$, \[(1-2\beta) z_j = \IU_j(x, z) = \IF(x,z) = \IF(x', z') = \IU_j(x', z') = (1-2\beta) z_j'\] so $z_j = z_j'$. Finally, \[x_t = 1 - \sum_{j < t} x_j = 1 - \sum_{j < t} x_j' = x_t'\] and \[z_t = \begin{cases}\frac{1}{2}(1 - 2\sum_{t < j \leq h} z_j) = \frac{1}{2}(1 - 2\sum_{t < j \leq h} z_j') = z_t', & n \text{ even}\\
    \frac{1}{2}(1 - 2\sum_{t < j < h} z_j - z_h) = \frac{1}{2}(1- 2\sum_{t < j < h} z_j'  - z_h) = z_t', & n \text{ odd}
    \end{cases}.\] If $t = h$, then $z_t = 1 = z_t'$. Thus $(x,z) = (x', z')$.

    Thus there is only one optimal basic feasible solution to Problem \ref{eqn:halfproblem}, and thus only one optimal solution by the fundamental theorem of linear programming. 
\end{proof}

\lemclosedformtwo*

\begin{proof}
    For all $j < t$, $\lambda = 2\beta q_j x_j$, so $x_j = \frac{\lambda}{2\beta q_j}$. Moreover, \[x_t = 1 - \sum_{j < t} x_j = 1 - \sum_{j < t} \frac{\lambda}{2 \beta q_j} = 1 - \frac{\lambda}{2\beta}L_t.\]

    Suppose that $n$ is odd and that $t = h = \frac{n+1}{2}$. Then $z_t = z_h$ must be $1$ since for all $j < h$ $z_j = 0$. Moreover, \[\lambda = 2\beta q_t x_t + (1-2\beta) z_t = 2\beta q_t \left(1 - \frac{\lambda}{2\beta}L_t\right) + (1-2\beta)\] so, rearranging, \[\lambda = \frac{2\beta q_t + (1-2\beta)}{1 + q_t L_t}.\]

    Otherwise, $t < j \leq h$, $\lambda = (1-2\beta) z_j$, so $z_j = \frac{\lambda}{1 - 2\beta}$. Then since $\sum_{j \leq h} z_j + \sum_{h < j \leq n} z_{n-j+1} = 1$, \begin{align*}z_t &= \frac{1}{2}\left(1 - \sum_{j < t} z_j - \sum_{t < j \leq h} z_j - \sum_{h < j < n-t+1} z_{n-j+1} - \sum_{j > n-t+1} z_{n-j+1}\right)\\
    &= \frac{1}{2}\left(1 - \sum_{t < j \leq h} z_j - \sum_{h < j < n-t+1} z_{n-j+1}\right) \tag{$z_j = 0$ for $j < t$}\\
    &= \frac{1}{2}\left(1 - (n-2t)\frac{\lambda}{1-2\beta}\right).\end{align*}
    Moreover, \[\lambda = 2\beta q_t x_t +  (1-2\beta) z_t = 2\beta q_t \left(1 - \frac{\lambda}{2\beta}L_t\right) + (1-2\beta) \frac{1}{2}\left(1- (n-2t) \frac{\lambda}{1- 2\beta}\right).\] Rearranging, \[\lambda(1 + q_t L_t + \nicefrac{1}{2}(n-2t)) = 2\beta q_t + \nicefrac{1}{2}(1-2\beta),\] so \[\lambda = \frac{2\beta q_t + \nicefrac{1}{2}(1-2\beta)}{1 + q_t L_t + \nicefrac{1}{2}(n-2t)}.\]
\end{proof}

\newpage
\section*{NeurIPS Paper Checklist}

\begin{enumerate}

\item {\bf Claims}
    \item[] Question: Do the main claims made in the abstract and introduction accurately reflect the paper's contributions and scope?
    \item[] Answer: \answerYes{} %
    \item[] Justification: We believe the abstract and the introduction provide an accurate overview of the contents of this paper.
    \item[] Guidelines:
    \begin{itemize}
        \item The answer NA means that the abstract and introduction do not include the claims made in the paper.
        \item The abstract and/or introduction should clearly state the claims made, including the contributions made in the paper and important assumptions and limitations. A No or NA answer to this question will not be perceived well by the reviewers. 
        \item The claims made should match theoretical and experimental results, and reflect how much the results can be expected to generalize to other settings. 
        \item It is fine to include aspirational goals as motivation as long as it is clear that these goals are not attained by the paper. 
    \end{itemize}

\item {\bf Limitations}
    \item[] Question: Does the paper discuss the limitations of the work performed by the authors?
    \item[] Answer: \answerYes{} %
    \item[] Justification: We discuss limitations in Section \ref{sec:discussion}.
    \item[] Guidelines:
    \begin{itemize}
        \item The answer NA means that the paper has no limitation while the answer No means that the paper has limitations, but those are not discussed in the paper. 
        \item The authors are encouraged to create a separate "Limitations" section in their paper.
        \item The paper should point out any strong assumptions and how robust the results are to violations of these assumptions (e.g., independence assumptions, noiseless settings, model well-specification, asymptotic approximations only holding locally). The authors should reflect on how these assumptions might be violated in practice and what the implications would be.
        \item The authors should reflect on the scope of the claims made, e.g., if the approach was only tested on a few datasets or with a few runs. In general, empirical results often depend on implicit assumptions, which should be articulated.
        \item The authors should reflect on the factors that influence the performance of the approach. For example, a facial recognition algorithm may perform poorly when image resolution is low or images are taken in low lighting. Or a speech-to-text system might not be used reliably to provide closed captions for online lectures because it fails to handle technical jargon.
        \item The authors should discuss the computational efficiency of the proposed algorithms and how they scale with dataset size.
        \item If applicable, the authors should discuss possible limitations of their approach to address problems of privacy and fairness.
        \item While the authors might fear that complete honesty about limitations might be used by reviewers as grounds for rejection, a worse outcome might be that reviewers discover limitations that aren't acknowledged in the paper. The authors should use their best judgment and recognize that individual actions in favor of transparency play an important role in developing norms that preserve the integrity of the community. Reviewers will be specifically instructed to not penalize honesty concerning limitations.
    \end{itemize}

\item {\bf Theory Assumptions and Proofs}
    \item[] Question: For each theoretical result, does the paper provide the full set of assumptions and a complete (and correct) proof?
    \item[] Answer: \answerYes{} %
    \item[] Justification: For each of our main theorems, we provide a detailed description of the modeling assumptions prior to the theorem. We provide complete proofs of each theorem in Appendices \ref{sec:apptechproofs}, \ref{sec:appproof1}, and \ref{sec:appproof2}.
    \item[] Guidelines:
    \begin{itemize}
        \item The answer NA means that the paper does not include theoretical results. 
        \item All the theorems, formulas, and proofs in the paper should be numbered and cross-referenced.
        \item All assumptions should be clearly stated or referenced in the statement of any theorems.
        \item The proofs can either appear in the main paper or the supplemental material, but if they appear in the supplemental material, the authors are encouraged to provide a short proof sketch to provide intuition. 
        \item Inversely, any informal proof provided in the core of the paper should be complemented by formal proofs provided in appendix or supplemental material.
        \item Theorems and Lemmas that the proof relies upon should be properly referenced. 
    \end{itemize}

    \item {\bf Experimental Result Reproducibility}
    \item[] Question: Does the paper fully disclose all the information needed to reproduce the main experimental results of the paper to the extent that it affects the main claims and/or conclusions of the paper (regardless of whether the code and data are provided or not)?
    \item[] Answer: \answerYes{} %
    \item[] Justification: We provide our code and thoroughly detail our experimental setting.
    \item[] Guidelines:
    \begin{itemize}
        \item The answer NA means that the paper does not include experiments.
        \item If the paper includes experiments, a No answer to this question will not be perceived well by the reviewers: Making the paper reproducible is important, regardless of whether the code and data are provided or not.
        \item If the contribution is a dataset and/or model, the authors should describe the steps taken to make their results reproducible or verifiable. 
        \item Depending on the contribution, reproducibility can be accomplished in various ways. For example, if the contribution is a novel architecture, describing the architecture fully might suffice, or if the contribution is a specific model and empirical evaluation, it may be necessary to either make it possible for others to replicate the model with the same dataset, or provide access to the model. In general. releasing code and data is often one good way to accomplish this, but reproducibility can also be provided via detailed instructions for how to replicate the results, access to a hosted model (e.g., in the case of a large language model), releasing of a model checkpoint, or other means that are appropriate to the research performed.
        \item While NeurIPS does not require releasing code, the conference does require all submissions to provide some reasonable avenue for reproducibility, which may depend on the nature of the contribution. For example
        \begin{enumerate}
            \item If the contribution is primarily a new algorithm, the paper should make it clear how to reproduce that algorithm.
            \item If the contribution is primarily a new model architecture, the paper should describe the architecture clearly and fully.
            \item If the contribution is a new model (e.g., a large language model), then there should either be a way to access this model for reproducing the results or a way to reproduce the model (e.g., with an open-source dataset or instructions for how to construct the dataset).
            \item We recognize that reproducibility may be tricky in some cases, in which case authors are welcome to describe the particular way they provide for reproducibility. In the case of closed-source models, it may be that access to the model is limited in some way (e.g., to registered users), but it should be possible for other researchers to have some path to reproducing or verifying the results.
        \end{enumerate}
    \end{itemize}

\item {\bf Open access to data and code}
    \item[] Question: Does the paper provide open access to the data and code, with sufficient instructions to faithfully reproduce the main experimental results, as described in supplemental material?
    \item[] Answer: \answerYes{}{} %
    \item[] Justification: In the introduction, we provide a link to a GitHub repository containing the code we used to train our recommendation engine and run our experiments.
    \item[] Guidelines:
    \begin{itemize}
        \item The answer NA means that paper does not include experiments requiring code.
        \item Please see the NeurIPS code and data submission guidelines (\url{https://nips.cc/public/guides/CodeSubmissionPolicy}) for more details.
        \item While we encourage the release of code and data, we understand that this might not be possible, so “No” is an acceptable answer. Papers cannot be rejected simply for not including code, unless this is central to the contribution (e.g., for a new open-source benchmark).
        \item The instructions should contain the exact command and environment needed to run to reproduce the results. See the NeurIPS code and data submission guidelines (\url{https://nips.cc/public/guides/CodeSubmissionPolicy}) for more details.
        \item The authors should provide instructions on data access and preparation, including how to access the raw data, preprocessed data, intermediate data, and generated data, etc.
        \item The authors should provide scripts to reproduce all experimental results for the new proposed method and baselines. If only a subset of experiments are reproducible, they should state which ones are omitted from the script and why.
        \item At submission time, to preserve anonymity, the authors should release anonymized versions (if applicable).
        \item Providing as much information as possible in supplemental material (appended to the paper) is recommended, but including URLs to data and code is permitted.
    \end{itemize}

\item {\bf Experimental Setting/Details}
    \item[] Question: Does the paper specify all the training and test details (e.g., data splits, hyperparameters, how they were chosen, type of optimizer, etc.) necessary to understand the results?
    \item[] Answer: \answerYes{} %
    \item[] Justification: We provide a detailed description of the training and evaluation of our recommendation engine in Appendix \ref{sec:apparxiv}. We describe the training procedure in Appendix \ref{sec:apparxiv:train}, and the evaluation in Appendix  \ref{sec:apparxiv:eval}.
    \item[] Guidelines:
    \begin{itemize}
        \item The answer NA means that the paper does not include experiments.
        \item The experimental setting should be presented in the core of the paper to a level of detail that is necessary to appreciate the results and make sense of them.
        \item The full details can be provided either with the code, in appendix, or as supplemental material.
    \end{itemize}

\item {\bf Experiment Statistical Significance}
    \item[] Question: Does the paper report error bars suitably and correctly defined or other appropriate information about the statistical significance of the experiments?
    \item[] Answer: \answerYes{} %
    \item[] Justification: In Appendix \ref{sec:apparxiv} we provide a thorough evaluation of our recommendation engine including; quantities are reported with $P$-values and standard deviations where appropriate. We also show error bars on the experimental plots and discuss how they were obtained.
    \item[] Guidelines:
    \begin{itemize}
        \item The answer NA means that the paper does not include experiments.
        \item The authors should answer "Yes" if the results are accompanied by error bars, confidence intervals, or statistical significance tests, at least for the experiments that support the main claims of the paper.
        \item The factors of variability that the error bars are capturing should be clearly stated (for example, train/test split, initialization, random drawing of some parameter, or overall run with given experimental conditions).
        \item The method for calculating the error bars should be explained (closed form formula, call to a library function, bootstrap, etc.)
        \item The assumptions made should be given (e.g., Normally distributed errors).
        \item It should be clear whether the error bar is the standard deviation or the standard error of the mean.
        \item It is OK to report 1-sigma error bars, but one should state it. The authors should preferably report a 2-sigma error bar than state that they have a 96\% CI, if the hypothesis of Normality of errors is not verified.
        \item For asymmetric distributions, the authors should be careful not to show in tables or figures symmetric error bars that would yield results that are out of range (e.g. negative error rates).
        \item If error bars are reported in tables or plots, The authors should explain in the text how they were calculated and reference the corresponding figures or tables in the text.
    \end{itemize}

\item {\bf Experiments Compute Resources}
    \item[] Question: For each experiment, does the paper provide sufficient information on the computer resources (type of compute workers, memory, time of execution) needed to reproduce the experiments?
    \item[] Answer: \answerYes{} %
    \item[] Justification: We provide this information in Appendix \ref{sec:apparxiv:dataset}.
    \item[] Guidelines:
    \begin{itemize}
        \item The answer NA means that the paper does not include experiments.
        \item The paper should indicate the type of compute workers CPU or GPU, internal cluster, or cloud provider, including relevant memory and storage.
        \item The paper should provide the amount of compute required for each of the individual experimental runs as well as estimate the total compute. 
        \item The paper should disclose whether the full research project required more compute than the experiments reported in the paper (e.g., preliminary or failed experiments that didn't make it into the paper). 
    \end{itemize}
    
\item {\bf Code Of Ethics}
    \item[] Question: Does the research conducted in the paper conform, in every respect, with the NeurIPS Code of Ethics \url{https://neurips.cc/public/EthicsGuidelines}?
    \item[] Answer: \answerYes{} %
    \item[] Justification: 
    \item[] Guidelines:
    \begin{itemize}
        \item The answer NA means that the authors have not reviewed the NeurIPS Code of Ethics.
        \item If the authors answer No, they should explain the special circumstances that require a deviation from the Code of Ethics.
        \item The authors should make sure to preserve anonymity (e.g., if there is a special consideration due to laws or regulations in their jurisdiction).
    \end{itemize}

\item {\bf Broader Impacts}
    \item[] Question: Does the paper discuss both potential positive societal impacts and negative societal impacts of the work performed?
    \item[] Answer: \answerYes{} %
    \item[] Justification: We discuss this in Section \ref{sec:discussion}; since our paper is investigating potential negative social impacts of technology, we anticipate a positive social impact.
    \item[] Guidelines:
    \begin{itemize}
        \item The answer NA means that there is no societal impact of the work performed.
        \item If the authors answer NA or No, they should explain why their work has no societal impact or why the paper does not address societal impact.
        \item Examples of negative societal impacts include potential malicious or unintended uses (e.g., disinformation, generating fake profiles, surveillance), fairness considerations (e.g., deployment of technologies that could make decisions that unfairly impact specific groups), privacy considerations, and security considerations.
        \item The conference expects that many papers will be foundational research and not tied to particular applications, let alone deployments. However, if there is a direct path to any negative applications, the authors should point it out. For example, it is legitimate to point out that an improvement in the quality of generative models could be used to generate deepfakes for disinformation. On the other hand, it is not needed to point out that a generic algorithm for optimizing neural networks could enable people to train models that generate Deepfakes faster.
        \item The authors should consider possible harms that could arise when the technology is being used as intended and functioning correctly, harms that could arise when the technology is being used as intended but gives incorrect results, and harms following from (intentional or unintentional) misuse of the technology.
        \item If there are negative societal impacts, the authors could also discuss possible mitigation strategies (e.g., gated release of models, providing defenses in addition to attacks, mechanisms for monitoring misuse, mechanisms to monitor how a system learns from feedback over time, improving the efficiency and accessibility of ML).
    \end{itemize}
    
\item {\bf Safeguards}
    \item[] Question: Does the paper describe safeguards that have been put in place for responsible release of data or models that have a high risk for misuse (e.g., pretrained language models, image generators, or scraped datasets)?
    \item[] Answer: \answerNA{} %
    \item[] Justification: The paper poses no such risks.
    \item[] Guidelines:
    \begin{itemize}
        \item The answer NA means that the paper poses no such risks.
        \item Released models that have a high risk for misuse or dual-use should be released with necessary safeguards to allow for controlled use of the model, for example by requiring that users adhere to usage guidelines or restrictions to access the model or implementing safety filters. 
        \item Datasets that have been scraped from the Internet could pose safety risks. The authors should describe how they avoided releasing unsafe images.
        \item We recognize that providing effective safeguards is challenging, and many papers do not require this, but we encourage authors to take this into account and make a best faith effort.
    \end{itemize}

\item {\bf Licenses for existing assets}
    \item[] Question: Are the creators or original owners of assets (e.g., code, data, models), used in the paper, properly credited and are the license and terms of use explicitly mentioned and properly respected?
    \item[] Answer: \answerYes{} %
    \item[] Justification: We cite the models and datasets we use and provide license information in Appendix \ref{sec:apparxiv}.
    \item[] Guidelines:
    \begin{itemize}
        \item The answer NA means that the paper does not use existing assets.
        \item The authors should cite the original paper that produced the code package or dataset.
        \item The authors should state which version of the asset is used and, if possible, include a URL.
        \item The name of the license (e.g., CC-BY 4.0) should be included for each asset.
        \item For scraped data from a particular source (e.g., website), the copyright and terms of service of that source should be provided.
        \item If assets are released, the license, copyright information, and terms of use in the package should be provided. For popular datasets, \url{paperswithcode.com/datasets} has curated licenses for some datasets. Their licensing guide can help determine the license of a dataset.
        \item For existing datasets that are re-packaged, both the original license and the license of the derived asset (if it has changed) should be provided.
        \item If this information is not available online, the authors are encouraged to reach out to the asset's creators.
    \end{itemize}

\item {\bf New Assets}
    \item[] Question: Are new assets introduced in the paper well documented and is the documentation provided alongside the assets?
    \item[] Answer: \answerYes{} %
    \item[] Justification: We release a code to train and analyze the recommendation system that we develop, and provide documentation with the code.
    \item[] Guidelines:
    \begin{itemize}
        \item The answer NA means that the paper does not release new assets.
        \item Researchers should communicate the details of the dataset/code/model as part of their submissions via structured templates. This includes details about training, license, limitations, etc. 
        \item The paper should discuss whether and how consent was obtained from people whose asset is used.
        \item At submission time, remember to anonymize your assets (if applicable). You can either create an anonymized URL or include an anonymized zip file.
    \end{itemize}

\item {\bf Crowdsourcing and Research with Human Subjects}
    \item[] Question: For crowdsourcing experiments and research with human subjects, does the paper include the full text of instructions given to participants and screenshots, if applicable, as well as details about compensation (if any)? 
    \item[] Answer: \answerNA{} %
    \item[] Justification: The paper does not involve crowdsourcing nor research with human subjects.
    \item[] Guidelines:
    \begin{itemize}
        \item The answer NA means that the paper does not involve crowdsourcing nor research with human subjects.
        \item Including this information in the supplemental material is fine, but if the main contribution of the paper involves human subjects, then as much detail as possible should be included in the main paper. 
        \item According to the NeurIPS Code of Ethics, workers involved in data collection, curation, or other labor should be paid at least the minimum wage in the country of the data collector. 
    \end{itemize}

\item {\bf Institutional Review Board (IRB) Approvals or Equivalent for Research with Human Subjects}
    \item[] Question: Does the paper describe potential risks incurred by study participants, whether such risks were disclosed to the subjects, and whether Institutional Review Board (IRB) approvals (or an equivalent approval/review based on the requirements of your country or institution) were obtained?
    \item[] Answer: \answerNA{} %
    \item[] Justification: The paper does not involve crowdsourcing nor research with human subjects.
    \item[] Guidelines:
    \begin{itemize}
        \item The answer NA means that the paper does not involve crowdsourcing nor research with human subjects.
        \item Depending on the country in which research is conducted, IRB approval (or equivalent) may be required for any human subjects research. If you obtained IRB approval, you should clearly state this in the paper. 
        \item We recognize that the procedures for this may vary significantly between institutions and locations, and we expect authors to adhere to the NeurIPS Code of Ethics and the guidelines for their institution. 
        \item For initial submissions, do not include any information that would break anonymity (if applicable), such as the institution conducting the review.
    \end{itemize}

\end{enumerate}

\end{document}